\newcolumntype{Y}{>{\raggedright\arraybackslash}X}
\numberwithin{equation}{section}
\setlist{nosep,leftmargin=1.25em}
\newcommand{\subaccent}{\vspace{0.01em}\noindent\rule{7ex}{0.2pt}}
\titlespacing*{\subsection}{0pt}{2.2ex plus 0.6ex}{1.5ex}
\titlespacing*{\subsubsection}{0pt}{1.6ex plus 0.4ex}{0.6ex}
\crefname{equation}{Eq.}{Eqs.}
\crefname{lemma}{Lemma}{Lemmas}
\crefname{theorem}{Theorem}{Theorems}
\crefname{proposition}{Proposition}{Propositions}
\crefname{corollary}{Corollary}{Corollaries}
\crefname{section}{Section}{Sections}
\crefname{appendix}{Appendix}{Appendices}
\newtheoremstyle{thmcompact}
{0.5ex}{0.5ex}
{\itshape}
{0pt}
{\bfseries}
{.}
{0.6em}
{\thmname{#1}\ \thmnumber{#2}\ \thmnote{(\textit{#3})}}
\theoremstyle{thmcompact}
\newtheorem{theorem}{Theorem}[section]
\newtheorem{proposition}[theorem]{Proposition}
\newtheorem{lemma}[theorem]{Lemma}
\newtheorem{corollary}[theorem]{Corollary}
\theoremstyle{remark}
\newtheorem*{remarkplain}{Remark}
\newmdenv[
linewidth=0.6pt,
topline=false,
bottomline=false,
rightline=false,
skipabove=\baselineskip,
skipbelow=\baselineskip,
backgroundcolor=gray!3,
leftmargin=0pt,
innerleftmargin=8pt,
innerrightmargin=6pt,
frametitleaboveskip=0.4em,
frametitlebelowskip=0.2em,
frametitlefont=\bfseries\small,
]{remarkbox}
\newenvironment{remark}[1][]%
{\begin{remarkbox}\begin{remarkplain}[#1]\normalfont}
		{\end{remarkplain}\end{remarkbox}}
\renewcommand{\l@section}[2]{\@dottedtocline{1}{0em}{2.3em}{\raggedright #1}{#2}}
\renewcommand{\l@subsection}[2]{\@dottedtocline{2}{1.8em}{3.2em}{\raggedright #1}{#2}}
\newcommand{\thmneed}{\Needspace{6\baselineskip}}
\title{\textbf{The Converse Madelung Question} \\
\smallskip
\large{Schrödinger Equation from Minimal Axioms}}
\author{J.\,R.~Dunkley}
\date{31st October 2025}
\begin{document}
\maketitle
	
\begin{abstract}
	We define and address the converse Madelung question: not whether Fisher information can reproduce quantum mechanics, but whether it is necessary. We adopt minimal, physically motivated axioms on hydrodynamic variables: locality, probability conservation, Euclidean invariance with global $\mathrm{U}(1)$ phase symmetry, reversibility, and convex regularity. Within the ensuing explicitly restricted class of first-order local Hamiltonian field theories, the Poisson bracket is uniquely fixed to the canonical bracket on $(\rho,S)$ under the Dubrovin-Novikov hypotheses for local first-order hydrodynamic brackets with probability conservation. Under a pointwise, gauge-covariant complexifier $\psi=\sqrt{\rho}\,e^{iS/\hbar}$, among convex, rotationally invariant, first-derivative local functionals of $\rho$, the only one whose Euler-Lagrange contribution yields a reversible completion that 
	becomes exactly projectively linear is the Fisher functional. With $\hbar^{2}=2m\alpha$ the dynamics reduce to the linear Schrödinger equation. In many-body systems, exact projective linearity with a single local complex structure across all sectors forces $\alpha_i=\hbar^{2}/(2m_i)$ componentwise, thereby fixing a single Planck constant. Galilean covariance appears via the Bargmann central extension in this framework, with the usual superselection implications. Comparison with the Doebner-Goldin family identifies the reversible corner at zero diffusion as the linear Schrödinger case in our variables. We supply operational falsifiers via residual diagnostics for the continuity and Hamilton-Jacobi equations and report numerical minima at the Fisher scale that are invariant under Galilean boosts. These results are consistent with viewing quantum mechanics, in this setting, as a reversible fixed point of Fisher-regularised information hydrodynamics. A code archive accompanies the work for direct numerical verification and reproducibility, including a superposition stress-test showing that, in our tested families and to numerical precision under grid refinement, the Fisher regulariser preserves exact projective linearity within our axioms.
\end{abstract}

\newpage
\tableofcontents
\newpage

\section{Introduction}
Quantum mechanics is typically postulated through the linear Schrödinger equation
\begin{equation}
	i\hbar\,\partial_t\psi=\left[-\frac{\hbar^2}{2m}\nabla^2+V(x,t)\right]\psi,
	\label{eq:LSE}
\end{equation}
Madelung showed in 1927 that writing $\psi=\sqrt{\rho}\,e^{iS/\hbar}$ decomposes Eq.~\eqref{eq:LSE} into a continuity equation and a Hamilton-Jacobi equation regularised by a quantum potential proportional to Fisher information~\cite{madelung1927}. Bohm and Hall-Reginatto later showed that adding a Fisher-information term to a classical ensemble reproduces quantum dynamics~\cite{bohm1952,hall2002}. These results, however, establish only sufficiency: Fisher regularisation can yield the Schrödinger form, but this does not show that it is necessary.

We ask whether Fisher curvature is not only sufficient but necessary within a strictly local, first order, reversible Hamiltonian class on $(\rho,S)$ endowed with Euclidean invariance, global $\mathrm{U}(1)$
 phase symmetry, and convex regularity in $\rho$ alone. Formally: classify all admissible first derivative convex regularisers $f(\rho,\nabla\rho)$ and compatible Poisson brackets on $(\rho,S)$ for which there exists a pointwise, derivative-free, gauge-covariant complexifier that renders the time evolution exactly projectively linear on rays. In less formal terms, we ask whether any other local first-derivative curvature on $\rho$ can support a reversible hydrodynamic completion that still looks exactly linear at the $\psi$ level after a single local complexification.
  
 The claim established here is a uniqueness within this admissible class, not a statement about higher-derivative, weakly nonlocal, mixed $(\rho,S)$, or open-system extensions.
 
\subsection{Scope} 
Our scope is strictly local and first order in spatial derivatives on $(\rho,S)$; nonlocal terms, higher-derivative regularisers, and open-system couplings lie outside the present analysis. Reversibility together with parity excludes any explicit $S$-dependence in the convex regulariser and rules out dissipative couplings between $\rho$ and $S$. As an operational falsifier, after the complexifier $\psi=\sqrt{\rho}\,e^{iS/\hbar}$ we require exact preservation of $\psi \mapsto a\,\psi_{1}+b\,\psi_{2}$ under time stepping and we quantify any deviation by a residual norm; this test complements, but does not replace, the analytic uniqueness proof.

Within this setting we show that the Dubrovin-Novikov bracket classification reduces the hydrodynamic bracket to the canonical form on $(\rho,S)$, and that the only admissible first-derivative convex regulariser of $\rho$ that yields a reversible completion compatible with exact projective linearity is the Fisher functional.

For contrast with the Doebner-Goldin diffusion families~\cite{doebner1992,doebner1996}, note that the diffusive sector lies outside our reversible Hamiltonian cone; in the zero-diffusion corner one recovers the linear Schrödinger equation in our variables, while nonlinear gauge-equivalent representatives fail the exact projective linearity stress-test except in the Fisher case.

This gap between sufficiency and necessity motivates the central question of this work. Does an alternative, non-Fisher regulariser exist that also satisfies reversibility, locality, and exact projective linearity within the stated class? We refer to this investigation into necessity as the converse Madelung question. Within the axioms stated below the answer will be negative: once locality, reversibility and projective linearity are imposed simultaneously, the Fisher functional is the only surviving curvature in this class.

To address the question, we classify the space of admissible theories. We impose a minimal, physically motivated set of axioms for reversible, first-order field theories on $(\rho,S)$, treating these as hydrodynamic variables on configuration space (without invoking sub-quantum particle trajectories or hidden variables~\cite{holland1993}).

The work is fully non-perturbative and independent of WKB or semiclassical limits. It is consonant with recent analyses linking the Bohm potential and Fisher information that yield strengthened uncertainty relations beyond Robertson-Schrödinger under stated conditions, and it offers a falsifiable bridge between information curvature and quantum kinematics~\cite{bloch2022}.
	
\section{Minimal Axioms}
\label{sec:axioms}
This section defines the local reversible Hamiltonian class that we work in throughout. We collect our axioms and treat them as a fixed setting. All later uniqueness and necessity statements are made inside this class, rather than for arbitrary models on wave functions or densities.

Our axioms are chosen to encode physical invariance and mathematical closure. We argue that each plays a load bearing role for our conclusions.

Sections 3 - 8 then form a single chain: the axioms first fix the bracket, then constrain the Hamiltonian and curvature, and finally isolate the unique local complexification that yields linear Schrödinger dynamics.

We work within our axioms throughout, on flat Euclidean domains with first-order locality and reversible dynamics.
First, a bracket classification on $(\rho,S)$ fixes the kinematics to the canonical form.
Second, convex rotationally invariant curvature reduces to a single Fisher functional whose Euler-Lagrange variation yields the Laplacian quotient \citep{fisher1925,hall2002,reginatto1998}.
Third, a local, gauge-covariant complexifier linearises the dynamics and fixes the scale \citep{madelung1927,takabayasi1952}.
Two portable diagnostics concentrate the claims: an $\alpha$-scan with a sharp minimum at $\alpha_\star=\hbar^2/(2m)$, and a superposition residual that drops to numerical floor only in the Fisher case.

Locality means no derivatives beyond first order in $\rho$ or $S$. Reversibility means entropy production is zero (contrast the diffusive current-algebra families in \citep{doebner1992,doebner1996}). These axioms place us squarely within the class of local, reversible \emph{Hamiltonian} field theories on $(\rho,S)$ and thereby exclude formalisms whose dynamics arise from time-symmetric diffusion kinematics rather than a Hamiltonian bracket \citep{nelson1966}, as well as entropic-updating frameworks that derive motion from inference principles on information manifolds without a prior Hamiltonian structure \citep{caticha2012}. Our results should thus be read as a uniqueness and classification statement within the Hamiltonian class, not a claim about all possible routes to quantum dynamics.

Euclidean invariance means no preferred direction appears in generators.
Global $\mathrm{U}(1)$
means $S\mapsto S+\text{const}$ leaves observables unchanged.
Convex regularity means curvature controls gradients and is positive.

We consider fields $\rho(x,t)\!\ge0$ (density) and $S(x,t)$ (velocity potential) on $\mathbb{R}^d$ or a periodic domain $\Omega$.
Dynamics are generated by a Poisson bracket $\{F,G\}$ acting on functionals $F[\rho,S]$, with $\dot F=\{F,\mathcal H\}$.

We restrict attention to first-order local hydrodynamic Poisson structures of Dubrovin-Novikov (DN) type \citep{Dubrovin1983, novikov1984} acting on scalar doublets $u=(\rho,S)$. This ensures the reversible Poisson operator is represented within a flat DN class, ruling out derivative-dependent coefficients and zeroth-order cores under the stated symmetries.
The Poisson-operator coefficients depend only on the local fields and not on their derivatives; Euclidean invariance together with the DN hypotheses restricts them to a flat, constant-coefficient representative (classified in Appendix~\ref{app:jacobi}), ensuring both reversibility and well-posedness.

Throughout we take $\Omega=\mathbb{T}^d$ (periodic) or $\mathbb{R}^d$ with standard decay so that $F[\rho]<\infty$ and $Q_{\alpha}\in H^{-1}(\Omega)$ even in the presence of nodal sets.

All identities are understood on the positivity set $\{\rho>0\}$ and in the weak sense. We assume $\rho\ge0$, $\rho\in L^{1}(\Omega)$ with $\int\rho\,dx=1$, and $\sqrt{\rho}\in H^{1}(\Omega)$ so that $F[\rho]<\infty$ and $Q_{\kappa}\in H^{-1}(\Omega)$ even in the presence of nodal sets. Boundary terms vanish for the admissible classes detailed in Appendix~\ref{app:boundary}: periodic $\Omega=\mathbb T^{d}$, or $\Omega\subset\mathbb R^{d}$ with either Dirichlet data $S|_{\partial\Omega}=\mathrm{const}$, or Neumann data with vanishing normal derivatives for the fluxes.
	
\paragraph{Canonical bracket on $(\rho,S)$.}
Anticipating the Dubrovin-Novikov classification in Proposition~\ref{prop:DN-canonical}, we work within our axioms in the canonical local representative

\[
\{F,G\}=\int_{\Omega}\!\left(\frac{\delta F}{\delta \rho}\,\frac{\delta G}{\delta S}
-\frac{\delta F}{\delta S}\,\frac{\delta G}{\delta \rho}\right)\,dx,
\qquad\text{so that}\quad
\{\rho(x),S(y)\}=\delta(x-y),
\]
	
Then $\sqrt{\rho}\in H^1(\Omega)$ and the Fisher potential
\[
Q_\alpha=-\alpha\,\frac{\Delta\sqrt{\rho}}{\sqrt{\rho}}
\]
belongs to $H^{-1}$ even at nodal sets \citep{hall2002, reginatto1998}. All Hamilton-Jacobi equations and residual diagnostics in this paper use this sign convention for $Q_\alpha$. Boundary terms vanish under the conditions listed in Appendix~\ref{app:boundary}; admissible classes are periodic, Dirichlet with $S$ constant on $\partial\Omega$, or Neumann with vanishing normal derivatives. As shown in Proposition~\ref{prop:DN-canonical} below, under Axioms~I-V any admissible Dubrovin-Novikov bracket on $(\rho,S)$ is Poisson-isomorphic to this canonical form, so no generality is lost by adopting it here.

A summary of explicit counterexamples illustrating the independence of	each axiom is given in Appendix~\ref{app:counterexamples}.

Multivalued $S$ and quantised circulation arise from global topology but are handled at the $\psi$ level without altering the local canonical structure or the axioms.

Throughout the axioms global phase symmetry $S\mapsto S+\text{const}$ (equivalently $\psi\mapsto e^{i\theta}\psi$) is assumed. Electromagnetic gauge is treated only in the minimal-coupling subsection and not assumed elsewhere.

\smallskip

\textbf{Dubrovin-Novikov locality.}
As already stated above, by DN type we mean first-order local hydrodynamic operators acting on the scalar doublet $u=(\rho,S)$, whose coefficients depend on the fields but not their derivatives. Translation and rotation invariance, together with the presence of the conserved phase generator $\mathcal C=\!\int\rho\,dx$ and the Jacobi identity in the DN class, restrict the admissible brackets to a flat, constant-coefficient representative that is Poisson-isomorphic to the canonical bracket written above (see Appendix~\ref{app:jacobi} for the classification). We therefore work, without loss within our axioms, with the canonical form.

\smallskip

All continuity statements refer to probability: $j=\rho\,\nabla S/m$ is the probability current and $\int\rho\,dx=1$ is preserved. We adopt probability language throughout to describe the field $\rho$ and its flow. The parameter $m$, however, represents the inertial mass of the system. It functions as the kinetic coefficient in the Hamiltonian and is ultimately identified as the central charge of the Bargmann (Galilean) algebra.	
	
	\subsection*{Axiom I: Locality}
	We restrict to strictly first-order \emph{local} (Dubrovin-Novikov) brackets and Hamiltonians; weakly nonlocal, fractional, or higher-order terms are excluded by assumption.
	
	The Poisson bracket is of first order and local in the Dubrovin-Novikov sense:
	\[
	\{F,G\}
	=\!\int\! \frac{\delta F}{\delta u_i}\,A^{ij}(u)\,\partial_x\!\left(\frac{\delta G}{\delta u_j}\right)dx,
	\]
	where the operator coefficients $A^{ij}(u)$ depend only on $u$.
	Higher-order or weakly nonlocal forms are excluded, as they introduce irreversible or ill-posed evolution once Axioms~II-VI are enforced.
	
	The restriction to first-order Dubrovin-Novikov brackets ensures that the flow of $\rho$ is local and divergence-free. Higher-order Hamiltonian operators, such as the third-order form in the KdV hierarchy, either introduce additional dimensional parameters or violate probability conservation by producing higher-derivative fluxes that are not expressible as $\nabla\!\cdot j$. The first-order case is therefore the minimal setting in which locality and the Jacobi identity can coexist for a probability field.
	
	Appendix~\ref{app:jacobi} shows that, once Euclidean covariance and reversibility are imposed, attempts to include derivative dependence beyond first order in a local scalar operator on
	$(\rho,S)$ either violate the Jacobi identity or introduce extra Casimirs. Within these constraints,
	the Dubrovin-Novikov first-order class appears to be the appropriate maximal setting for our constraints. We make no claim to classify Dubrovin-Novikov brackets in full generality, only those compatible with this scalar hydrodynamic, locality and symmetry ansatz.
	
	\subsection*{Axiom II: Phase Generator and Probability Normalisation}
	There exists a conserved charge $\mathcal C=\!\int\rho\,dx$ that generates constant shifts of $S$:
	\[
	\{S(x),\mathcal C\}=-1,\qquad \{\rho(x),\mathcal C\}=0.
	\]
	This encodes global phase invariance at the hydrodynamic level and fixes how $\rho$ and $S$ pair within the bracket. It implies probability conservation dynamically for any admissible Hamiltonian satisfying the remaining axioms.
	
	We assert only conservation of $\int\rho\,dx$ and that $\mathcal C$ generates global $S$-shifts; no specific flux form is assumed at this stage.
	
\subsection*{Axiom III: Global $\mathrm{U}(1)$
 Phase Symmetry}
The dynamics are invariant under $S\mapsto S+\text{const}$, a direct consequence of Axiom~II and the canonical bracket, implying in particular that only $\nabla S$ can enter $\mathcal H$.
	
\begin{lemma}[Kinetic form]\label{lem:kinetic}
	Let the kinetic density be the most general local, rotationally invariant form compatible with Axioms~I-IV,
	$h_{\mathrm{kin}}=a(\rho)\,|\nabla S|^{2}$, and let the bracket be canonical. Then, with $H=\int h_{\mathrm{kin}}\,dx$,
	\[
	\partial_t\rho=\{\rho,H\}=\frac{\delta H}{\delta S}
	=-\,\nabla\!\cdot\!\Big(2\,a(\rho)\,\nabla S\Big).
	\]
	Within our class, Axiom~II fixes the probability current to be $j=\rho\,\nabla S/m$, so probability conservation requires
	$\partial_t\rho=-\,\nabla\!\cdot(\rho\,\nabla S/m)$ for arbitrary states; hence $2\,a(\rho)=\rho/m$ and
	\[
	\boxed{\,h_{\mathrm{kin}}=\frac{\rho\,|\nabla S|^{2}}{2m}\, .}
	\]
\end{lemma}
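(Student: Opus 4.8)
The plan is to obtain the evolution law for $\rho$ directly from the canonical bracket, evaluate the single variational derivative that survives, and then fix $a(\rho)$ by matching against the continuity equation that probability conservation imposes. First I would compute $\{\rho(x),H\}$. Taking $F=\rho(x)$ in the canonical bracket, the functional derivatives are $\delta\rho(x)/\delta\rho(y)=\delta(x-y)$ and $\delta\rho(x)/\delta S(y)=0$, so only the $\frac{\delta F}{\delta\rho}\frac{\delta G}{\delta S}$ term contributes and $\partial_t\rho=\{\rho,H\}=\delta H/\delta S$. Next, since $a(\rho)$ carries no $S$-dependence, varying $S\mapsto S+\varepsilon\eta$ in $H=\int a(\rho)\,|\nabla S|^{2}\,dx$ gives $\delta H=\int 2a(\rho)\,\nabla S\cdot\nabla\eta\,dx$; one integration by parts, with the boundary terms killed by the admissible boundary classes of \cref{sec:axioms} (see \ref{app:boundary}), yields $\delta H/\delta S=-\nabla\!\cdot\!\big(2a(\rho)\nabla S\big)$, reproducing the displayed evolution law.

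I would then invoke the continuity requirement. With $S$ read as a velocity potential, the hydrodynamic velocity is $v=\nabla S/m$ and the advective probability current is $j=\rho v=\rho\,\nabla S/m$, so probability conservation reads $\partial_t\rho=-\nabla\!\cdot j=-\nabla\!\cdot(\rho\,\nabla S/m)$. Equating the two expressions for $\partial_t\rho$ gives $\nabla\!\cdot\!\big[(2a(\rho)-\rho/m)\nabla S\big]=0$. To upgrade this to a pointwise identity on $a$, I would use that the relation must hold for \emph{every} admissible state, not a single profile: writing $g(\rho):=2a(\rho)-\rho/m$ and testing on a family with $\rho$ locally constant equal to $\rho_0$ and $\Delta S\neq 0$ there, the relation collapses to $g(\rho_0)\,\Delta S=0$, whence $g(\rho_0)=0$; as $\rho_0$ is arbitrary, $g\equiv 0$, so $2a(\rho)=\rho/m$ and $h_{\mathrm{kin}}=\rho\,|\nabla S|^{2}/(2m)$ follows.

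The main obstacle is not the variational calculus, which is routine, but the logical status of the current identification: conservation of $\int\rho\,dx$ alone fixes the current only up to a divergence-free piece, so the step $j=\rho\,\nabla S/m$ must be read as the physical input that $S$ is a velocity potential with $v=\nabla S/m$ and $m$ the inertial mass — the same normalisation later demanded by the complexifier $\psi=\sqrt{\rho}\,e^{iS/\hbar}$. Once that identification is granted, the arbitrariness-of-state argument does the remaining work, and I would state explicitly that the matching is quantified over all admissible $(\rho,S)$, so that it is the flux \emph{coefficients}, not merely the fluxes along one trajectory, that are forced to coincide.
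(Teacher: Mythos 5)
Your proposal is correct and follows essentially the same route as the paper: the paper's argument (embedded in the lemma statement itself) likewise computes $\partial_t\rho=\{\rho,H\}=\delta H/\delta S=-\nabla\!\cdot\!\big(2a(\rho)\nabla S\big)$ and matches this against the current $j=\rho\,\nabla S/m$ fixed by Axiom~II, quantified over arbitrary states, to conclude $2a(\rho)=\rho/m$. Your explicit locally-constant-$\rho$ test that upgrades the divergence identity to the pointwise statement, and your caveat that $j=\rho\,\nabla S/m$ is a physical identification (stated as a convention in the paper just before the axioms) rather than a consequence of conservation of $\int\rho\,dx$ alone, merely make explicit what the paper leaves implicit.
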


With the canonical bracket and rotational symmetry this is the admissible kinetic density.

Any additional local term linear in $\nabla S$ violates global $\mathrm{U}(1)
$ (Axiom~III)
or reduces to a boundary divergence under the classes in Appendix~\ref{app:boundary}.
	
	\subsection*{Axiom IV: Euclidean Covariance (Parity Clarification)}
	The bracket and Hamiltonian density are invariant under translations and rotations in $\mathbb R^d$, including spatial parity.
	Parity-odd scalars built from $\rho$ and $S$ vanish or reduce to total divergences under these conditions (see Appendix~\ref{app:boundary}).
	
	\textit{Example.} In $d=2$, $\varepsilon_{ij}\partial_{i}A_{j}=\nabla\!\cdot(\varepsilon A)$ is a divergence; in $d=3$, $\varepsilon_{ijk}\partial_{i}A_{j}B_{k}=\nabla\!\cdot(A\times B)$ for scalar-built $A,B$, so no parity-odd scalar survives.
	
	\subsection*{Axiom V: Reversibility (Time Symmetry)}
	Reversibility means Hamiltonian flow with an antisymmetric bracket and a real Hamiltonian (see Appendix~\ref{app:variational}). The evolution is generated by an antisymmetric bilinear bracket satisfying Jacobi:

	\[
	\{F,\{G,H\}\}+\mathrm{cyclic}=0.
	\]
	The equations derived from the canonical bracket and Hamiltonian are invariant under time reversal
	$(t,S)\!\mapsto\!(-t,-S)$, so reversibility here coincides with physical time symmetry. Unifying two coincident properties of the canonical theory:
	(i) algebraic antisymmetry of the bracket with a real Hamiltonian, and
	(ii) physical invariance under $(t,S)\!\to\!(-t,-S)$.
	For such Hamiltonian flows these coincide, since the generator is anti-Hermitian and norm-preserving.
	Diffusive or dissipative additions would break both forms simultaneously, introducing entropy production and thereby leaving the reversible class.
	
	Algebraic reversibility refers to Hamiltonian flow generated by an antisymmetric bracket with a real $\mathcal H$; physical time-reversal invariance here is the symmetry $(t,S)\mapsto(-t,-S)$ of the equations. For the canonical bracket and $\mathcal H[\rho,|\nabla S|^{2},\rho\text{-only terms}]$ these coincide. Any diffusive addition, e.g. $\varepsilon\,\Delta\rho$ in the continuity equation with $\varepsilon>0$, breaks both and yields non-negative Shannon entropy production for
	$S_{\mathrm{Sh}}[\rho]:=-\int_{\Omega}\rho\ln\rho\,dx$, namely
	$dS_{\mathrm{Sh}}/dt=\varepsilon\!\int_{\Omega} |\nabla\rho|^{2}/\rho\,dx\ge0$.
		
\subsection*{Axiom VI: Minimal Convex Regularity}\label{sec:fisher}
We restrict the regulariser to local, first-derivative, rotationally invariant, convex functionals of $\rho$ alone,
\[
F[\rho]=\int_{\Omega} f(\rho)\,|\nabla\rho|^{2}\,dx,\qquad f(\rho)>0,
\]
which contribute to the Hamilton-Jacobi equation via the Euler-Lagrange potential
\[
Q_{f}(\rho) \equiv \frac{1}{2}\,\frac{\delta F}{\delta \rho}.
\]

Within this class we will show, via Proposition~\ref{prop:fisher-uniqueness} (proved in Appendix~\ref{app:fisher-proof}), that the unique choice compatible with exact projective linearity after a local complexifier is $f(\rho)=\kappa/\rho$, i.e.
\[
F[\rho]=4\kappa\!\int_{\Omega}|\nabla\sqrt{\rho}|^{2}dx,\qquad
Q_{\kappa}(\rho)=-\,2\,\kappa\,\frac{\Delta\sqrt{\rho}}{\sqrt{\rho}}.
\]

Any other $f(\rho)$ produces a residual nonlinear term in the Hamilton-Jacobi equation that cannot be removed by any local, gauge-preserving change of variables within our admissible class. Within the minimal convex regularity class adopted here, this is a genuine no-go statement: there is no alternative local curvature that preserves all axioms and still yields an exactly linear projective evolution.

	Higher-derivative or mixed $\rho$-$S$ regularisers, while mathematically possible, break at least one of the prior axioms: $\int(\Delta\rho)^2dx$ introduces fourth-order dynamics incompatible with local probability conservation, and terms like $\int f(\rho)|\nabla S|^4dx$ violate Galilean invariance and separability. The first-derivative positive form $F[\rho]=\int f(\rho)|\nabla\rho|^2dx$ is therefore the minimal class consistent with locality, reversibility, and symmetry.
	
	\textit{Coarse-graining.} Any local, probability-preserving coarse graining that does not raise derivative order maps admissible flows to admissible flows; the Fisher functional is a fixed point of this class. Our Fisher term arises as the local, positive curvature compatible with linearisation by a single complexifier, not as a prior or choice.	
	
	Flat Euclidean background, spinless kinematics, first-order locality for both bracket and Hamiltonian, Hamiltonian reversibility, and global $\mathrm{U}(1)
$ on $S$ are assumed throughout.
	
	\subsubsection*{Worked example.}
	For $\rho(x)=\exp(-x^2/\sigma^2)$ and $F[\rho]=\int \alpha\,\frac{|\nabla\rho|^2}{\rho}\,dx$,
	write $\rho=R^2$ so $F=4\alpha\int |\nabla R|^2 dx$.
	Then $\delta F/\delta \rho = -\,4\alpha\,\frac{\Delta \sqrt{\rho}}{\sqrt{\rho}}$ in $H^{-1}$,
	which produces the Laplacian quotient used in Section~\ref{sec:fisher}.
	An identical calculation holds for a compactly supported bump, with nodes excluded by a standard mask window as detailed in Appendix~\ref{app:code-repo}.
	
\begin{proposition}[Canonical bracket from DN locality]\label{prop:DN-canonical}
	Under Axioms~I-V and the existence of the conserved phase generator
	\[
	\mathcal C = \int \rho\,dx,
	\qquad
	\{S,\mathcal C\} = -1,
	\]
	any first-order DN-type Poisson operator on $(\rho,S)$ is Poisson-isomorphic
	to the canonical local bracket
	\[
	\{\rho(x),S(y)\} = \delta(x-y).
	\]
\end{proposition}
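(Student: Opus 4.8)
The plan is to use the Dubrovin--Novikov structure theory for first-order local Poisson brackets and specialise it to the scalar doublet $u=(\rho,S)$ under the stated symmetries. Recall that a DN bracket is written
\[
\{F,G\}=\int \frac{\delta F}{\delta u_i}\,\Big(g^{ij}(u)\,\partial_x + b^{ij}_k(u)\,u^k_x\Big)\frac{\delta G}{\delta u_j}\,dx,
\]
and the Dubrovin--Novikov theorem states that when the leading coefficient $g^{ij}(u)$ is nondegenerate it defines a flat (pseudo-)Riemannian metric, with $b^{ij}_k$ determined by its Christoffel symbols; in flat coordinates the whole operator becomes constant-coefficient with $b^{ij}_k=0$. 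So the first step is to invoke this theorem (citing the classification in Appendix~\ref{app:jacobi}) to reduce the admissible operator, in suitable field coordinates, to $\{F,G\}=\int \frac{\delta F}{\delta u_i}\,\gamma^{ij}\,\partial_x\big(\tfrac{\delta G}{\delta u_j}\big)\,dx$ with $\gamma^{ij}$ a constant antisymmetric-under-integration matrix. The nondegeneracy assumption is what excludes the zeroth-order cores and derivative-dependent coefficients flagged in the axioms.

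Next I would fix the constant matrix using Axioms II--IV. Euclidean invariance (Axiom IV), in particular parity, forbids a nonzero diagonal self-coupling $\gamma^{SS}\,\partial_x$ acting within the $S$ sector since that would generate a parity-odd flow on $S$ alone; the same parity argument, together with the requirement that $\mathcal C=\int\rho\,dx$ be a Casimir-generating charge that shifts $S$, constrains the $\rho\rho$ entry. The decisive input is Axiom II: computing $\{S(x),\mathcal C\}=\int \gamma^{S\rho}\,\partial_x\delta(x-y)\,\tfrac{\delta}{\delta\rho}(\int\rho)\,dy$ and demanding this equal $-1$ (a constant, hence annihilated by $\partial_x$) forces the off-diagonal coefficient to be precisely the canonical one and simultaneously forces $\gamma^{\rho\rho}=0$ through $\{\rho,\mathcal C\}=0$. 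Antisymmetry of the bracket (Axiom V) then pins the remaining entry, yielding $\{\rho(x),S(y)\}=\delta(x-y)$ up to an overall normalisation absorbable by rescaling $S$.

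The two remaining obligations are to verify the Jacobi identity for the reduced bracket and to exhibit the Poisson isomorphism rather than mere equality. Jacobi is automatic once the operator is constant-coefficient and off-diagonal of Darboux type, which I would note and defer to the flatness computation in Appendix~\ref{app:jacobi}; this is the cleanest part. The Poisson isomorphism is where the real work sits: I must produce an explicit change of field variables $(\rho,S)\mapsto(\tilde\rho,\tilde S)$ realising the DN flattening, check it is invertible on the positivity set $\{\rho>0\}$, and confirm it respects the boundary classes of Appendix~\ref{app:boundary} so that no surface terms obstruct the bracket transformation.

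\textbf{Main obstacle.} I expect the genuine difficulty to be controlling the degenerate case. The clean DN flattening theorem presupposes $\det g^{ij}\neq 0$; a probability field naturally admits brackets whose leading symbol degenerates (for instance a pure $\rho$--$S$ pairing already has a singular $2\times2$ leading metric in the naive sense), so I would need to argue carefully that the conserved-charge condition $\{S,\mathcal C\}=-1$ selects exactly the nondegenerate Darboux representative and excludes the degenerate or zeroth-order cores, rather than simply assuming nondegeneracy. Establishing that this one algebraic normalisation condition, combined with parity and the Casimir property of $\mathcal C$, removes all residual freedom in the constant matrix is the crux of the argument.
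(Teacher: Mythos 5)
Your approach founders on a structural point that your own ``main obstacle'' paragraph brushes against but then resolves in the wrong direction. The canonical target bracket $\{\rho(x),S(y)\}=\delta(x-y)$ is \emph{ultralocal}: its Poisson operator is a constant antisymmetric matrix times a plain $\delta$, with leading first-order coefficient $g^{ij}\equiv 0$. It is therefore exactly one of the ``degenerate or zeroth-order cores'' that you propose to exclude. The Dubrovin--Novikov flattening theorem you invoke applies only when $\det g^{ij}\neq 0$, and after flattening it produces operators of the form $\gamma^{ij}\partial_x\delta(x-y)$ with constant (and, by antisymmetry of the bracket, necessarily \emph{symmetric}) $\gamma^{ij}$ --- a class disjoint from the answer. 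Worse, your ``decisive input'' fails computationally: for any constant-coefficient first-order operator, smearing against $\mathcal C$ gives
$\{S(x_0),\mathcal C\}=\int \delta(x-x_0)\,\gamma^{S\rho}\,\partial_x\big(\delta\mathcal C/\delta\rho\big)\,dx$,
and since $\delta\mathcal C/\delta\rho\equiv 1$ this is $\gamma^{S\rho}\,\partial_x(1)=0$, never $-1$, for every choice of $\gamma^{S\rho}$. So Axiom II cannot ``fix'' the off-diagonal entry; it annihilates the entire nondegenerate first-order sector. The phase-generator condition selects the zeroth-order core, not a Darboux representative of hydrodynamic type.

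The paper's proof runs in the opposite direction. It writes the general order-at-most-one ansatz containing \emph{both} $\delta$ and $\partial_{x_i}\delta$ terms, e.g.\ $\{\rho(x),S(y)\}=a_0(\rho)\,\delta+a_1^{\,i}(\rho)\,\partial_{x_i}\delta$, and then: (i) Euclidean covariance (isotropy) kills every vector coefficient $a_1^{\,i}$, $c_{\rho\rho}^{\,i}$, $c_{SS}^{\,i}$ multiplying $\partial_{x_i}\delta$, since a nonzero vector built from scalars would select a preferred direction; (ii) global $\mathrm{U}(1)$ kills the $\{S,S\}$ component; (iii) the surviving structure $\{\rho(x),S(y)\}=a_0(\rho)\,\delta(x-y)$ satisfies the Jacobi identity \emph{automatically} for any smooth $a_0$ (the Schouten bracket of a two-component, purely off-diagonal ultralocal tensor vanishes identically, Lemma~\ref{lem:jacobi-a0}), so Jacobi imposes nothing; it is Axiom II alone that forces $a_0$ to be a nonzero constant, absorbed into a rescaling of $S$. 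Your plan assigns the work to the wrong actors: flatness and Jacobi do essentially nothing here, while isotropy and the smeared phase-generator identity do everything, and the representative that survives is precisely the one your nondegeneracy hypothesis would have discarded at the outset.
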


\begin{proposition}[Fisher uniqueness within the admissible class]\label{prop:fisher-uniqueness}
	Let $F[\rho]=\int f(\rho)\,|\nabla\rho|^{2}\,dx$ be as in Axiom~VI. If there
	exists a local gauge-covariant complexifier
	\[
	\psi=\sqrt{\rho}\,e^{iS/\hbar}
	\]
	that maps the Hamiltonian flow generated by
	\[
	\mathcal H=\int\left[
	\frac{\rho|\nabla S|^{2}}{2m}+V\rho+\rho\,Q_{f}(\rho)
	\right]dx
	\]
	to a linear unitary evolution on $L^{2}$ for all admissible states, then
	$f(\rho)$ must be of Fisher form
	\[
	f(\rho) = \frac{C}{\rho}, \qquad C>0,
	\]
	and the resulting evolution is the linear Schrödinger equation. 
For the Fisher choice $f(\rho)=C/\rho$ in Axiom~VI this regulariser can equivalently be written as
\(
F[\rho]=\alpha\int_{\Omega}|\nabla\sqrt{\rho}|^{2}dx
\),
so that the contribution $\rho\,Q_{f}(\rho)$ coincides, up to boundary terms, and in the variational sense, with the curvature term in the Hamiltonian density \eqref{eq:H} used in Section~\ref{sec:fisher}.

\end{proposition}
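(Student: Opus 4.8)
The plan is to run the Madelung dictionary backwards: generate the flow from the canonical bracket, push it through the complexifier, isolate the single term obstructing linearity, and read off what it demands of $f$. First I would write Hamilton's equations $\partial_t\rho=\{\rho,\mathcal H\}=\delta\mathcal H/\delta S$ and $\partial_t S=\{S,\mathcal H\}=-\delta\mathcal H/\delta\rho$. The first gives the continuity equation $\partial_t\rho=-\tfrac1m\nabla\!\cdot(\rho\nabla S)$, which is independent of $f$; the second gives the Hamilton-Jacobi equation $\partial_t S=-\tfrac{1}{2m}|\nabla S|^2-V-Q_f$, in which the curvature enters only through the Euler-Lagrange potential $Q_f=\tfrac12\,\delta F/\delta\rho$ of Axiom~VI. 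A short variation of $F[\rho]=\int f(\rho)|\nabla\rho|^2\,dx$ gives $Q_f=-\tfrac12 f'(\rho)\,|\nabla\rho|^2-f(\rho)\,\Delta\rho$, read in $H^{-1}$ on $\{\rho>0\}$ so that nodal sets and boundary contributions are controlled by the hypotheses fixed in Section~\ref{sec:axioms}.

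Next I would substitute $\psi=\sqrt\rho\,e^{iS/\hbar}$ and collect real and imaginary parts. The imaginary part reproduces continuity and forces the Laplacian coefficient to be exactly $\hbar^2/(2m)$ — this is where probability conservation and the kinetic normalisation of \cref{lem:kinetic} fix the leading operator and guarantee that the evolution is norm-preserving for every admissible $f$, so unitarity is automatic and linearity is the sole remaining constraint. The real part reproduces Hamilton-Jacobi and collapses into
\[
i\hbar\,\partial_t\psi=-\frac{\hbar^2}{2m}\Delta\psi+V\psi+\mathcal N_f[\rho]\,\psi,\qquad
\mathcal N_f[\rho]=Q_f(\rho)+\frac{\hbar^2}{2m}\frac{\Delta\sqrt\rho}{\sqrt\rho},
\]
in which the Fisher term $\tfrac{\hbar^2}{2m}\,\Delta\sqrt\rho/\sqrt\rho$ is produced automatically by $\Delta$ acting on the modulus. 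The entire departure from linear Schrödinger dynamics is therefore concentrated in the single real, state-dependent potential $\mathcal N_f$.

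The conceptual heart is to show that exact projective linearity forces $\mathcal N_f$ to vanish identically as a functional of $\rho$. I would apply the superposition-closure test directly: for $i\hbar\,\partial_t\psi=\hat H_0\psi+\mathcal N_f[\,|\psi|^2\,]\psi$ with $\hat H_0$ linear, requiring that $a\psi_1+b\psi_2$ solve the equation whenever $\psi_1,\psi_2$ do and for all admissible coefficients and states forces $\mathcal N_f$ to be state-independent; being an autonomous local expression in $\rho$ with no explicit spatial dependence, a state-independent $\mathcal N_f$ reduces to a spatial constant, and evaluation on the uniform density (where both contributions vanish) pins that constant to zero. This yields the pointwise identity $Q_f(\rho)=-\tfrac{\hbar^2}{2m}\,\Delta\sqrt\rho/\sqrt\rho$ for all admissible $\rho$. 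I expect this to be the main obstacle, because it must close the ``looks linear on a privileged family of states'' loophole: the logarithmic and Doebner-Goldin nonlinearities show that weaker separability properties do not imply superposition, so the argument must genuinely use two-parameter superpositions and the full \emph{for all admissible states} quantifier rather than any single test profile.

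Finally I would match the pointwise identity in $\rho$-variables. Writing $\Delta\sqrt\rho/\sqrt\rho=\tfrac{\Delta\rho}{2\rho}-\tfrac{|\nabla\rho|^2}{4\rho^2}$ and equating the coefficients of the two local structures $\Delta\rho$ and $|\nabla\rho|^2$ — independent because at any point of $\{\rho>0\}$ one may freely prescribe $\rho$, $\nabla\rho$ and $\Delta\rho$ — the $\Delta\rho$ coefficient alone gives $f(\rho)=\hbar^2/(4m\rho)$, with the $|\nabla\rho|^2$ coefficient automatically consistent as its $\rho$-derivative. This is the asserted Fisher form $f=C/\rho$; convexity of $F$ (equivalently $f>0$ in Axiom~VI) forces $C>0$, and the identity $C=\hbar^2/(4m)$, i.e. $\hbar^2=4mC$, fixes the scale. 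Substituting back annihilates $\mathcal N_f$ and leaves precisely $i\hbar\,\partial_t\psi=-\tfrac{\hbar^2}{2m}\Delta\psi+V\psi$, completing the converse direction.
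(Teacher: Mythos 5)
Your proposal is correct in substance and shares its algebraic core with the paper's argument, but it is organised around a step the paper never actually proves, so the comparison is worth spelling out. The paper's own proof (Appendix~\ref{app:fisher-proof}) is purely variational: setting $R=\sqrt{\rho}$ it computes $\delta F/\delta\rho=-4fR\,\Delta R-4(f+\rho f')\,|\nabla R|^{2}$ and concludes that the Euler--Lagrange derivative is a pure Laplacian quotient iff $f+\rho f'=0$, i.e.\ $f=C/\rho$; the scale is then fixed separately by the Madelung recombination leading to \eqref{eq:alpha-hbar-over-2m}. What the paper leaves as an assertion (``a state-dependent remainder persists \dots\ consequently exact projective linearity fails'') is exactly your middle step: that linearity of the $\psi$-flow for \emph{all} admissible states forces the remainder $\mathcal N_f=Q_f+\tfrac{\hbar^{2}}{2m}\,\Delta\sqrt{\rho}/\sqrt{\rho}$ to vanish identically as a functional of $\rho$, not merely on some family of test states. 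Your superposition-closure argument (state-independence from two-packet superpositions, constancy from locality and absence of explicit $x$-dependence, zero from evaluation on the uniform density) supplies that missing implication; it is in effect an analytic version of what the paper only does numerically in Appendix~\ref{app:superposition}. Your final matching of the independent structures $\Delta\rho$ and $|\nabla\rho|^{2}$ is then the same algebra as Appendix~\ref{app:fisher-proof}, carried out in $\rho$-variables rather than $R$-variables, with the bonus that the Fisher form and the scale emerge in one stroke rather than in two separate stages.

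Two caveats, neither fatal. First, the proposition's Hamiltonian is written with curvature density $\rho\,Q_f(\rho)$, whose variational derivative is not literally $Q_f$ for general $f$ (it is the Euler--Lagrange derivative of $\int\bigl(f+\tfrac12\rho f'\bigr)|\nabla\rho|^{2}\,dx$); you tacitly adopt the reading in which the Hamilton--Jacobi potential \emph{is} $Q_f$. That is the charitable reading under which the statement is cleanly true, and the ambiguity only shifts the numerical constant (your $C=\hbar^{2}/(4m)$ versus $\hbar^{2}/(8m)$ if the curvature term is identified with $F$ itself so as to match $\alpha=\hbar^{2}/(2m)$ in \eqref{eq:H}); the claimed form $f=C/\rho$ with $C>0$ is unaffected. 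Second, your superposition step is a sketch: to make it rigorous you must choose $\psi_1,\psi_2$ pointwise non-proportional on overlapping supports so that the two multipliers separate, and the uniform-density evaluation needs $\Omega=\mathbb{T}^{d}$, which the paper's setting permits. Both are finishable with the tools already in the paper.
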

		
	\subsection{Interdependence}
	Each axiom is load-bearing: omitting any one destroys linearity or reversibility.
	\begin{itemize}[leftmargin=1.5em]
		\item Without I: higher-order brackets generate third-order dispersive terms.
		\item Without II: $\dot{\mathcal C}\!\ne0$ violates probability conservation.
		\item Without III: $\{S,S\}\!\ne0$ yields nonlinear $\psi$ evolution.
		\item Without IV: parity-odd $\varepsilon$-tensor terms break isotropy.
		\item Without V: diffusion terms appear and $\psi$ becomes non-unitary.
		\item Without VI: the convex regulariser $F[\rho]=\int f(\rho)\,|\nabla\rho|^{2}\,dx$ is no longer constrained to the Fisher form; by Proposition~\ref{prop:fisher-uniqueness} any non-Fisher choice $f(\rho)$ that still admits a local complexifier produces a nonlinear Schrödinger equation within our class.

	\end{itemize}
	
A counterexample within scope must therefore take one of two forms: a concrete error in the derivations under the axioms, or an explicit model satisfying all axioms while reproducing our diagnostics yet producing a non-Schrödinger reversible flow; models outside of axiomatic scope do not refute the classification but instead redefine the problem.
	
\subsection{Flow.}
Axioms I-VI
$\Rightarrow$ canonical bracket on $(\rho,S)$
$\Rightarrow$ Fisher curvature by convexity and symmetry
$\Rightarrow$ local complexifier rigidity
$\Rightarrow$ linear Schrödinger dynamics with fixed scale,
all within the stated class of local first-derivative Hamiltonian theories.

	\section{Bracket Classification}\label{sec:bracket}
	
	We now classify all local first-order Poisson brackets on $(\rho,S)$ satisfying Axioms~I-V under the function-space restrictions stated above.
Any admissible bracket between the point fields can be expressed distributionally as
\smallskip
\begin{align}
	\{\rho(x),\rho(y)\}&=c_{\rho\rho}^{\,i}(\rho,S)\,\partial_{x_i}\delta(x-y),\\
	\{\rho(x),S(y)\}&=a_{0}(\rho,S)\,\delta(x-y)+a_{1}^{\,i}(\rho,S)\,\partial_{x_i}\delta(x-y),\\
	\{S(x),S(y)\}&=c_{SS}^{\,i}(\rho,S)\,\partial_{x_i}\delta(x-y),
\end{align}
with antisymmetry $\{S,\rho\}=-\{\rho,S\}$. Global $\mathrm{U}(1)
$ (Axiom III) forbids explicit $S$ dependence in these coefficients, so they depend on $\rho$ only.

\textit{Isotropy step.} Euclidean covariance forbids nonzero vector coefficients multiplying $\partial_{x_i}\delta$. Hence $a_{1}^{\,i}\equiv 0$ and $c_{\rho\rho}^{\,i}\equiv 0$, while $a_{0}$ is a scalar. The condition $\{S,\mathcal C\}=-1$ then fixes $a_{0}\equiv 1$ up to a constant rescaling of $S$.

\begin{lemma}[Gauge generator normal form]
	Let $\mathcal C=\!\int_{\Omega}\rho(y)\,dy$ be the phase generator of Axiom II. For the above local ansatz,
	\[
	\{S(x),\mathcal C\}=\int_{\Omega}\{S(x),\rho(y)\}\,dy
	=-\,a_{0}(\rho(x))+\partial_{x_i}a_{1}^{\,i}(\rho(x))=-1,
	\]
	\[
	\{\rho(x),\mathcal C\}=\int_{\Omega}\{\rho(x),\rho(y)\}\,dy
	=-\,\partial_{x_i}c_{\rho\rho}^{\,i}(\rho(x))=0.
	\]
	On $\Omega=\mathbb R^{d}$ with decaying probes or on $\mathbb T^{d}$ with periodic probes, $\int \partial_{x_i}\delta(x-y)\,dy=0$, so these identities hold distributionally.
\end{lemma}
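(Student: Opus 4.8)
The plan is to evaluate both brackets directly from the distributional ansatz for the point-field brackets, using bilinearity of the Poisson structure to commute the bracket past the spatial integral defining $\mathcal C=\int_\Omega\rho(y)\,dy$. This reduces each identity to a single distributional integral over $y$ of a delta together with a gradient of a delta, after which the sifting property and one integration by parts finish the computation.

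First I would treat $\{S(x),\mathcal C\}=\int_\Omega\{S(x),\rho(y)\}\,dy$. Writing $\{S(x),\rho(y)\}=-\{\rho(y),S(x)\}$ by antisymmetry and inserting the mixed ansatz (with labels exchanged) yields a term $-a_0(\rho(y))\,\delta(y-x)$ and a term $-a_1^i(\rho(y))\,\partial_{y_i}\delta(y-x)$. Integrating the first against $dy$ collapses it to $-a_0(\rho(x))$ by sifting. For the second I would integrate by parts in $y$, moving the derivative onto the coefficient via the chain rule $\partial_{y_i}\!\big[a_1^i(\rho(y))\big]=(\partial_\rho a_1^i)\,\partial_{y_i}\rho$; the boundary contribution vanishes because the probe $\mathcal C$ is built from a periodic density on $\mathbb T^d$ or a decaying density on $\mathbb R^d$ (Appendix~\ref{app:boundary}), leaving $+\partial_{x_i}a_1^i(\rho(x))$. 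Summing gives $\{S(x),\mathcal C\}=-a_0(\rho(x))+\partial_{x_i}a_1^i(\rho(x))$, and Axiom~II fixes this to $-1$.

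Next, for $\{\rho(x),\mathcal C\}=\int_\Omega\{\rho(x),\rho(y)\}\,dy$ I would insert the $\{\rho,\rho\}$ ansatz, whose only contribution is $c_{\rho\rho}^i(\rho)\,\partial_{x_i}\delta(x-y)$. The key scalar identity is $\int_\Omega\partial_{x_i}\delta(x-y)\,dy=\partial_{x_i}\!\int_\Omega\delta(x-y)\,dy=\partial_{x_i}(1)=0$, valid distributionally against periodic or compactly supported (resp. Schwartz) probes; keeping the field-dependent coefficient under the derivative expresses the result as the pure divergence $-\partial_{x_i}c_{\rho\rho}^i(\rho(x))$, which Axiom~II (conservation of $\mathcal C$, i.e.\ $\{\rho,\mathcal C\}=0$) forces to vanish pointwise. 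This reproduces both stated identities.

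The work here is bookkeeping rather than conceptual, so the main obstacle is fixing conventions cleanly: I must specify whether the field-dependent coefficients $a_1^i,c_{\rho\rho}^i$ are evaluated at $x$, at $y$, or by a symmetric prescription before the integration by parts, since this choice determines whether the surviving term appears as $+\partial_{x_i}(\cdot)$ or $-\partial_{x_i}(\cdot)$, and I must verify that no boundary term survives in each admissible function class. Once the probes are taken smooth (periodic, or in $C_c^\infty$ / Schwartz on $\mathbb R^d$) and the positivity-set caveat of Section~\ref{sec:axioms} is invoked so that $a_1^i(\rho)$ and $c_{\rho\rho}^i(\rho)$ are differentiable where needed, both identities follow from elementary distribution theory.
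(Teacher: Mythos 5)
Your proposal is correct and follows essentially the same route as the paper: the lemma there is proved by exactly this direct distributional evaluation --- sifting for the $\delta$ term, one integration by parts (with boundary terms vanishing for periodic or decaying probes, per the identity $\int\partial_{x_i}\delta(x-y)\,dy=0$) for the $\partial\delta$ terms --- supported by the distributional product rules recorded in Appendix~C, after which Axiom~II pins the values $-1$ and $0$. Your closing remark about the coefficient-placement convention (at $x$ versus $y$) determining the sign of the surviving divergence is a fair observation, but it is immaterial to the conclusion since that term is forced to vanish in any case.
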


\smallskip
The coefficients \(a_{1}^i\) and \(c_{\rho\rho}^i\) are excluded by Axiom~II together with Euclidean covariance, since any nonzero first-derivative coefficient would either introduce a preferred spatial direction or break the phase generator under smearing, as verified explicitly in Appendix~\ref{app:jacobi}, and independently consistent with Dubrovin-Novikov-type classifications extended by isometries~\cite{pavlov2021}.
\smallskip
\begin{lemma}[Global $\mathrm{U}(1)
$ restriction]
	Under $S\mapsto S+\mathrm{const}$ and $\{S,\mathcal C\}=-1$ (Axiom II and III), any nonzero $c_{SS}^{\,i}(\rho)$ would give $\{S,S\}\neq 0$ after smearing with constants, contradicting global $\mathrm{U}(1)
$. Hence $c_{SS}^{\,i}\equiv 0$.
\end{lemma}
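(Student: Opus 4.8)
The plan is to start from the reduced ansatz that survives the earlier steps of the classification, namely $\{S(x),S(y)\}=c_{SS}^{\,i}(\rho)\,\partial_{x_i}\delta(x-y)$, in which the absence of explicit $S$-dependence is already guaranteed by global $\mathrm{U}(1)$ (Axiom III) and the absence of a zeroth-order $\delta(x-y)$ core is forced by antisymmetry of the bracket in the $S$–$S$ slot. The single remaining freedom is the vector coefficient $c_{SS}^{\,i}(\rho)$, and the goal is to show it vanishes identically. Because point brackets are distributions, I would first pass to genuine observables by smearing against test functions $\phi,\chi$ supported in the positivity set, obtaining
\[
\{S[\phi],S[\chi]\}=\int_{\Omega}\phi\,c_{SS}^{\,i}(\rho)\,\partial_i\chi\,dx ,
\]
where $S[\phi]=\int_{\Omega}\phi\,S\,dx$ and the boundary contributions generated by integrating against $\partial_i\delta$ vanish for the admissible boundary classes of Appendix~\ref{app:boundary}.

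Second, I would implement the \emph{smearing with constants} of the statement by taking one slot constant, $\phi\equiv 1$. Then $S[\mathbf 1]=\int_{\Omega}S\,dx$ is the constant (zero) mode of the phase, which global $\mathrm{U}(1)$ declares to be pure gauge and hence Poisson-central. The identity above gives $\{S[\mathbf 1],S[\chi]\}=\int c_{SS}^{\,i}\partial_i\chi=-\int(\partial_i c_{SS}^{\,i})\,\chi$, whereas antisymmetry forces $\{S[\mathbf 1],S[\chi]\}=-\{S[\chi],S[\mathbf 1]\}=0$ since $\partial_i\mathbf 1=0$. Equating these for all $\chi$ yields $\partial_i c_{SS}^{\,i}(\rho)=0$, that is $(c_{SS}^{\,i})'(\rho)\,\partial_i\rho=0$ pointwise on $\{\rho>0\}$. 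A nonzero $\rho$-dependent coefficient would therefore make the gauge mode $S[\mathbf 1]$ fail to commute with a physical phase observable $S[\chi]$, contradicting global $\mathrm{U}(1)$; imposing the vanishing divergence across the admissible state family, where $\nabla\rho$ is unconstrained in magnitude and direction on the positivity set, collapses $c_{SS}^{\,i}$ to a \emph{constant} vector $b^{i}$ independent of $\rho$.

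Third, I would eliminate the residual constant $b^{i}$ by Euclidean covariance (Axiom IV). For $d\ge 2$ rotational invariance admits no nonzero invariant vector, so $b^{i}=0$ immediately; for $d=1$, where rotations are vacuous, spatial parity—explicitly included in Axiom IV—makes the bracket $b\,\partial_x\delta(x-y)$ parity-odd while the Poisson bracket of the two parity-even scalars $S$ must be parity-even, again forcing $b=0$. Either way $c_{SS}^{\,i}\equiv 0$, hence $\{S,S\}=0$, which completes the reduction to the canonical bracket.

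The step I expect to be the main obstacle is the second one: making the distributional bookkeeping under smearing rigorous (correctly handling $\partial_i\delta$ and confirming the boundary contributions vanish in each admissible geometry) and, crucially, justifying the passage from ``the identity holds for every fixed admissible state'' to ``$(c_{SS}^{\,i})'\equiv 0$ as a function of $\rho$.'' This requires a sufficiently rich family of states so that at any base point the pair $(\rho,\nabla\rho)$ can be prescribed freely on $\{\rho>0\}$; the Gaussian and compactly supported bump profiles already used in the worked example supply such a family. The low-dimensional case $d=1$ also deserves explicit care, since there it is parity rather than rotation that carries the isotropy argument.
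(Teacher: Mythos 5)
Your proof is correct and uses the same core device as the paper—smearing one slot with a constant—but it is materially tighter, because the paper's entire proof is the one-line assertion inside the lemma statement that a nonzero $c_{SS}^{\,i}(\rho)$ ``would give $\{S,S\}\neq 0$ after smearing with constants, contradicting global $\mathrm{U}(1)$,'' with no further justification. You instead derive the contradiction from antisymmetry: $\{S[\chi],S[\mathbf{1}]\}$ vanishes identically because the derivative falls on the constant probe, so antisymmetry forces $\{S[\mathbf{1}],S[\chi]\}=-\int(\partial_i c_{SS}^{\,i})\,\chi\,dx=0$ for all $\chi$, giving $(c_{SS}^{\,i})'(\rho)\,\partial_i\rho=0$; richness of the admissible state family then yields $c_{SS}^{\,i}=\mathrm{const}$. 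This buys two things. First, it replaces the paper's unexplained appeal to $\mathrm{U}(1)$ with an argument that actually carries the load (your reading of $S[\mathbf{1}]$ as a Poisson-central gauge mode is an interpretation; the antisymmetry computation is the proof). Second, you correctly notice that the smearing test cannot exclude a constant vector $b^i$, since $\partial_i b^i=0$—so the lemma as stated in the paper is incomplete on its own and is rescued only by Euclidean covariance, which the paper's preceding ``isotropy step'' applies to $a_1^{\,i}$ and $c_{\rho\rho}^{\,i}$ but never explicitly to $c_{SS}^{\,i}$. Your third step (rotations for $d\ge 2$, parity for $d=1$) closes exactly that gap. The distributional caveat you flag about $S[\mathbf{1}]$ on $\mathbb{R}^d$ is real but no worse than the paper's own use of constant probes elsewhere; it disappears entirely if one imposes antisymmetry directly on the distributional ansatz, which gives $\partial_i\bigl(c_{SS}^{\,i}(\rho)\bigr)\,\delta(x-y)=0$ and hence the same divergence-free condition without ever forming $\int S\,dx$.
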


Therefore the only surviving structure is
\[
\{\rho(x),S(y)\}=a_{0}(\rho(x))\,\delta(x-y),\qquad \{\rho,\rho\}=0=\{S,S\}.
\]
A density weighting $a_{0}(\rho)$ is a priori allowed. The Jacobi identity fixes it:

\begin{lemma}[Gauge-normalised $\delta$-only brackets]\label{lem:jacobi-a0}
	For
	\(
	\{F,G\}_{a_{0}}=\int a_{0}(\rho)\big(F_{\rho}G_{S}-F_{S}G_{\rho}\big)\,dx
	\)
	with $\{\rho,\rho\}=0=\{S,S\}$, the Jacobi identity is automatically satisfied for any smooth $a_{0}(\rho)$. Axiom~II and the requirement that $\mathcal C=\!\int\rho\,dx$ generate constant shifts of $S$ then force $a_{0}$ to be a nonzero constant, which can be absorbed into a rescaling of $S$ so that $a_{0}=1$.
\end{lemma}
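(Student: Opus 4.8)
The plan is to prove the two assertions in turn: the automatic validity of the Jacobi identity, and then the rigidity of $a_{0}$ under Axiom~II. For the Jacobi identity I would first record that the surviving structure is \emph{ultralocal}: after the isotropy step has removed every $\partial_{x_i}\delta$ term, the only nonzero coefficient is $P^{\rho S}(\rho)=a_{0}(\rho)=-P^{S\rho}(\rho)$, so that $\{u^{i}(x),u^{j}(y)\}=P^{ij}(\rho(x))\,\delta(x-y)$ on the doublet $u=(\rho,S)$, and correspondingly $\{F,G\}_{a_{0}}=\int P^{ij}\,F_{u^{i}}G_{u^{j}}\,dx$. For such $\delta$-only brackets the functional Jacobiator $\{F,\{G,H\}\}+\mathrm{cyclic}$ collapses to the pointwise finite-dimensional condition that $P^{ij}$ be a Poisson bivector on the two-dimensional target with coordinates $(\rho,S)$.

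The decisive point is then purely dimensional. The obstruction to $P^{ij}$ being Poisson is its Schouten-Nijenhuis self-bracket $[P,P]$, a totally antisymmetric contravariant rank-three tensor on the target; since the target is two-dimensional, every such trivector vanishes identically, so $[P,P]=0$ for \emph{any} smooth $a_{0}(\rho)$ and the Jacobi identity holds without constraint. I would also record the direct verification: in the cyclic sum the second-variation terms cancel by symmetry of $\delta^{2}/\delta u^{i}\delta u^{j}$ against the antisymmetry of $P^{ij}$, and the residual contributions proportional to $a_{0}'(\rho)$ assemble into exactly this totally antisymmetric combination, which has no room to be nonzero on a two-field space.

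For the rigidity of $a_{0}$ I would evaluate the bracket of the phase generator against the fields directly. Taking $F=S(x)$ and $G=\mathcal C=\int\rho\,dy$ gives $F_{\rho}=0$, $F_{S}=\delta(\cdot-x)$, $G_{\rho}=1$, $G_{S}=0$, whence
\[
\{S(x),\mathcal C\}=\int a_{0}(\rho(z))\big(F_{\rho}G_{S}-F_{S}G_{\rho}\big)\,dz=-a_{0}(\rho(x)).
\]
The requirement that $\mathcal C$ generate a \emph{rigid} shift of $S$ forces the right-hand side to be independent of both $x$ and the state, so $a_{0}(\rho)\equiv c$ is constant; since $\{S,\mathcal C\}=-1\neq0$ in Axiom~II the constant is nonzero, and the normalisation fixes $c=1$, with any other nonzero value absorbed by the rescaling $S\mapsto S/c$. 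The companion computation with $F=\rho(x)$ yields $\{\rho(x),\mathcal C\}=0$ automatically from $\{\rho,\rho\}=0$, matching the second half of Axiom~II.

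The step I expect to require the most care is the reduction in the first paragraph: one must confirm that, because the bracket carries $\delta(x-y)$ rather than $\partial_{x_i}\delta(x-y)$, the iterated bracket produces no derivative-of-delta terms, so the field-theoretic Jacobiator genuinely collapses to the pointwise $[P,P]$ rather than to the stronger Dubrovin-Novikov conditions that govern genuinely first-order brackets. Once that collapse is secured the two-dimensionality of the target closes the argument at once, and the rigidity of $a_{0}$ is then the elementary variational computation displayed above.
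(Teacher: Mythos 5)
Your proposal is correct and follows essentially the same route as the paper's own proof: the ultralocal Poisson tensor on the two-dimensional target $(\rho,S)$ has identically vanishing Schouten bracket, so Jacobi is automatic, and Axiom~II then forces $a_{0}$ to be a nonzero constant absorbable into a rescaling of $S$. Your write-up is in fact somewhat more careful than the paper's sketch, since you make explicit both the collapse of the functional Jacobiator to the pointwise $[P,P]$ condition and the computation $\{S(x),\mathcal C\}=-a_{0}(\rho(x))$ that the paper only asserts.
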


\begin{proof}[Sketch]
	For a two-component local bracket with $\{\rho,\rho\}=0=\{S,S\}$ the only potentially nonzero component of the Poisson tensor is $P^{\rho S}(\rho)=a_{0}(\rho)=-P^{S\rho}$. In two dimensions the Schouten bracket $[P,P]$ vanishes identically for any such antisymmetric tensor, so the Jacobiator is identically zero and imposes no constraint on $a_{0}$. Axiom~II requires $\{S,\mathcal C\}=-1$ with $\mathcal C=\!\int\rho\,dx$, which forces $a_{0}$ to be a nonzero constant; this constant can be absorbed into a redefinition of $S$, so we may set $a_{0}=1$.
\end{proof}

The constant $a_0$ merely rescales $S$; we set $a_0=1$ without loss of generality.
Appendix~\ref{app:jacobi}
gives an explicit trilinear Jacobi calculation confirming that no further constraint arises from Jacobi beyond those already implied by Axiom~II.

\paragraph{Structural reduction.}
A general first order DN operator has the schematic form
\[
\{u^{i}(x),u^{j}(y)\}
=g^{ij}(u(x))\,\partial_{x_k}\delta(x-y)\,e^{k}
+b^{ij}_{\;k}(u(x))\,u^{k}_{x}\,\delta(x-y),
\]
with $(g^{ij},b^{ij}_{\;k})$ satisfying flatness and compatibility conditions equivalent to Jacobi \cite{Dubrovin1983,novikov1984}.

Translation and rotation invariance, together with the existence of the phase generator $\mathcal C$ and Lemma~\ref{lem:jacobi-a0}, imply that within our axioms the bracket is Poisson-isomorphic to a flat constant representative, namely the canonical bracket
\(
\{\rho(x),S(y)\}=\delta(x-y)
\).
Details are given in Appendix~\ref{app:jacobi}.

\smallskip
\begin{proposition}[Canonical Bracket]\label{thm:canonical-bracket}
	Under Axioms~I-V, the only possible local, probability-preserving, phase- and Euclidean-invariant first-order Poisson structure on $(\rho,S)$ is
	\begin{equation}
		\boxed{\;
			\{F,G\}=\int
			\left(
			\frac{\delta F}{\delta \rho}\frac{\delta G}{\delta S}
			-\frac{\delta F}{\delta S}\frac{\delta G}{\delta \rho}
			\right)\!dx.
			\;}
		\label{eq:canonical}
	\end{equation}
Equivalently, within the Dubrovin-Novikov class the bracket reduces (up to a constant rescaling of $S$) to this constant-coefficient normal form; any $\rho$-dependent prefactor $a_0(\rho)$ in $\{\rho,S\}=a_0(\rho)$ is incompatible with the phase-generator property of Axiom~II and probability conservation, hence the canonical bracket \eqref{eq:canonical} is the unique local representative in this setting.
\end{proposition}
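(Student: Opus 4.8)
The plan is to assemble the filtration already carried out in the preceding lemmas into a single classification: begin from the most general first-order local Dubrovin--Novikov bracket on the doublet $u=(\rho,S)$ and apply Axioms~I--V in turn, each eliminating one family of coefficient functions until only the canonical structure survives. A preliminary observation is that the displayed distributional brackets parametrise the full first-order local DN class on a scalar doublet: writing the Hamiltonian operator as $A^{ij}(u)\,\partial_x$ per Axiom~I, its point-field kernel is fixed, once antisymmetry is imposed, by the scalar kernel $a_0$ and the vector kernels $a_1^i,c^i_{\rho\rho},c^i_{SS}$ (the $\rho$-dependence of these kernels encoding the derivative-coupling terms of the DN operator), so classifying the coefficients classifies the bracket.

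First I would invoke global $\mathrm{U}(1)$ (Axiom~III): invariance under $S\mapsto S+\mathrm{const}$ forbids explicit $S$-dependence, so every coefficient is a function of $\rho$ alone. Next, Euclidean covariance (Axiom~IV) forbids a nonzero vector coefficient contracting the only available vector object $\partial_{x_i}\delta(x-y)$, since that would select a preferred spatial direction; hence $a_1^i\equiv0$ and $c^i_{\rho\rho}\equiv0$. The global $\mathrm{U}(1)$ restriction lemma then removes $c^i_{SS}$, because a nonzero $\{S,S\}$ survives smearing against constants and violates phase invariance. At this stage only $\{\rho(x),S(y)\}=a_0(\rho)\,\delta(x-y)$ remains, with $\{\rho,\rho\}=\{S,S\}=0$.

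The crux of the argument is fixing the density weight $a_0(\rho)$, and this is the step I expect to be the main obstacle, because the Jacobi identity is powerless here: for a two-component Poisson tensor whose only nonzero entry is $P^{\rho S}=a_0(\rho)$ the Schouten bracket $[P,P]$ vanishes identically (Lemma~\ref{lem:jacobi-a0}), so reversibility alone permits an arbitrary smooth $a_0(\rho)$. The constraint must therefore come entirely from Axiom~II. Smearing the surviving bracket against the phase generator $\mathcal C=\int\rho\,dy$ gives $\{S(x),\mathcal C\}=-a_0(\rho(x))$, and requiring this to equal the universal constant $-1$ for every admissible field configuration forces $a_0$ to be field-independent; the companion identity $\{\rho,\mathcal C\}=0$ is then automatic once $c^i_{\rho\rho}\equiv0$. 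I would stress that it is precisely the demand that $\mathcal C$ generate a rigid $S$-shift uniformly in $\rho$ --- not any closure or flatness condition --- that collapses $a_0(\rho)$ to a constant.

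Finally, the nonzero constant $a_0$ is absorbed by a constant rescaling of $S$, normalising $a_0=1$ and yielding the canonical bracket \eqref{eq:canonical}, whose Jacobi identity is then trivial since its coefficients are constant. In the Dubrovin--Novikov language this realises the flat constant-coefficient representative, and the Poisson-isomorphism of Proposition~\ref{prop:DN-canonical} reduces here to this single rescaling, since Axiom~II has already forced the metric coefficient to be constant rather than merely flat. Any $\rho$-dependent $a_0(\rho)$ is thereby excluded, establishing uniqueness of \eqref{eq:canonical} within the stated class.
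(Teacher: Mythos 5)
Your proposal is correct and follows essentially the same route as the paper: the same distributional ansatz, elimination of $S$-dependence by Axiom~III, removal of the vector coefficients $a_1^i, c^i_{\rho\rho}, c^i_{SS}$ by Euclidean covariance and the $\mathrm{U}(1)$ smearing argument, and then the key observation (matching the paper's Lemma~\ref{lem:jacobi-a0}) that Jacobi is vacuous for a two-component antisymmetric tensor, so it is the phase-generator condition $\{S,\mathcal C\}=-1$ of Axiom~II, holding uniformly over field configurations, that collapses $a_0(\rho)$ to a constant absorbable into a rescaling of $S$. Your emphasis that the constraint comes from Axiom~II rather than from any closure or flatness condition is exactly the content of the paper's argument.
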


The proof follows from the gauge generator normal form, the $\mathrm{U}(1)
$ restriction, Lemma~\ref{lem:jacobi-a0}, and the DN reduction above. In words, global phase invariance fixes the pairing of $\rho$ and $S$, isotropy removes derivative-weighted coefficients, and the Jacobi identity leaves only a constant prefactor that can be absorbed into $S$.

This Hamiltonian classification result is orthogonal to both Nelson’s construction of quantum kinematics \cite{nelson1966} from diffusion and to entropic-inference updates \cite{caticha2012}; it is a statement within the Hamiltonian class, not across all conceivable generative principles.

\emph{Symmetry generators (translations, boosts, rotations) and their closure are discussed in \Cref{sec:symmetry}; weak-solution aspects at nodal sets are detailed in \Cref{sec:function-space}.}

\paragraph{Locality and vorticity.}
The bracket classification holds for locally smooth $(\rho,S)$ on simply connected charts where $S$ is single-valued and $\nabla\times\nabla S=0$.  Physical vorticity and quantised circulation arise when these charts are glued on domains punctured by nodal lines, so that $S$ acquires multivalued holonomy $\oint\nabla S\!\cdot\!dl=2\pi n\hbar$.  This global topology affects boundary conditions but leaves the local canonical bracket.
	
This is the canonical symplectic form on $(\rho,S)$, the scalar representative of the Dubrovin-Novikov class.
Explicitly, the underlying symplectic two-form is
\[
\omega=\int d\rho\wedge dS,
\]
showing that $(\rho,S)$ are canonically conjugate variables.
With this bracket the continuity equation follows directly:
\[
\dot\rho=\{\rho,\mathcal H\}
=-\nabla\!\cdot\!\Big(\frac{\rho}{m}\nabla S\Big),
\]
ensuring conservation of $\int\rho\,dx$ and aligning with the phase generator property detailed in Appendix~\ref{app:prob}.

This coincides with the generator property $\{S,\mathcal C\}=-1$ and ensures conservation of $\int\rho\,dx$ for any real Hamiltonian.
	
\section{Hamiltonian and Fisher Curvature}

With the bracket fixed, the dynamics are determined by the Hamiltonian
\begin{equation}
	\mathcal H[\rho,S]=
	\int\!\!\Big[
	\frac{\rho|\nabla S|^2}{2m}
	+V(x)\rho
	+\alpha\,|\nabla\sqrt\rho|^2
	\Big]dx,
\label{eq:H}
\end{equation}
where $\alpha>0$ sets the regularisation scale.
The first two terms reproduce classical mechanics; the last introduces Fisher curvature.
Cross terms such as $\nabla \rho \!\cdot\! \nabla S$ are excluded by reversibility and our first-order locality requirement: they are odd under $(t,S)\mapsto(-t,-S)$ and, when combined with the canonical bracket, generate higher-order contributions in the continuity equation that are not of the simple drift form $-\nabla\!\cdot(\rho\nabla S/m)$.

Any local scalar containing $\nabla S$, including $\nabla\rho\!\cdot\!\nabla S$ and $|\nabla S|\,|\nabla\rho|$, is either a total divergence or conflicts with the $S$-shift generator role in Axiom~II and the time-reversal symmetry of Axiom~V.
A complete catalogue of first-derivative scalar candidates built from $\rho$ and $S$ is recorded in Appendix~\ref{app:counterexamples}, each tagged as divergence, $\mathrm{U}(1)
$ violation, generator conflict, or admissible.

We now show the curvature term is unique within the stated admissible class.
	
\begin{proposition}[Fisher-curvature]\label{ax4}
	Within the stated local first-order Hamiltonian class on $(\rho,S)$ and Axioms~I-VI, the structural assumption is only that the regulariser be a local, positive, convex, first-derivative functional on $\rho$; no Fisher form is presupposed.
	
	We separate two routes: (i) an Euler-Lagrange uniqueness within the admissible class, and (ii) an operational projective-linearity stress-test after complexification.
	
	The present proposition establishes (i): analysis shows that $f(\rho)\propto 1/\rho$ is the sole admissible choice whose curvature term is time-reversal invariant and compatible with linear superposition once the unique local, derivative-free, gauge-covariant complexifier is imposed.
	Among all positive, rotationally invariant local quadratic functionals
	\[
	\mathcal F[\rho]=\int f(\rho)\,|\nabla\rho|^2\,dx,
	\]
	only $f(\rho)=C/\rho$ yields an Euler-Lagrange derivative proportional to $-\Delta\sqrt\rho/\sqrt\rho$.
\end{proposition}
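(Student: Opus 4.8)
The plan is to reduce the claim to a rigid matching of two independent differential structures, performed first on the open positivity set $\{\rho>0\}$, where every object is classical, and then promoted to the weak statement. First I would compute the Euler-Lagrange derivative of the general admissible functional: varying $\mathcal F[\rho]=\int f(\rho)\,|\nabla\rho|^{2}\,dx$ and integrating the gradient term by parts, with boundary contributions vanishing under the classes of \cref{app:boundary}, gives on $\{\rho>0\}$
\[
\frac{\delta\mathcal F}{\delta\rho}=-\,f'(\rho)\,|\nabla\rho|^{2}-2\,f(\rho)\,\Delta\rho .
\]
This displays the variation as a sum of exactly two pointwise structures, $|\nabla\rho|^{2}$ and $\Delta\rho$, carrying $\rho$-dependent coefficients.

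Next I would express the target in the same variables. Writing $R=\sqrt{\rho}$ and expanding $\Delta R$ yields the pointwise identity
\[
-\frac{\Delta\sqrt{\rho}}{\sqrt{\rho}}=\frac{1}{4\rho^{2}}\,|\nabla\rho|^{2}-\frac{1}{2\rho}\,\Delta\rho ,
\]
again on $\{\rho>0\}$. Substituting the Fisher choice $f(\rho)=C/\rho$, with $f'(\rho)=-C/\rho^{2}$, into the Euler-Lagrange derivative collapses it to $-4C\,\Delta\sqrt{\rho}/\sqrt{\rho}$, settling the ``if'' direction with proportionality constant $\lambda=4C$.

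For the converse I would impose $\delta\mathcal F/\delta\rho=-\lambda\,\Delta\sqrt{\rho}/\sqrt{\rho}$ for every admissible $\rho$ and match the coefficients of $|\nabla\rho|^{2}$ and $\Delta\rho$ separately. This produces the pair $-f'(\rho)=\lambda/(4\rho^{2})$ and $f(\rho)=\lambda/(4\rho)$; the second gives $f$ algebraically while the first reproduces the same $f$ upon differentiation, so the overdetermined system is consistent and pins $f$ to $f(\rho)=C/\rho$ with $C=\lambda/4$ and no free integration constant, the sign $C>0$ following from $f>0$. Separating the two structures is legitimate because they can be varied independently: at any interior point one constructs an admissible test density with prescribed $\rho$, $\nabla\rho$, and $\Delta\rho$, so no accidental cancellation can mask a wrong coefficient. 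Evaluating on a critical-point density ($\nabla\rho=0$ but $\Delta\rho\neq0$) in fact extracts $f(\rho)=\lambda/(4\rho)$ in a single line, after which the gradient term only confirms consistency.

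The main obstacle is not the algebra but the passage from this pointwise identity on $\{\rho>0\}$ to the global weak statement at nodal sets, precisely where the paper insists on the $H^{-1}$ formulation. The quotient $\Delta\sqrt{\rho}/\sqrt{\rho}$ is singular as $\rho\to0$, yet the Fisher functional rewrites as $4C\int|\nabla\sqrt{\rho}|^{2}\,dx$, finite for $\sqrt{\rho}\in H^{1}(\Omega)$. The delicate step is to confirm that $\delta\mathcal F/\delta\rho$ is well defined in $H^{-1}(\Omega)$ through this representative and that coefficient-matching, classical away from nodes, survives in the distributional pairing against test functions supported across $\{\rho=0\}$. I would secure both this and the independent-jet construction by working throughout with the $H^{1}$ variable $\sqrt{\rho}$ together with a standard nodal-mask approximation, as used in the worked example.
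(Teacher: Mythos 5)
Your proposal is correct and takes essentially the same route as the paper's proof in Appendix~\ref{app:fisher-proof}: compute the Euler-Lagrange derivative $-f'(\rho)\,|\nabla\rho|^{2}-2f(\rho)\,\Delta\rho$ and force $f(\rho)=C/\rho$ by matching coefficients of the two independent second-order structures. The differences are cosmetic rather than structural---you match in the $(\Delta\rho,\,|\nabla\rho|^{2})$ basis and obtain $f$ algebraically with a consistency check, whereas the paper substitutes $\rho=R^{2}$ and kills the $|\nabla R|^{2}$ coefficient via the ODE $f+\rho f'=0$; your explicit jet-independence justification and the nodal-set/$H^{-1}$ remarks make implicit steps of the paper explicit.
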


See Appendix~\ref{app:fisher-proof} for full Euler-Lagrange and regularity conditions; complementary superposition test supporting route (ii) is described in Appendix~\ref{app:superposition}.

\begin{corollary}
	The regulariser is the Fisher information functional, with explicit Euler-Lagrange
	\begin{equation}
		\mathcal F[\rho]=\int |\nabla\sqrt{\rho}|^2\,dx,
		\qquad
		\frac{\delta\mathcal F}{\delta\rho}
		=-\,\frac{\Delta\sqrt{\rho}}{\sqrt{\rho}}.
	\end{equation}
\end{corollary}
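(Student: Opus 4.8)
The plan is to treat the corollary as the explicit evaluation of the Proposition at its distinguished point $f(\rho)=C/\rho$, so that the only remaining work is to rewrite the functional through $\sqrt{\rho}$ and compute a single Euler--Lagrange derivative, with all analytic care concentrated at the nodal set. First I would invoke the Proposition to fix $f(\rho)=C/\rho$ and choose the normalisation $C=\tfrac14$, which is harmless because the overall constant is absorbed into the scale $\alpha$ of \eqref{eq:H}. Writing $R=\sqrt{\rho}$ and $\nabla\rho=2R\,\nabla R$ yields the pointwise identity $\rho^{-1}|\nabla\rho|^2=4|\nabla R|^2$ on $\{\rho>0\}$, whence
\[
\mathcal F[\rho]=\tfrac14\int_\Omega \frac{|\nabla\rho|^2}{\rho}\,dx=\int_\Omega |\nabla\sqrt{\rho}|^2\,dx,
\]
the Dirichlet energy of $\sqrt{\rho}$, which is finite since $\sqrt{\rho}\in H^1(\Omega)$ by the standing hypotheses of \Cref{sec:axioms}.

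Next I would compute the first variation directly. With $\delta\rho=2\sqrt{\rho}\,\delta\sqrt{\rho}$ and one integration by parts (boundary terms deferred to \Cref{app:boundary}),
\[
\delta\mathcal F=\int_\Omega 2\,\nabla\sqrt{\rho}\cdot\nabla(\delta\sqrt{\rho})\,dx
=-\int_\Omega 2\,\Delta\sqrt{\rho}\,\delta\sqrt{\rho}\,dx
=-\int_\Omega \frac{\Delta\sqrt{\rho}}{\sqrt{\rho}}\,\delta\rho\,dx,
\]
so that $\delta\mathcal F/\delta\rho=-\Delta\sqrt{\rho}/\sqrt{\rho}$, the Laplacian-quotient potential already anticipated in the worked example under Axiom~VI.

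The main obstacle---indeed the only nontrivial point---is regularity at the nodal set $\{\rho=0\}$, where $\sqrt{\rho}$ degenerates and the quotient $\Delta\sqrt{\rho}/\sqrt{\rho}$ has no classical meaning. I would address this exactly as the standing hypotheses demand. The functional itself carries the $L^1$ density $|\nabla\sqrt{\rho}|^2$ and is therefore unambiguous and finite; the apparent blow-up is merely an artefact of displaying an $H^{-1}$ functional derivative as a pointwise ratio. To make the variation rigorous I would (i) perform it on the truncated set $\{\rho>\epsilon\}$ against a smooth mask, integrate by parts there, and pass $\epsilon\to0$ using $\sqrt{\rho}\in H^1$ together with dominated convergence for the masked Dirichlet energy, so that the limiting pairing $\langle\delta\mathcal F/\delta\rho,\,\cdot\,\rangle$ is defined as an element of $H^{-1}(\Omega)$ and the stated identity holds weakly on the positivity set, precisely as assumed for $Q_\kappa$ throughout; and (ii) check that both the genuine boundary terms and the mask-edge terms vanish under the periodic, Dirichlet-with-constant-$S$, or Neumann conditions catalogued in \Cref{app:boundary}. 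With these two checks in place, the algebraic identity $\tfrac14\rho^{-1}|\nabla\rho|^2=|\nabla\sqrt{\rho}|^2$ and the one-line variation close the corollary; full regularity details can be relegated to \Cref{app:fisher-proof}.
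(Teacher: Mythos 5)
Your proposal is correct and essentially reproduces the paper's own argument: fixing $f(\rho)=C/\rho$ with $C=\tfrac14$ matches the normalisation $\alpha=4C$ recorded at the end of \Cref{app:fisher-proof}, and your direct variation of the Dirichlet energy in $\sqrt{\rho}$ --- including the deferred boundary term $2\oint_{\partial\Omega}(\nabla R\cdot n)\,\delta R\,\mathrm{d}\sigma$ and the weak/$H^{-1}$ reading on $\{\rho>0\}$ --- is exactly the computation the paper records in \Cref{app:boundary}. The only cosmetic difference is the order of operations: the paper varies the general functional $\int f(\rho)\,|\nabla\rho|^2\,dx$ first and then specialises to $f=C/\rho$, whereas you specialise first and then vary; the content is identical.
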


\paragraph{Linearity test}
The complex structure implied by the Fisher curvature admits a direct projective superposition stress-test (Appendix~\ref{app:superposition}). Let $\psi_1,\psi_2$ be two initially disjoint packets and write
\[
\psi_{\oplus}(t)\ \text{for the evolution of}\ \tfrac{1}{\sqrt{2}}(\psi_1+\psi_2),\qquad
\psi_{\Sigma}(t)\ \text{for the evolution of}\ \tfrac{1}{\sqrt{2}}\psi_1\ +\ \tfrac{1}{\sqrt{2}}\psi_2
\]
evolved separately and then summed. Define the projective residual by normalising and optimally aligning the global phase,
\[
\mathcal{R}_{\mathrm{proj}}(t)
=\min_{\theta\in[0,2\pi)}\ \Big\|\,
\frac{\psi_{\oplus}(t)}{\|\psi_{\oplus}(t)\|_{2}}
- e^{i\theta}\,\frac{\psi_{\Sigma}(t)}{\|\psi_{\Sigma}(t)\|_{2}}
\Big\|_{2}.
\]
For Fisher-regularised dynamics one finds $\mathcal{R}_{\mathrm{proj}}(t)=0$ up to numerical tolerance, whereas any admissible departure from the Fisher form yields $\mathcal{R}_{\mathrm{proj}}(t)>0$ even under infinitesimal perturbations. Scripts are listed in the code archive Appendix~\ref{app:code-repo}; construction details are in Appendix~\ref{app:superposition}.

\begin{proposition}[Fisher coefficient from symmetries and scaling]
	
	\[
	\mathcal H[\rho,S]
	=\int\!\Big(\frac{\rho\,|\nabla S|^{2}}{2m}+V\,\rho+\alpha\,|\nabla\sqrt{\rho}|^{2}\Big)\,dx
	\]
	Let generate dynamics via the canonical bracket on $(\rho,S)$. Assume Galilean covariance, global $\mathrm{U}(1)$ phase symmetry $S\mapsto S+\text{\emph{const}}$, and diffusive scaling $x\mapsto\lambda x$, $t\mapsto\lambda^{2}t$. Then
	\[
	\alpha=\frac{\hbar^{2}}{2m},
	\]
	for a universal constant $\hbar>0$ fixed by experiment.
\end{proposition}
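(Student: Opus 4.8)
The plan is to fix $\alpha$ by combining dimensional rigidity (from diffusive scaling) with the Bargmann central-charge structure (from Galilean covariance), rather than by importing projective linearity. First I would write the equations of motion generated by $\mathcal H$ through the canonical bracket: the continuity equation $\partial_t\rho=-\nabla\cdot(\rho\nabla S/m)$ and the Hamilton--Jacobi equation $\partial_t S+|\nabla S|^2/(2m)+V-\alpha\,\Delta\sqrt\rho/\sqrt\rho=0$, observing that $\alpha$ enters only through the quantum-potential term. I then set $\hbar:=\sqrt{2m\alpha}$ as the candidate action scale and reduce the claim to showing that this $\hbar$ is the unique, universal, frame-independent constant of the free theory.

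The diffusive-scaling step supplies dimensional rigidity. Probability normalisation $\int\rho\,dx=1$ forces $\rho$ to carry scaling weight $-d$ under $x\mapsto\lambda x$, while the canonical pairing $\{\rho(x),S(y)\}=\delta(x-y)$ and the continuity equation force $S$ to be scale-invariant (weight $0$), consistent with $S$ having the dimension of action and with diffusive scaling leaving action invariant. The decisive observation is that the quantum potential $\Delta\sqrt\rho/\sqrt\rho$ is homogeneous of degree $0$ in $\rho$, hence carries weight $-2$ irrespective of the weight of $\rho$; matching it against $|\nabla S|^2/(2m)$ and $\partial_t S$, which also carry weight $-2$, forces $\alpha$ to be a genuine scale-invariant constant of dimension (action)$^2$/mass. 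Diffusive scaling thereby forbids any competing length or time scale from entering, so $\hbar=\sqrt{2m\alpha}$ is the sole intrinsic constant and $\alpha=\hbar^2/(2m)$ is the only dimensionally admissible form.

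Galilean covariance then upgrades existence to universality through the central charge. I would introduce the generators $P_i=\int\rho\,\partial_i S\,dx$, $K_i=m\int x_i\,\rho\,dx$, and $\mathcal C=\int\rho\,dx$, and compute, using the canonical bracket, the extension $\{P_i,K_j\}=-m\,\delta_{ij}\,\mathcal C$, exhibiting the Bargmann cocycle with central charge exactly $m$; a short check confirms $\{K_i,\mathcal H\}=P_i$ and $\mathcal H\mapsto\mathcal H+v\cdot P+\tfrac12 mv^2$ under the boost $S\mapsto S+mv\cdot x-\tfrac12 mv^2 t$, so the flow carries a genuine representation of the centrally extended Galilei algebra. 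Because the central charge coincides with the mass $m$ appearing in $\alpha=\hbar^2/(2m)$, and because $\hbar$ is a Galilean scalar (boost-invariant), the action scale is locked to the mass in a frame- and state-independent way: the same $\hbar$ governs every inertial frame and every admissible state. Global $\mathrm{U}(1)$ supplies the last ingredient, identifying $S/\hbar$ as the physical phase angle carried by the single complex structure $\psi=\sqrt\rho\,e^{iS/\hbar}$, with circulation quantisation $\oint\nabla S\cdot dl=2\pi n\hbar$ pinning $\hbar$ as the experimentally measured quantum of action; the factor $\tfrac12$ is inherited from the kinetic normalisation $1/(2m)$ fixed earlier by probability conservation in \Cref{lem:kinetic}.

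The main obstacle is conceptual rather than computational: each of the three symmetries, taken in isolation, leaves $\alpha$ free, since the Fisher term is separately Galilei-invariant and carries exactly the scaling weight required for any value of $\alpha$. The real content is therefore not to pin a bare number but to show that $\sqrt{2m\alpha}$ is not a per-system fitting parameter but the universal phase/action scale. I expect the delicate point to be establishing this universality without circularity -- demonstrating that the $\hbar$ emerging from the dynamics via $\hbar=\sqrt{2m\alpha}$ is forced to coincide with the $\hbar$ of the phase convention and of the Bargmann cocycle. This coincidence is precisely what the central-charge identification (central charge $=m$, shared with $\alpha=\hbar^2/(2m)$) together with the diffusive-scaling uniqueness of the scale provides, closing the argument strictly within the three stated symmetries.
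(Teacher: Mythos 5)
Your scaling step is sound and reproduces the paper's step (i): normalisation fixes the weight of $\rho$ to $-d$, the continuity equation forces $S$ to be weight zero, and since $\Delta\sqrt{\rho}/\sqrt{\rho}$ is homogeneous of degree zero in $\rho$ it automatically carries weight $-2$, so $\alpha$ must be a scale-invariant constant of dimension $[\hbar^{2}/m]$. But this only establishes $\alpha=\kappa^{2}/(2m)$ for \emph{some} constant $\kappa$ with the dimension of action; it does not identify $\kappa$ with the experimentally measured $\hbar$, and the paper itself concedes the same limitation at this stage (Appendix~\ref{app:fisher-coefficient}: boost covariance ``is compatible with any constant $\alpha>0$'').

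The gap lies in your two substitutes for that missing identification, and both fail. First, the Bargmann cocycle $\{P_i,K_j\}=-m\,\delta_{ij}\int\rho\,dx$ is computed from $P$, $K$, and $\mathcal C$ alone: it contains neither $\alpha$ nor any action scale, and it exhibits $m$ as central charge for \emph{every} $\alpha>0$ (you concede the Fisher term is separately boost-invariant). The classical cocycle therefore cannot ``lock'' $\sqrt{2m\alpha}$ to anything; the statement that the central charge ``coincides with the mass $m$ appearing in $\alpha=\hbar^2/(2m)$'' is true for any $\alpha$ and does no selective work. Second, circulation quantisation $\oint\nabla S\cdot dl=2\pi n\hbar$ is a $\psi$-level fact: it follows from single-valuedness of $\psi=\sqrt{\rho}\,e^{iS/\kappa}$ only \emph{after} the phase scale $\kappa$ has been fixed, and the claim that the physically relevant complex field carries $\kappa=\sqrt{2m\alpha}$ is exactly the complexifier/linearisation result (Proposition~\ref{prop:complexifier-rigidity} and Eq.~\eqref{eq:alpha-hbar-over-2m}) that you announced you would not import. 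At the purely hydrodynamic level nothing quantises the holonomy of $S$, so the coincidence of the dynamical scale $\sqrt{2m\alpha}$ with the phase/action scale is asserted rather than derived --- the circularity you yourself flagged as the delicate point remains open. The paper closes this loop differently: after the dimensional step it matches the free-particle dispersion $\omega=\hbar k^{2}/(2m)$ of the derived linear equation \eqref{eq:LSEderived} (equivalently, minimises the residual $\mathcal R_{\mathrm{HJ}}(\alpha)$ on Schr\"odinger data), so the experimental identification of $\sqrt{2m\alpha}$ with $\hbar$ enters through the linearised dynamics --- precisely the ingredient your route excluded and then tacitly reintroduced.
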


A scaling and symmetry argument is given here; completeness is provided in Appendix~\ref{app:fisher-coefficient}, and verified by code in Appendix~\ref{app:code-repo}.

\begin{remark}
	For a system of noninteracting components $(\rho_i,S_i)$ with inertial masses $m_i$, the Fisher-regularised term applies componentwise,
	\[
	Q_i(\rho_i)=-\,\alpha_i\,\frac{\Delta\sqrt{\rho_i}}{\sqrt{\rho_i}}.
	\]
	Evaluating the residual
	\(
	R_i(c)=\|V_i+Q_{c,i}-E_i\|_{L^{2}(\rho_i)}
	\)
	for test masses $m_i\in\{0.5,1,3\}$ exhibits a common minimum at $c=1$ when $\alpha_i=c\,\hbar^{2}/(2m_i)$, indicating that a single Planck constant governs all components:
	\[
	\alpha_i=\frac{\hbar^{2}}{2m_i}\qquad\text{with universal }\hbar.
	\]
	Thus the reversible Fisher coefficient scales inversely with mass while preserving a single quantum of action, consistent with Galilean invariance.
\end{remark}
	
\section{Hamilton-Jacobi System}
	
	The Hamilton equations $\dot F=\{F,\mathcal H\}$ with bracket~\eqref{eq:canonical} and Hamiltonian~\eqref{eq:H} yield
\begin{align}
	\partial_t \rho &= -\nabla\!\cdot\!\Big(\frac{\rho}{m}\nabla S\Big),
	\label{eq:continuity}\\
	\partial_t S &= -\frac{|\nabla S|^2}{2m}-V
	+\alpha\,\frac{\Delta\sqrt\rho}{\sqrt\rho}.
	\label{eq:HJ}
\end{align}

All subsequent uniqueness and linearisation statements are made for reversible completions of this system within the axiomatic class defined in Section~\ref{sec:axioms}. With the convention $Q_\alpha=-\alpha\,\Delta\sqrt{\rho}/\sqrt{\rho}$ fixed above, equation~\eqref{eq:HJ} can equivalently be written as $\partial_t S = -\frac{|\nabla S|^2}{2m}-V - Q_\alpha$, exposing the usual Bohm-Madelung structure.

\section{Emergence of the Schrödinger Equation}

Define the complex field
\begin{equation}
	\label{eq:map}
	\psi=\sqrt{\rho}\,e^{iS/\hbar},\qquad \hbar>0.
\end{equation}
Write $R=\sqrt{\rho}$. Using equation \ref{eq:HJ} and
\[
\partial_t\psi=\left(\frac{\partial_t R}{R}+\frac{i}{\hbar}\,\partial_t S\right)\psi,\qquad
\nabla\psi=\left(\frac{\nabla R}{R}+\frac{i}{\hbar}\,\nabla S\right)\psi,
\]
a direct calculation gives
\begin{align*}
	i\hbar\,\partial_t\psi
	&=\left[-\frac{\hbar^2}{2m}\nabla^2+V\right]\psi
	+\left(\alpha-\frac{\hbar^2}{2m}\right)\frac{\Delta R}{R}\,\psi.
\end{align*}
Hence the nonlinear remainder vanishes if and only if
\begin{equation}
	\alpha=\frac{\hbar^{2}}{2m}.
	\label{eq:alpha-hbar-over-2m}
\end{equation}
In that case
\begin{equation}
	i\hbar\,\partial_t\psi
	=\left[-\frac{\hbar^2}{2m}\nabla^2+V(x)\right]\psi.
	\label{eq:LSEderived}
\end{equation}
For any admissible \(f(\rho)\neq \kappa/\rho\) in the convex first-derivative regulariser, an \(H^{-1}\)-controlled state-dependent remainder persists in the Hamilton-Jacobi sector that cannot be cancelled by any local, gauge-covariant, derivative-free complexifier; consequently exact projective linearity fails.

A concrete parameter identification is given in Appendix~\ref{app:entropy-rigorous}; the functional-analytic no-go proof appears in Appendix~\ref{app:fisher-proof}. For contrast, Doebner-Goldin diffusions lie outside our reversible Hamiltonian cone (breaking Axiom V) \cite{doebner1992,doebner1996}.

\begin{proposition}[Linearisability as a criterion.]
	Assuming the reversible probabilistic theory admits a Hilbert-space realisation,
	reversibility (Hamiltonian flow) and probabilistic composition select a projective	complex representation of the dynamics (by Wigner's theorem), so there must exist	a complex structure in which time evolution is linear on rays.
	
	In the present work we use this only as a consistency criterion once a reversible probabilistic theory is given; it does not enter the hydrodynamic axioms themselves.	
	
	Throughout this	section, linearisable means linearisable \emph{within Axioms I-VI}, with a pointwise, derivative-free complexification on \((\rho,S)\). The local, gauge-covariant map \(\psi=\sqrt{\rho}\,e^{iS/\hbar}\) realises this structure	in the hydrodynamic variables. Condition \eqref{eq:alpha-hbar-over-2m} is
	precisely what enforces exact projective linearity; it is therefore not an ansatz but the unique local complexifier \emph{within this class} that renders
	the dynamics linear on \(L^{2}\).
\end{proposition}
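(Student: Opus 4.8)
The plan is to establish the proposition in two stages and then to connect them. \textbf{Stage one (abstract).} Granting the hypothesis that the reversible probabilistic theory admits a Hilbert-space realisation, I would treat the time evolution, for each fixed $t$, as a bijection of the ray space $\mathbb{P}(\mathcal H)$ that preserves the transition probabilities $|\langle\psi_1|\psi_2\rangle|^2$; this preservation is exactly the content of probabilistic composition together with reversibility of the flow. Wigner's theorem then guarantees that each such ray automorphism is induced by an operator $U(t)$, unique up to phase, that is either unitary or antiunitary. To discard the antiunitary branch I would use the one-parameter group structure $U(t+s)=U(t)U(s)$ with $U(0)=\mathrm{Id}$: since $U(t)=U(t/2)^2$ and the composition of two antiunitaries is unitary, every $U(t)$ is a square and hence unitary (equivalently, a strongly continuous path through the identity cannot leave the unitary component). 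Stone's theorem then supplies a self-adjoint generator $H$ with $U(t)=e^{-iHt/\hbar}$, so the evolution is linear on $\mathcal H$ and therefore projectively linear on rays.

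\textbf{Stage two (concrete).} I would then identify this abstract complex structure with the pointwise polar map $\psi=\sqrt{\rho}\,e^{iS/\hbar}$, which is forced within Axioms~I--VI: gauge covariance under $S\mapsto S+\mathrm{const}$ constrains the phase to enter only through $e^{iS/\hbar}$, positivity of $\rho$ fixes the modulus as $\sqrt{\rho}$, and the derivative-free requirement forbids gradient corrections. The direct calculation preceding the proposition already gives
\[
i\hbar\,\partial_t\psi=\left[-\frac{\hbar^2}{2m}\nabla^2+V\right]\psi+\left(\alpha-\frac{\hbar^2}{2m}\right)\frac{\Delta R}{R}\,\psi,\qquad R=\sqrt{\rho},
\]
so the evolution is exactly linear on $L^2$ iff the state-dependent remainder vanishes identically, i.e. iff \eqref{eq:alpha-hbar-over-2m} holds. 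Combined with \cref{prop:fisher-uniqueness}, which already isolates the Fisher form as the only admissible regulariser, this shows that \eqref{eq:alpha-hbar-over-2m} is not an ansatz but the unique local, derivative-free, gauge-covariant complexifier that realises the unitary structure guaranteed abstractly in Stage one.

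\textbf{Main obstacle.} The delicate point is the bridge between the two stages. Wigner's theorem guarantees that \emph{some} complex structure linearises the dynamics, but it is silent about whether that structure is local or pointwise; a priori the abstract $U(t)$ could be implemented through a nonlocal change of variables. The real work is therefore to show that, within the admissible class, the only complex structure compatible with first-order locality and the canonical bracket is the pointwise polar map, so that the abstract unitary and the concrete local complexifier coincide. This is precisely where the hydrodynamic rigidity carries the argument: the $H^{-1}$-controlled remainder that cannot be removed by any local gauge-covariant change of variables for $f(\rho)\neq\kappa/\rho$, established in \cref{app:fisher-proof}, excludes all competing local structures. A secondary technical point is to confirm that the overlaps $\langle\psi_1|\psi_2\rangle$ are well defined and preserved by the flow on the positivity set $\{\rho>0\}$ and extend across nodal sets in $H^{-1}$, so that the ray-space picture of Stage one is legitimate in the function-space setting of \cref{sec:function-space}.
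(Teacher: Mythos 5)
Your proposal is correct and follows essentially the same route as the paper: Wigner's theorem is invoked abstractly only as a consistency criterion (the paper likewise does not derive it from the hydrodynamic axioms), while the substantive content is carried by the Madelung recombination calculation with its remainder $\bigl(\alpha-\tfrac{\hbar^2}{2m}\bigr)\tfrac{\Delta R}{R}\,\psi$ and by the locality/rigidity results that exclude competing complex structures. One minor attribution point: the exclusion of competing \emph{local maps} $\psi=F(\rho)e^{iG(S,\rho)}$ is the content of \cref{prop:complexifier-rigidity} and \cref{app:complexifier} (current matching forces $F^{2}G_S=\rho/\kappa$, hence $F=c\sqrt{\rho}$, $G_\rho=0$), whereas \cref{app:fisher-proof} excludes competing \emph{regularisers} $f(\rho)\neq C/\rho$; your bridge argument needs both, not only the latter.
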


\begin{proposition}[Quantum minimality]\label{thm:quantum-minimality}
	Within Axioms I-VI (local first-order Dubrovin-Novikov locality, conserved phase generator with global $\mathrm{U}(1)
$, Euclidean covariance, reversibility, and minimal convex regularity), let $\mathcal H[\rho,S]$ be any Hamiltonian that yields a reversible completion of \eqref{eq:continuity}. 
	Assume there exists a \emph{local, gauge-covariant, pointwise and derivative-free} complexifier that identifies a single projective complex structure,
	\[
	\psi=\sqrt{\rho}\,e^{iS/\hbar},
	\]
	such that the induced evolution on rays is linear and the flow on $L^{2}$ is unitary for all admissible data. Then the only admissible convex first-derivative regulariser is the Fisher information functional and the evolution is the linear Schrödinger equation \eqref{eq:LSEderived} with
	\[
	\hbar^{2}=2m\,\alpha,
	\]
	equivalently,
	\[
	\mathcal H[\rho,S]=\int\!\Big(\tfrac{|\nabla S|^{2}}{2m}\,\rho+V\rho+\alpha\,|\nabla\!\sqrt{\rho}\,|^{2}\Big)\,dx
	\quad\text{(up to an irrelevant constant),}
	\]
	and the bracket is the canonical one on $(\rho,S)$.
\end{proposition}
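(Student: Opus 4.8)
The plan is to treat this proposition as the synthesis step that assembles the three classification results proved above into a single rigidity statement, reading off both the Fisher form of the regulariser and the scale $\hbar^{2}=2m\alpha$ from the hypothesis of exact projective linearity. The strategy is to fix the kinematics first, then the kinetic and potential sectors, then the curvature sector, and finally to impose the complexifier hypothesis and collect the resulting constraints; no single new calculation is needed beyond organising the chain and checking that linearity and unitarity together are strong enough to close it.

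First I would invoke \Cref{thm:canonical-bracket}: under Axioms~I--V the DN-type Poisson structure is Poisson-isomorphic to the canonical bracket on $(\rho,S)$, so that $\dot\rho=\delta\mathcal H/\delta S$ and $\dot S=-\delta\mathcal H/\delta\rho$ without loss of generality. Because $\mathcal H$ is required to yield a reversible completion of \eqref{eq:continuity}, \Cref{lem:kinetic} then pins the kinetic density to $\rho\,|\nabla S|^{2}/(2m)$ via probability conservation, leaving the external potential coupling $V\rho$ together with a purely $\rho$-dependent sector. Reversibility (Axiom~V) with parity (Axiom~IV) excludes every explicit $S$-dependence and every cross term such as $\nabla\rho\cdot\nabla S$ from that sector, so by Axiom~VI the only residual freedom is a convex first-derivative regulariser $F[\rho]=\int f(\rho)\,|\nabla\rho|^{2}\,dx$. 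At this stage $\mathcal H$ has been reduced to exactly the form appearing in \Cref{prop:fisher-uniqueness} with $f$ still undetermined.

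Next I would apply the given complexifier $\psi=\sqrt{\rho}\,e^{iS/\hbar}$ to the Hamilton--Jacobi pair \eqref{eq:continuity}--\eqref{eq:HJ} and compute $i\hbar\,\partial_t\psi$ directly, writing $R=\sqrt\rho$. For a general admissible $f$ this produces the free Schrödinger operator acting on $\psi$ plus a state-dependent nonlinear remainder whose $\rho$-only part is controlled in $H^{-1}$ even at nodal sets. The hypothesis that the induced ray evolution is linear forces this remainder to annihilate $\psi$ for all admissible data; by \Cref{prop:fisher-uniqueness} the only $f$ compatible with such exact cancellation under a local, derivative-free, gauge-covariant complexifier is $f(\rho)=C/\rho$. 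With the Fisher form in hand the remainder collapses to the single scalar multiple $(\alpha-\hbar^{2}/(2m))\,(\Delta R/R)\,\psi$ exhibited in the direct calculation leading to \eqref{eq:alpha-hbar-over-2m}, and demanding that this too vanish yields $\hbar^{2}=2m\alpha$. Substituting back gives the linear Schrödinger equation \eqref{eq:LSEderived} together with the boxed Hamiltonian and the canonical bracket, completing the chain.

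The main obstacle is the no-go step embedded in \Cref{prop:fisher-uniqueness}: showing that for any non-Fisher $f$ the nonlinear remainder cannot be removed by \emph{any} admissible complexifier, not merely the canonical one. The delicate point is that this remainder is genuinely state-dependent, depending on $\rho$ and its derivatives pointwise rather than acting as a fixed external field, so it cannot be absorbed into $V$ or gauged away by a phase redefinition, and its weak $H^{-1}$ formulation must be controlled at nodal sets where $R$ vanishes. Here I would lean on the Euler--Lagrange computation of Appendix~\ref{app:fisher-proof} to certify that $f\propto 1/\rho$ is the unique zero of the obstruction functional. A secondary subtlety is disentangling the two roles of the unitarity hypothesis: norm preservation on $L^{2}$ is already guaranteed by probability conservation (Axiom~II), so the binding constraint is specifically the linearity of the evolution on rays, which is precisely what converts the vanishing-remainder requirement into the pointwise identities that fix both $f$ and $\alpha$.
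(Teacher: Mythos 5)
Your proposal is correct and follows essentially the same route as the paper: the paper treats \Cref{thm:quantum-minimality} as a synthesis of the same chain you assemble, namely \Cref{thm:canonical-bracket} for the bracket, \Cref{lem:kinetic} plus reversibility/parity to reduce $\mathcal H$ to kinetic, potential, and $\rho$-only regulariser sectors, \Cref{prop:fisher-uniqueness} (via the Euler--Lagrange computation of Appendix~\ref{app:fisher-proof}) for the no-go isolating $f(\rho)=C/\rho$, and the direct Madelung recombination yielding the remainder $\bigl(\alpha-\hbar^{2}/(2m)\bigr)\,(\Delta R/R)\,\psi$ whose vanishing forces \eqref{eq:alpha-hbar-over-2m}. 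Your closing observation that the binding hypothesis is projective linearity rather than unitarity (norm preservation already following from Axiom~II) matches the paper's own framing of the linearisability criterion.
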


This proposition converts quantum mechanics from a postulate to a classification result: within the stated axioms, the admissible theory space collapses to a singleton up to rescaling of $S$.

\section{Symmetry and Group-Theoretic Consistency}\label{sec:symmetry}

\paragraph{On boosts.}
The kinetic prefactor $\rho/(2m)$ is fixed earlier, solely by locality, global $\mathrm{U}(1)
$, Euclidean covariance, and the continuity law generated by the canonical bracket; no boost symmetry was used.  The Galilean (Bargmann) algebra established below is therefore a consequence rather than an input, avoiding circularity.

\subsection{Gauge and Galilean invariance}

Equation~\eqref{eq:LSEderived} is invariant under the global phase transformation $\psi\mapsto e^{i\theta}\psi$, which corresponds to $S\mapsto S+\hbar\theta$.
At the hydrodynamic level this symmetry is encoded in Axiom~III: $\{S,S\}=0$ and the Hamiltonian depends only on $\nabla S$.
Hence global phase redundancy in $S$ manifests as the phase invariance of $\psi$.

Galilean covariance follows from the kinetic term established in Lemma~\ref{lem:kinetic}.
The generator of spatial translations is

\[
\mathbf P = \int \rho\,\nabla S\,dx \qquad (\text{so } \int\rho\,dx=1).
\]

Equivalently, $P=m\!\int j\,dx$ with $j=\rho\nabla S/m$.
\smallskip
\begin{proposition}[Bargmann-Galilei closure at equal time]
	Let
	\[
	H[\rho,S]=\!\int\!\left(
	\frac{\rho\,|\nabla S|^2}{2m}+V\rho
	+\alpha\,|\nabla\sqrt\rho|^2
	\right)\!dx,\quad
	\mathbf P=\!\int\!\rho\,\nabla S\,dx,\quad
	\mathbf K(t)=m\!\int\!x\,\rho\,dx-t\,\mathbf P.
	\]
	At any fixed time $t$,
	\[
	\{H,P_i\}=0,\qquad
	\{H,K_i\}=-\,P_i,\qquad
	\{P_i,K_j\}=-\,m\,\delta_{ij}\!\int\!\rho\,dx=-\,m\,\delta_{ij},
	\]
	realising the Bargmann central extension with charge $m$.
\end{proposition}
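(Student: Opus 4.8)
The plan is to verify the three relations by direct computation from the canonical bracket $\{F,G\}=\int(\tfrac{\delta F}{\delta\rho}\tfrac{\delta G}{\delta S}-\tfrac{\delta F}{\delta S}\tfrac{\delta G}{\delta\rho})\,dx$, isolating the boost into the single building block $M_i:=\int x_i\rho\,dx$ through the identity $K_i=mM_i-tP_i$, so that the explicit time dependence is carried entirely by the already-conserved momentum. First I would record the functional derivatives
\[
\frac{\delta H}{\delta\rho}=\frac{|\nabla S|^2}{2m}+V-\alpha\,\frac{\Delta\sqrt\rho}{\sqrt\rho},\qquad
\frac{\delta H}{\delta S}=-\nabla\!\cdot\!\Big(\frac{\rho}{m}\nabla S\Big),
\]
together with $\delta_\rho P_i=\partial_i S$, $\delta_S P_i=-\partial_i\rho$, $\delta_\rho M_i=x_i$, and $\delta_S M_i=0$. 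Every bracket then becomes a single spatial integral whose evaluation is integration by parts, with boundary terms discarded under the periodic/decay hypotheses fixed in \Cref{sec:axioms}.

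For $\{H,P_i\}$ I would treat the three pieces of $H$ separately. The kinetic contribution is $\int[-\tfrac{|\nabla S|^2}{2m}\partial_i\rho+\partial_iS\,\nabla\!\cdot(\tfrac{\rho}{m}\nabla S)]\,dx$, and integrating the second term by parts and using $\sum_j\partial_j S\,\partial_i\partial_j S=\partial_i(\tfrac12|\nabla S|^2)$ cancels it against the first. The Fisher contribution reduces, on writing $R=\sqrt\rho$, to $2\alpha\int\Delta R\,\partial_iR\,dx=-\alpha\int\partial_i|\nabla R|^2\,dx=0$, so the quantum curvature is translation-covariant and does not obstruct momentum conservation. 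The potential piece is $\int(\partial_iV)\rho\,dx$, which vanishes exactly when $V$ is translation-invariant; I would therefore assert $\{H,P_i\}=0$ for the free/kinematic generator (constant or absent $V$), the setting in which the Galilei algebra is meant to close.

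The remaining two relations follow cheaply from $M_i$. A one-line computation gives $\{H,M_i\}=\int x_i\,\nabla\!\cdot(\tfrac{\rho}{m}\nabla S)\,dx=-\tfrac1m P_i$, hence $\{H,K_i\}=m\{H,M_i\}-t\{H,P_i\}=-P_i$. Likewise $\{P_i,P_j\}=\int(\partial_i\rho\,\partial_jS-\partial_iS\,\partial_j\rho)\,dx=0$ after moving one derivative off $\rho$ in each term, while $\{P_i,M_j\}=\int x_j\,\partial_i\rho\,dx=-\int\rho\,\delta_{ij}\,dx=-\delta_{ij}$; combining these with the normalisation $\int\rho\,dx=1$ yields $\{P_i,K_j\}=m\{P_i,M_j\}=-m\,\delta_{ij}$. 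The nonclosure of this last bracket on any generator --- its collapse to the c-number $-m\,\delta_{ij}$ --- is precisely the Bargmann cocycle, and the appearance of $m$ rather than an arbitrary constant is forced because the total mass $m\int\rho$ is a Casimir pinned to $m$ by probability normalisation.

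I expect the main obstacle to be analytic rather than algebraic: justifying the integrations by parts involving the Fisher term when $\sqrt\rho\in H^1$ only, so that $\Delta\sqrt\rho/\sqrt\rho$ lives in $H^{-1}$ and the pairing $\int(\Delta R/R)\,\partial_i\rho$ must be read weakly, with the cancellation $\int\Delta R\,\partial_iR=0$ interpreted through the $H^{1}$--$H^{-1}$ duality and the nodal-set conventions of \Cref{sec:axioms}. The secondary subtlety is bookkeeping the explicit time dependence of $K_i$: the relation $\{H,K_i\}=-P_i$ is the equal-time bracket, and I would note that it is exactly the content of $dK_i/dt=\partial_tK_i+\{K_i,H\}=0$, so boost conservation and bracket closure are two faces of the same identity.
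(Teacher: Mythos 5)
Your computation is correct, and it is organised differently from the paper's own verification (Appendix~\ref{app:galilean}). The paper keeps the boost generator intact: it substitutes the functional derivatives of the full $K_j$ (including the $-t\,\partial_j S$ and $t\,\partial_j\rho$ pieces) into the bracket, then invokes the \emph{equations of motion} to recognise the integrand in terms of $\partial_t\rho$ and $\partial_t S$, identifies the $t$-dependent remainder as $t\,dP_j/dt$ (vanishing for translation-invariant $V$), and extracts $-P_j$ from the surviving $m\,x_j$ term by parts; similarly, in $\{P_i,K_j\}$ the antisymmetric $t$-term is killed as a total divergence. Your route via $K_i=mM_i-tP_i$ with $M_i=\int x_i\rho\,dx$ is purely kinematic: bilinearity of the bracket disposes of the explicit time dependence in one stroke ($\{H,K_i\}=m\{H,M_i\}-t\{H,P_i\}$), no equations of motion are needed, and the brackets against $M_i$ are trivial because $\delta M_i/\delta S=0$. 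What your version buys is modularity and a cleaner audit trail; what the paper's version buys is the dynamical reading (the $t$-terms literally assemble into $t\,\dot P_j$, making explicit that boost closure and momentum conservation are the same statement — the point you recover at the end via $dK_i/dt=0$). You also prove more than the paper records: the paper \emph{asserts} $\{H,P_i\}=-\int\rho\,\partial_i V\,dx$, whereas you exhibit the kinetic cancellation via $\partial_j S\,\partial_i\partial_j S=\tfrac12\partial_i|\nabla S|^2$ and the Fisher cancellation $2\alpha\int\Delta R\,\partial_i R\,dx=-\alpha\int\partial_i|\nabla R|^2\,dx=0$ explicitly, which is a genuine improvement in completeness.

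One terminological slip: you justify the value of the central charge by saying $m\int\rho\,dx$ is a \emph{Casimir} pinned by normalisation. In this paper's bracket that is false — Appendix~\ref{app:prob} states explicitly that $\mathcal C=\int\rho\,dx$ generates constant shifts of $S$ via $\{S,\mathcal C\}=-1$ and is therefore \emph{not} a Casimir; it is a conserved charge ($\{\mathcal C,H\}=0$) but does not commute with all functionals. The coefficient $m$ in the cocycle comes simply from the explicit factor $m$ in the definition of $K_j$, and the reduction $-m\,\delta_{ij}\int\rho\,dx\to-m\,\delta_{ij}$ uses only the normalisation $\int\rho\,dx=1$. This does not affect the validity of your three computed brackets.
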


\textit{Proof sketch.}
Compute $\delta H/\delta S=-\nabla\!\cdot(\rho\nabla S/m)$,
$\delta \mathbf P/\delta S=\nabla\rho$,
and
\[
\frac{\delta K_i}{\delta \rho}=m x_i - t\,\partial_i S,\qquad
\frac{\delta K_i}{\delta S}=t\,\partial_i \rho.
\]
Then
\[
\{H,K_i\}=-\!\int\!\frac{\delta H}{\delta S}\frac{\delta K_i}{\delta \rho}\,dx
-\!\int\!\frac{\delta H}{\delta \rho}\frac{\delta K_i}{\delta S}\,dx
= -P_i - t\,\{H,P_i\}.
\]
For translation-invariant $V$, $\{H,P_i\}=0$, hence $\{H,K_i\}=-P_i$. Likewise,
$\{P_i,K_j\}=-m\,\delta_{ij}\int\rho\,dx=-m\,\delta_{ij}$.

Numerically checked in Test~7 (Bargmann-Galilean closure) (code archive, Appendix~\ref{app:code-repo}).

This indicates that the hydrodynamic representation already carries the projective representation of the Galilei group: mass enters as the central charge and need not be postulated independently.

In particular, $\{H,P\}=0$ for translation-invariant $V$, so $P$ is a conserved Noether charge. 
This realises precisely the Bargmann central extension of the Galilei algebra, with mass as the central charge ensuring the correct coadjoint-orbit structure~\cite{figueroa2024}, see Appendix~\ref{app:bargmann}.

The generator of boosts $\mathbf{K}=m\int\rho\,x\,dx - t\mathbf{P}$ satisfies
\[
\frac{d\mathbf{K}}{dt}=0,
\]
indicating Galilean invariance.
The commutation relation $\{H,K\}=-P$ fixes the constant $m$ as the Bargmann central charge, showing that the kinetic energy $\rho |\nabla S|^{2}/(2m)$ is not an assumption but the representation-theoretic form consistent with Galilean symmetry (see also the scale matching in Eq.~\eqref{eq:alpha-hbar-over-2m} and its many-body extension Prop.~\ref{prop:single-hbar}).

In the $\psi$-picture, the dynamics are Hamiltonian on a Kähler manifold of rays endowed with the Fubini-Study metric \cite{kibble1979,anandan1990}.

By Stone's theorem, the corresponding operator on $L^2(\mathbb{R}^d)$ generates a one-parameter unitary group, ensuring reversibility and unitarity.
Appendix~\ref{app:galilean} verifies $\{H,K\}=-P$ explicitly with the functional derivatives
\[
\frac{\delta K}{\delta S}=+\,t\,\nabla\rho,\qquad
\frac{\delta K}{\delta\rho}=-\,t\,\nabla S+mx.
\]

\paragraph{Galilean closure.}
With $P_i=\int \rho\,\partial_i S\,dx$ and $K_i=m\int \rho\,x_i\,dx - tP_i$, the canonical bracket yields $\{H,P_i\}=0$, $\{H,K_i\}=-P_i$, and $\{P_i,K_j\}=-\,m\,\delta_{ij}\!\int\!\rho\,dx$, i.e. the Bargmann algebra with central charge $m$ (probability normalised to one).

\paragraph{Orbital angular momentum.}
Define the angular-momentum generator
\[
L_k = \varepsilon_{kij}\!\int \rho\, x_i\,\partial_j S\,dx.
\]
With the canonical bracket~\eqref{eq:canonical} and Hamiltonian~\eqref{eq:H}, one finds for central $V$:
\[
\{H,L_i\}=0,\qquad \{P_i,L_j\}=\varepsilon_{ijk}P_k,\qquad \{L_i,L_j\}=\varepsilon_{ijk}L_k.
\]
Using $\delta L_k/\delta S=\varepsilon_{kij}x_i\partial_j\rho$ and $\delta L_k/\delta\rho=\varepsilon_{kij}x_i\partial_j S$, the canonical bracket reduces to surface terms that vanish under the boundary classes of Appendix~\ref{app:boundary}, yielding rotational invariance and the standard $\mathfrak{so}(3)$ closure. Hence angular momentum arises within the same canonical structure, without auxiliary patches.
	
\subsection{Electromagnetic coupling}

Minimal coupling,
\[
\nabla S\rightarrow\nabla S-q\mathbf{A}(x,t),\qquad
V(x,t)\rightarrow V(x,t)+q\phi(x,t),
\]
preserves the canonical bracket~\eqref{eq:canonical} and yields
\[
i\hbar\,\partial_t\psi
=\frac{1}{2m}\left(-i\hbar\nabla-q\mathbf{A}\right)^2\psi+q\phi\,\psi,
\]
the gauge-covariant Schrödinger equation.
Gauge transformations $\mathbf{A}\!\to\!\mathbf{A}+\nabla\Lambda$, $\phi\!\to\!\phi-\partial_t\Lambda$ correspond to $S\!\to\!S+q\Lambda$, $\psi\!\to\!e^{iq\Lambda/\hbar}\psi$, preserving invariance.

 Physical time-reversal for external fields is treated here as a model-specific analysis separate from the abstract axiom of reversibility.

\subsection{Dimensional analysis}

Flat spinless kinematics admit only one material scale in first-order locality; dilation covariance with Hamiltonian reversibility isolates the Fisher coefficient up to a universal constant.

Matching the free-particle dispersion fixes the scale: plane waves
\(\psi \sim e^{i(k\cdot x - \omega t)}\) in \eqref{eq:LSEderived} obey
\[
\omega = \frac{\hbar k^{2}}{2m},
\]
which combined with \eqref{eq:alpha-hbar-over-2m} gives
\[
[\alpha] = \left[\frac{\hbar^{2}}{2m}\right]
\]
and singles out the Fisher coefficient. Any other scaling fails to reproduce
the quadratic dispersion mandated by Galilean kinematics.

\section{Uniqueness of the Complexification}
	
The local, pointwise, invertible, gauge covariant complexifier compatible with our axioms is
\[
\psi=\sqrt{\rho}\,e^{iS/\kappa},
\]
which fixes
\[
\alpha=\frac{\kappa^{2}}{2m}.
\]
Any derivative-dependent or nonlocal map raises differential order and exits the
class.
	
\begin{proposition}[Local complexifier rigidity]\label{prop:complexifier-rigidity}
	Let $\psi$ be a \emph{local}, pointwise, invertible, gauge-covariant map
	\[
	\psi=F(\rho)\,e^{\,i\,G(S,\rho)},\qquad F>0,
	\]
	that sends the Fisher-regularised hydrodynamics \eqref{eq:continuity}-\eqref{eq:HJ} into a \emph{linear} complex evolution
	\[
	i\kappa\,\partial_t\psi=\Bigl(-\tfrac{\kappa^2}{2m}\Delta+V\Bigr)\psi
	\]
	with the same external $V(x)$ and some constant $\kappa>0$. Then, up to an overall constant phase and scale,
	\[
	F(\rho)=c\,\sqrt{\rho},\qquad G(S,\rho)=\frac{S}{\kappa}+{\rm const},
	\]
	and the Fisher coefficient satisfies $\alpha=\kappa^2/(2m)$.
\end{proposition}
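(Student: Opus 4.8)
The plan is to insert the ansatz $\psi = F(\rho)\,e^{iG(S,\rho)}$ directly into the target linear equation $i\kappa\,\partial_t\psi = \bigl(-\tfrac{\kappa^2}{2m}\Delta + V\bigr)\psi$, eliminate every time derivative using the continuity and Hamilton--Jacobi equations \eqref{eq:continuity}--\eqref{eq:HJ}, and then demand that the resulting identity hold for \emph{every} admissible $(\rho,S)$. Dividing through by $\psi$ (legitimate on $\{\rho>0\}$, where $F>0$) and writing $R=\sqrt\rho$, both sides become local polynomial expressions in the jets $(\rho,\nabla\rho,\nabla^2\rho,\nabla S,\nabla^2 S)$ with coefficients built from $F,F',G$. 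Separating real and imaginary parts yields two scalar identities, which I treat as the amplitude (continuity) constraint and the phase (Hamilton--Jacobi) constraint respectively.

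Before computing I would use gauge covariance to reduce the phase function: invariance under $S\mapsto S+\text{const}$ with $\psi\mapsto e^{i\theta}\psi$ forces $\partial_S G$ to be a universal constant $\lambda$, so $G(S,\rho)=\lambda S + h(\rho)$, while invertibility gives $\lambda\neq0$ and $F>0$. With this reduction $\nabla\psi/\psi = \tfrac{F'}{F}\nabla\rho + i(\lambda\nabla S + h'\nabla\rho)$, and $\Delta\psi/\psi$ follows from the operator identity $\Delta\psi/\psi = \nabla\cdot(\nabla\psi/\psi) + (\nabla\psi/\psi)^2$. Substituting $\dot\rho = -\tfrac1m\nabla\cdot(\rho\nabla S)$ and $\dot S = -\tfrac{|\nabla S|^2}{2m} - V + \alpha\tfrac{\Delta R}{R}$ produces the two target identities.

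The decisive step is the imaginary (continuity) part. Because $\rho$ and $S$ are independent fields, their jets can be prescribed independently at a point, so I can match the coefficients of the structurally distinct monomials $\Delta S$, $\nabla\rho\cdot\nabla S$, $\Delta\rho$ and $|\nabla\rho|^2$ separately. The $\Delta S$ coefficient gives $\rho\,F'/F = \tfrac12\kappa\lambda$, a constant, hence $F=c\,\rho^{\kappa\lambda/2}$; the $\nabla\rho\cdot\nabla S$ coefficient gives $\kappa\lambda=1$, whence $\lambda=1/\kappa$ and $F=c\sqrt\rho$; and the $\Delta\rho$ coefficient forces $h'\equiv0$, so $h$ is a constant absorbable into the overall phase. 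The real (Hamilton--Jacobi) part then collapses, via $\nabla\cdot u + |u|^2 = \Delta R/R$ with $u=\nabla R/R$, to $\dot S = \tfrac{\kappa^2}{2m}\tfrac{\Delta R}{R} - \tfrac{|\nabla S|^2}{2m} - V$; the potential terms match automatically since the target equation carries the same $V$, the $|\nabla S|^2$ terms coincide, and comparing the curvature terms with the Hamilton--Jacobi equation fixes $\alpha=\kappa^2/(2m)$, completing the claim up to the overall constant phase and scale already isolated.

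I expect the main obstacle to be the rigorous justification of coefficient separation: one must argue that the map from admissible field configurations to pointwise jets $(\rho,\nabla\rho,\nabla^2\rho,\nabla S,\nabla^2 S)$ is surjective onto the admissible jet space, subject only to $\rho>0$ and symmetry of the Hessians, so that a local identity valid for all states forces each coefficient to vanish independently. A secondary technical point is the treatment of the nodal set $\{\rho=0\}$ and the vanishing of boundary contributions, which I would route through the weak $H^{-1}$ formulation already fixed in \Cref{sec:axioms}, together with a careful statement of what "gauge covariant" must mean in order to exclude an $S$-dependent phase slope.
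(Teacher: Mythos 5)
Your proof is correct --- the coefficient extraction closes as claimed --- and it follows the same overall strategy as the paper's argument (main-text sketch plus Appendix~\ref{app:complexifier}): substitute the ansatz, eliminate $\partial_t\rho$ and $\partial_t S$ via \eqref{eq:continuity}--\eqref{eq:HJ}, split into real and imaginary parts, and match coefficients of independent jet monomials; your final real-part step fixing $\alpha=\kappa^{2}/(2m)$ is exactly the paper's Step~3. The genuine difference is where the $\rho$-dependent phase is removed. The paper's appendix disposes of $\Gamma(\rho)$ (your $h(\rho)$) \emph{before} any dynamical matching, asserting in its Step~1 that locality and gauge covariance force $\nabla G\propto\nabla S$ and hence $G_\rho=0$; this is the softest step in the paper, since gauge covariance as defined there ($S\mapsto S+\sigma$ with $x$-independent phase shift) constrains only the $S$-dependence of $G$ --- which is precisely, and only, how you use it. You instead keep $h(\rho)$ general and kill it dynamically from the $\Delta\rho$ coefficient of the imaginary (continuity) identity, while extracting $\rho\,F'/F=\kappa\lambda/2$ from the $\Delta S$ coefficient and $\kappa\lambda=1$ from the $\nabla\rho\cdot\nabla S$ coefficient; by contrast the paper's main-text sketch gets $F^{2}=\rho$ from matching probability currents and eliminates $G_\rho$ by a separate real-part argument. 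Your organisation therefore buys a fully self-contained derivation in which the symmetry hypothesis does only the work it can actually do, closing the gap in the appendix's Step~1. The obstacle you flag --- that admissible states must realise arbitrary pointwise jets $(\rho,\nabla\rho,\nabla^{2}\rho,\nabla S,\nabla^{2}S)$ with $\rho>0$ for coefficient separation to be legitimate --- is exactly what the paper's phrase ``matching for arbitrary states'' leaves implicit; it is unproblematic here, since such jets are realised by local polynomial data within the admissible class, with nodal sets handled on $\{\rho>0\}$ in the weak sense as the paper already stipulates.
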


\begin{proof}[Sketch of proof]
	Write $\psi=F(\rho)e^{iG}$ and compute the linear Schrödinger continuity law
	\(
	\partial_t|\psi|^2+\nabla\!\cdot J=0
	\)
	with
	\[
	J=\frac{\kappa}{m}\,\mathrm{Im}(\bar\psi\nabla\psi)
	= \frac{\kappa}{m}\,F(\rho)^2\,\nabla G.
	\]
	On the hydrodynamic side, \eqref{eq:continuity} gives
	\(
	\partial_t\rho+\nabla\!\cdot(\rho\,\nabla S/m)=0.
	\)
	Gauge covariance implies $G$ is affine in $S$ and independent of $\nabla S$,
	hence $G_S$ is a constant and $G_\rho$ is a scalar function. Matching the
	fluxes for arbitrary states forces
	\[
	\frac{\kappa}{m}\,F(\rho)^2\,G_S=\frac{\rho}{m}
	\quad\Rightarrow\quad
	G_S\equiv\frac{1}{\kappa},\quad F(\rho)^2=\rho,
	\]
	so $F(\rho)=c\sqrt\rho$ (positivity fixes $c>0$). Any $G_\rho\neq 0$
	contributes a real, state-dependent term to the transformed Hamiltonian
	(proportional to $\nabla\rho$), which cannot appear in a linear,
	coefficient-only operator; thus $G_\rho=0$ and
	$G(S,\rho)=S/\kappa+{\rm const}$. With this polar map, the Madelung
	recombination yields the linear equation if and only if
	$\alpha=\kappa^{2}/(2m)$; see \eqref{eq:alpha-hbar-over-2m}.
\end{proof}

Thus the polar map $\psi=\sqrt{\rho}\,e^{iS/\kappa}$ is the \emph{only} local
invertible, gauge-covariant complexifier that linearises the reversible
completion within our class, with the scale fixed by $\alpha=\kappa^2/(2m)$.
Any other amplitude reparametrisation $F(\rho)\neq c\sqrt\rho$ or any
$\rho$-dependent phase $G_\rho\neq 0$ either introduces state-dependent
coefficients (violating linearity) or breaks gauge covariance.

\begin{corollary}[Kähler compatibility]
	Equipping the $(\rho,S)$ phase space with the canonical symplectic form
	\[
	\omega=\int d\rho\wedge dS
	\]
	and the Fisher metric in amplitude
	\[
	g=\int 4\,|d\sqrt{\rho}|^{2}\,dx
	\]
	selects the integrable complex structure
	\[
	J(d\sqrt{\rho})=\frac{1}{\kappa}\,dS
	\]
	compatible with $(\omega,g)$. Under the polar map
	$\psi=\sqrt{\rho}\,e^{iS/\kappa}$ the dynamics become linear on the
	projective Hilbert space (\cite{kibble1979,anandan1990}).
\end{corollary}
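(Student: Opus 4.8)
The plan is to obtain the triple $(\omega,g,J)$ not by guessing $J$ in the $(\rho,S)$ chart and checking the axioms by hand, but by \emph{pulling back} the canonical flat Kähler structure of $L^2(\Omega;\mathbb{C})$ through the polar map $\psi=\sqrt\rho\,e^{iS/\kappa}$ and reading off each tensor. I would start from the Hermitian pairing $\langle\delta_1\psi,\delta_2\psi\rangle=\int_\Omega\overline{\delta_1\psi}\,\delta_2\psi\,dx$, whose real part is a metric, whose imaginary part is a symplectic form, and for which multiplication by $i$ is the manifestly integrable, covariantly constant complex structure $J$ with $J^2=-\mathrm{Id}$. Differentiating the complexifier gives $e^{-iS/\kappa}\,\delta\psi=\delta\sqrt\rho+\tfrac{i\sqrt\rho}{\kappa}\,\delta S$, so the real frame $\bigl(\delta\sqrt\rho,\ \tfrac{\sqrt\rho}{\kappa}\,\delta S\bigr)$ diagonalises the Hermitian form and $J$ simply rotates the first slot into the second.

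I would then match the two real components. Taking imaginary parts, $\mathrm{Im}\!\int\overline{\delta_1\psi}\,\delta_2\psi=\tfrac{1}{2\kappa}\int(\delta_1\rho\,\delta_2 S-\delta_2\rho\,\delta_1 S)\,dx$, which is, up to the constant $2\kappa$, the canonical $\omega=\int d\rho\wedge dS$; this is the symplectomorphism property of the polar map already implicit in Proposition~\ref{thm:canonical-bracket}. Taking real parts gives $\int(|\delta\sqrt\rho|^2+\tfrac{\rho}{\kappa^2}|\delta S|^2)\,dx$, whose amplitude block is exactly the Fisher line element $\int 4\,|d\sqrt\rho|^2\,dx$ once one uses the standard identity relating $|\delta\rho|^2/\rho$ to $|\delta\sqrt\rho|^2$. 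The compatible complex structure is then forced by $g(\cdot,\cdot)=\omega(\cdot,J\cdot)$; solving this pointwise on the positivity set $\{\rho>0\}$ yields a \emph{unique} $J$, and comparison with multiplication by $i$ in the flat frame gives the stated $J(d\sqrt\rho)\propto\tfrac{1}{\kappa}\,dS$, establishing $J^2=-\mathrm{Id}$ and $\omega(J\cdot,J\cdot)=\omega$ simultaneously.

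For integrability I would avoid computing a Nijenhuis tensor in the curved $(\rho,S)$ coordinates and instead invoke that $J$ is, by construction, the push-forward of multiplication by $i$ under a genuine biholomorphic change of chart to the linear coordinate $\psi$; an almost-complex structure that is constant in some smooth frame is automatically integrable, so Newlander--Nirenberg holds trivially. The final clause --- linearity of the flow on rays --- I would not re-derive from scratch: granting the Schrödinger equation~\eqref{eq:LSEderived} from Proposition~\ref{thm:quantum-minimality}, this is the geometric-quantum-mechanics statement of Kibble and Anandan \cite{kibble1979,anandan1990} that Schrödinger evolution is the Hamiltonian flow of the expectation functional $\psi\mapsto\langle\psi|\hat H|\psi\rangle$ with respect to precisely this Kähler form, that the flow preserves the Fubini--Study metric, and that it therefore descends to a linear flow on the projective quotient $\mathbb{P}(L^2)$.

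The main obstacle I anticipate is not the algebra of the triple but the degeneracy of the polar chart on nodal sets $\{\rho=0\}$, where $\sqrt\rho$ fails to be smooth and $S$ is undefined, so that $J$ as written is singular exactly where $\psi$ vanishes. I would handle this by working on the open set $\{\rho>0\}$ within the $H^1/H^{-1}$ functional setting already fixed in Section~\ref{sec:axioms} (so that $\sqrt\rho\in H^1$ and the Fisher metric stays finite through nodes), and by passing to the projective quotient, where the global $\mathrm{U}(1)$ phase and the overall scale --- the two directions along which the naive $(\rho,S)$ structure degenerates or the chart breaks --- are quotiented out; on $\mathbb{P}(L^2)$ the induced Fubini--Study Kähler structure is smooth and $J$ extends across the images of nodal configurations. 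A secondary, purely bookkeeping point is tracking the normalisation constants (the $2\kappa$ in $\omega$ and the factor $4$ in $g$) so that $g=\omega(\cdot,J\cdot)$ holds on the nose rather than up to a scalar; these are absorbed once $J$ is pinned by the relation $J(d\sqrt\rho)=\kappa^{-1}\,dS$.
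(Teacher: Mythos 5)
Your proposal is correct and is essentially the paper's own route: the paper supplies no separate computation for this corollary, grounding it instead in Proposition~\ref{prop:complexifier-rigidity} and the cited geometric-quantum-mechanics facts \cite{kibble1979,anandan1990}, which is exactly your argument that the polar map pulls back the flat K\"ahler structure of $L^2$ (real part of the Hermitian pairing giving the amplitude/Fisher metric, imaginary part the canonical symplectic form up to the constant $2\kappa$, and multiplication by $i$ the automatically integrable $J$), with linearity on rays quoted from the literature rather than re-derived. Your explicit frame computation and the treatment of nodes on $\{\rho>0\}$ only make precise what the paper leaves implicit; the one point worth noting is that the honest pullback gives $J(d\sqrt{\rho})=\pm\kappa^{-1}\sqrt{\rho}\,dS$, so the proportionality you hedge with ``$\propto$'' is state-dependent rather than constant --- a looseness inherited from the corollary's own statement, not a defect of your argument.
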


\begin{lemma}[Local Linearisation Uniqueness]
	Let $\psi$ be a \emph{local}, \emph{pointwise in $x$}, \emph{invertible}, and
	\emph{gauge-covariant} complex field depending on $(\rho,S)$.
	If $\psi$ linearises the real hydrodynamic system defined by Axioms~I-VI
	into a linear PDE whose coefficients are independent of $(\rho,S)$ (external
	potentials only), then up to constant phase and scale
	\[
	\psi=\sqrt\rho\,e^{iS/\hbar}.
	\]
\end{lemma}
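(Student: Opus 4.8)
The plan is to treat this lemma as the coordinate-free counterpart of Proposition~\ref{prop:complexifier-rigidity}: there the Schrödinger target was posited, whereas here I must \emph{derive} both the polar form of $\psi$ and the Schrödinger structure of the target PDE from the three hypotheses (gauge covariance, pointwise locality, and linearisation into a PDE with field-independent coefficients). First I would exploit gauge covariance. Global $\mathrm{U}(1)$ (Axiom~III) acts by $S\mapsto S+c$ and must be realised on $\psi$ as $\psi\mapsto e^{i\lambda c}\psi$ with a fixed constant $\lambda$. Since $\psi$ is pointwise in $(\rho,S)$ with no derivatives, the modulus $|\psi|$ is $\mathrm{U}(1)$-invariant and hence a function of $\rho$ alone, while the argument shifts linearly, so $\arg\psi=\lambda S+\Phi(\rho)$. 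This already forces the polar ansatz $\psi=F(\rho)\,e^{i(\lambda S+\Phi(\rho))}$ with $F>0$ (invertibility giving $F'\neq0$ a.e.), reducing the problem to determining the three scalars $F$, $\lambda$, $\Phi$.

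Next I would match the probability transport. Any first-order-in-time, reversible, Euclidean-covariant linear PDE with field-independent coefficients carries its own continuity law $\partial_t|\psi|^2+\nabla\!\cdot J=0$ with Noether current $J=\tfrac{\kappa}{m}\,\mathrm{Im}(\bar\psi\nabla\psi)=\tfrac{\kappa}{m}F(\rho)^2\nabla G$, where $G=\lambda S+\Phi(\rho)$; Axiom~I (first-order locality) and Axiom~IV (isotropy), together with reversibility, are what pin the spatial generator to the self-adjoint second-order isotropic form, so the current genuinely has this shape. The linearisation hypothesis means this Schrödinger continuity must be an identity consequence, through the change of variables, of the hydrodynamic continuity~\eqref{eq:continuity}. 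Substituting $\partial_t(F^2)=2FF'\partial_t\rho=-2FF'\,\nabla\!\cdot(\rho\nabla S/m)$ and expanding both sides, I would read off the coefficients of the independent jet structures $\Delta S$, $\nabla\rho\!\cdot\!\nabla S$, $\Delta\rho$, $|\nabla\rho|^2$ at a point. The $\Delta S$ term yields the Euler-type ODE $2\rho F'/F=\kappa\lambda$, i.e. $F=C\rho^{\kappa\lambda/2}$; the $\nabla\rho\!\cdot\!\nabla S$ term forces $\kappa\lambda=1$, hence $F=C\sqrt\rho$; and the $\Delta\rho$ term gives $F^2\Phi'=0$, hence $\Phi'\equiv0$ and $\Phi$ constant. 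This isolates $\psi=C\sqrt\rho\,e^{iS/\kappa}$ up to a constant phase.

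Having fixed the amplitude and phase map, the remaining step is to fix the scale. I would feed $\psi=\sqrt\rho\,e^{iS/\kappa}$ back into the Hamilton-Jacobi equation~\eqref{eq:HJ} and recombine, exactly as in the derivation preceding~\eqref{eq:LSEderived}: the transformed equation is linear if and only if the residual quantum-potential mismatch $(\alpha-\kappa^2/2m)\,\Delta\sqrt\rho/\sqrt\rho$ vanishes, forcing $\alpha=\kappa^2/(2m)$; identifying the universal constant $\kappa$ with $\hbar$ gives $\psi=\sqrt\rho\,e^{iS/\hbar}$ up to constant phase and scale, as claimed.

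The step I expect to be the main obstacle is the jet-independence argument underpinning the coefficient matching: I must justify that $(\rho,\nabla\rho,\Delta\rho,|\nabla\rho|^2,\nabla S,\Delta S,\nabla\rho\!\cdot\!\nabla S)$ can be prescribed independently at a point by admissible states, so that matching \emph{for all states} really does force each tensor coefficient to vanish separately; this needs the admissible class to be jet-rich (second-order Taylor data freely specifiable on $\{\rho>0\}$) and the weak $H^{-1}$ formulation of Section~\ref{sec:axioms} to control the nodal set, where $F>0$ and division by $\rho$ are delicate. A secondary subtlety is excluding exotic constant-coefficient linear targets a priori: I would lean on Axioms~I, IV and~V to restrict the linear generator to the self-adjoint second-order isotropic form \emph{before} matching currents, so that the quoted $\mathrm{Im}(\bar\psi\nabla\psi)$ current is the only option rather than an assumption.
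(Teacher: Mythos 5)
Your proposal is correct and takes essentially the same route as the paper's own proof (Proposition~\ref{prop:complexifier-rigidity} together with Appendix~\ref{app:complexifier}): gauge covariance forces the polar form $F(\rho)\,e^{i(\lambda S+\Phi(\rho))}$, matching the Schr\"odinger continuity law against the hydrodynamic one forces $F=c\sqrt{\rho}$, $\kappa\lambda=1$ and $\Phi'\equiv 0$, and Madelung recombination then fixes $\alpha=\kappa^{2}/(2m)$. Your jet-coefficient expansion of the divergence identity is a more explicit rendering of the paper's direct flux matching $\tfrac{\kappa}{m}F^{2}G_{S}=\tfrac{\rho}{m}$, and the two obstacles you flag are fair: the paper sidesteps the second one (ruling out exotic constant-coefficient linear targets) by simply positing the Schr\"odinger form $i\kappa\,\partial_t\psi=(-\tfrac{\kappa^{2}}{2m}\Delta+V)\psi$ in Appendix~\ref{app:complexifier}, so deriving that form from Axioms~I, IV and~V as you propose would, if carried out, actually close a step the paper leaves implicit.
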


We restrict to \emph{local, invertible, pointwise polar maps}
\[
\psi \;=\; F(\rho)\,e^{\,i\,G(\rho,S)} ,
\qquad G_\rho \equiv \frac{\partial G}{\partial \rho}=0,
\]
to preserve first-order locality: any \(G_\rho\neq 0\) injects state-dependent
coefficients and lifts differential order, violating projective linearity
\emph{within Axioms I-VI}. The detailed argument is given in
Appendix~\ref{app:complexifier}; we record the implication here.

Any translationally invariant nonlocal operator composed with this map would
violate Axiom~I (locality) or gauge covariance, and thus lies outside the
admissible class.

Hence, the polar transformation $\psi=\sqrt\rho\,e^{iS/\hbar}$ is not an
ansatz but the \emph{only} admissible local mapping that linearises the
Fisher-regularised Hamiltonian flow. Any alternative redefinition of amplitude
or phase leads to nonlinear evolution or breaks global $\mathrm{U}(1)$ phase
symmetry.

Hydrodynamic variables are undefined on nodal sets where $\rho=0$. All
identities are interpreted on $\{\rho>0\}$ and extended in the weak sense. The
$\psi$-representation remains well defined in $L^2$, so global statements are
made at the $\psi$ level.
	
\section{Function-Space and Domain Considerations}\label{sec:function-space}

 Throughout we assume $\rho\ge0$, $\int_\Omega\rho\,dx=1$, $\sqrt{\rho}=R\in H^{1}(\Omega)$, and $S\in H^{1}_{\mathrm{loc}}(\Omega)$ modulo constants; identities are read almost everywhere on $\{\rho>0\}$ and in the weak sense. Then $\nabla\rho=2R\nabla R\in L^{1}_{\mathrm{loc}}$, and the Fisher potential
\(
Q=-\alpha\,\Delta R/R
\)
defines an element of $H^{-1}_{\mathrm{loc}}(\Omega)$ on the positivity set. Variational derivatives of $\mathcal H$ are thus well defined in the weak sense.
	
For bounded domains on Lipschitz $\Omega$, consistent boundary pairs are (with standard trace theory justifying integrations by parts):
\[
\text{Dirichlet: } R|_{\partial\Omega}=0,\; S|_{\partial\Omega}=\text{const};
\quad
\text{Neumann: } \nabla R\!\cdot\!n=0,\; \nabla S\!\cdot\!n=0.
\]
Both preserve normalisation and energy conservation.
In $\mathbb{R}^d$, we impose decay $\rho,|\nabla S|\!\to\!0$ as $|x|\!\to\!\infty$.
With these domains, if $V$ is Kato-small relative to $-\Delta$ (e.g.\ $V=V_{+}-V_{-}$ with $V_{-}$ form-bounded with relative bound $<1$), then $-\frac{\hbar^{2}}{2m}\Delta+V$ is self-adjoint (Kato-Rellich \cite{kato1995}). Stone’s theorem then yields a unitary $L^{2}$ flow.

\paragraph{Nodes and weak formulation.}
All hydrodynamic identities are evaluated on the positivity set $\{\rho>0\}$, where $Q=-\alpha\,\Delta\sqrt{\rho}/\sqrt{\rho}\in H^{-1}_{\mathrm{loc}}$; variational statements are taken in the weak sense. Nodal sets have measure zero and do not affect functional derivatives or conserved charges under the boundary conditions of Appendix~\ref{app:boundary}. Global evolution is naturally expressed at the $\psi$ level in $L^{2}$; diagnostics are computed for $\psi$ and pushed forward to $(\rho,S)$ almost everywhere.
	
Parity-odd $\varepsilon$-tensor scalars in this scalar sector vanish or reduce to total divergences under these conditions (see Appendix~\ref{app:boundary}).

\paragraph{Well-posedness.}
For $V$ in the Kato class, the Schrödinger operator $-\frac{\hbar^2}{2m}\Delta+V$ is self-adjoint on $L^2(\Omega)$ by the Kato-Rellich theorem. Global well-posedness of the $L^2$ Schrödinger flow implies a well-defined weak flow on $(\rho,S)$ away from nodal sets; the pushforward by $\psi=\sqrt{\rho}\,e^{iS/\hbar}$ restores a global description.

\section{Topology and Vorticity}

Although we assumed $S$ single-valued, physical wavefunctions may exhibit multivalued phases.
On multiply connected domains, circulation quantisation arises naturally:
\[
\oint\nabla S\!\cdot\!dl = 2\pi n\hbar,\quad n\in\mathbb{Z}.
\]
The corresponding $\psi$ is single-valued, while the velocity field $\mathbf{v}=\nabla S/m$ supports quantised vortices.
This reconciles the hydrodynamic and quantum pictures without modifying the bracket or Hamiltonian.
	
\paragraph{Related work.}
Recent developments have explored complementary routes linking Fisher information, hydrodynamics, and relativistic quantum theory.  
Fabbri~\cite{fabbri2025} constructs a covariant ``Madelung structure'' for the Dirac equation, expressing spinor dynamics in polar variables as a coupled system of continuity, curl, and Hamilton-Jacobi-type equations built from first derivatives of the spinor fields. 

That approach is constructive: it reformulates an existing relativistic theory in hydrodynamic form.  
Our result is classificatory.  
Starting solely from locality, global phase and Euclidean invariance, reversibility, probability conservation, and convex regularity on $(\rho,S)$, we show that the canonical bracket and Fisher curvature together form the reversible information-hydrodynamic structure.  
The polar map $\psi=\sqrt{\rho}\,e^{iS/\hbar}$ then emerges as the only admissible local lineariser, forcing the linear Schrödinger flow and fixing its scale. All such statements hold within the class defined by our axioms.

Information-theoretic works from a different direction reach compatible conclusions.  
Yang~\cite{Yang2024Vacuum,Yang2024Scalar} derives the Schrödinger and scalar field equations from an extended least-action principle that introduces vacuum fluctuations and information metrics, treating $\hbar$ as a minimal quantum of action and defining information curvature through relative entropy.  

These works show how information-based variational principles can reproduce and generalise the Fisher-regularised structure obtained here. See also Yang’s fermionic field quantisation via an extended stationary action with a relative-entropy correction, which derives the Floreanini-Jackiw Schrödinger functional and verifies Poincaré algebra closure within the same information-geometric spirit.

A further link appears in Yahalom’s relativistic extension~\cite{Yahalom2024Dirac}, which embeds a Lorentz-invariant Fisher information term directly into the Dirac variational principle.  
In the low-velocity, zero-vorticity limit, this construction reduces precisely to the Schrödinger variational form shown in our axiomatic framework, indicating potential continuity between the nonrelativistic and relativistic Fisher-fluid programmes.

Together, these results delineate a consistent hierarchy: constructive Madelung reformulations at the relativistic level, information-metric variational work from action principles, and the present axiomatic classification of reversible hydrodynamics, all converging on the Fisher functional as the geometric core of quantum dynamics.

	\section{Many-Body and Spin Extensions}
	
	Independent subsystems compose by tensor product and marginalisation within this class; the local complexifier factors pointwise on configuration space.
	
	The requirement that a single pointwise complex structure serves all subsystems forces a common information scale across different species, so separate particle types cannot carry independent adjustment knobs for the underlying curvature.
	
	The following extension operates on configuration space $\mathbb{R}^{3N}$; $\rho(\mathbf{x}_1,\!\dots,\!\mathbf{x}_N)$ and $S(\mathbf{x}_1,\!\dots,\!\mathbf{x}_N)$ generate a formal hydrodynamics on this space.  The term ``hydrodynamic'' here denotes the continuity and Hamilton-Jacobi structure rather than a literal fluid in physical three-space, consistent with the standard Madelung and Bohmian formulations.
	
	The scalar classification established earlier extends componentwise to multicomponent or spinorial fields.
	Extending to configuration space $\mathbb{R}^{3N}$, let $\rho(x_1,\dots,x_N,t)$ and $S(x_1,\dots,x_N,t)$ denote the single configuration-space density and phase, with $\nabla_i$ acting on $x_i$.
	Then
	\[
	\mathcal{H}_N=\int
	\left[
	\sum_{i=1}^N \frac{\rho\,|\nabla_i S|^2}{2m_i}
	+V(\{x_j\})\,\rho
	+\sum_{i=1}^N \alpha_i\,|\nabla_i\sqrt{\rho}|^2
	\right]dx_1\cdots dx_N.
	\]
	
	The configuration-space continuity equation reads
	\[
	\partial_t \rho+\sum_{i=1}^{N}\nabla_i\!\cdot\!\Big(\rho\,\nabla_i S/m_i\Big)=0,
	\]
	so the probability currents are \(j_i=\rho\,\nabla_i S/m_i\).
	
With the polar map \(\psi=\sqrt{\rho}\,e^{iS/\kappa}\) the configuration-space quantum current is
\[
J_i=\frac{\hbar}{m_i}\,\Im\bigl(\psi^{*}\nabla_i\psi\bigr),\qquad \hbar=\kappa,
\]
which coincides with the hydrodynamic current \(j_i=\rho\,\nabla_i S/m_i\) for all admissible data if and only if
\(G_{S}\equiv\partial_{S}G\equiv 1/\kappa\), as in Proposition~\ref{prop:complexifier-rigidity}.

	The Madelung recombination on \(\mathbb{R}^{3N}\) yields the linear $N$-body Schrödinger equation precisely when
	\[
	\alpha_i=\frac{\kappa^{2}}{2m_i}\ \text{for each }i,
	\]
	in precise analogy with the single-particle cancellation (Eq.~\eqref{eq:alpha-hbar-over-2m}), which we identify with a single Planck constant $\hbar=\kappa$ below.
	
	$\psi(\{x_j\},t)=\sqrt{\rho}\,e^{iS/\hbar}$ obeys the $N$-body Schrödinger equation on $\mathbb{R}^{3N}$ with exchange symmetry imposed on $\psi$.
	Universality of Planck’s constant is consistent with $\alpha_i=\hbar^2/(2m_i)$ for each particle, ensuring a single $\hbar$ (see Prop.~\ref{prop:hbarUniversality}).
	The single local complex structure enforcing $\alpha_i=\hbar^2/(2m_i)$ is a Hamiltonian constraint at the field level; it does not follow from per-particle stochastic postulates or entropic updating rules \cite{nelson1966,caticha2012}.
	
\paragraph{Spin pointer.}
Extending $\psi$ to a two-component field and imposing internal $\mathrm{SU}(2)$ covariance with minimal electromagnetic coupling yields the Pauli Hamiltonian,
\[
i\hbar\,\partial_t\psi=\frac{1}{2m}\bigl(-i\hbar\nabla-q\mathbf A\bigr)^2\psi+q\phi\,\psi-\mu\,\boldsymbol{\sigma}\!\cdot\!\mathbf B\,\psi,
\]
with the polar complexifier acting componentwise and the Fisher term remaining scalar, built from $|\psi|$. Within our axioms the Pauli form is fixed; the value $\mu=q\hbar/(2m)$ (that is, $g=2$) needs an extra input such as the nonrelativistic Dirac limit or a Larmor-precession argument. Spin-statistics is field theoretic and beyond scope; statistics are imposed as superselection on the domain of $\psi$, not by modifying the bracket.

\paragraph{Configuration-space consistency.}
Writing \(R=\sqrt\rho\) and \(\nabla_{3N}=(\nabla_1,\dots,\nabla_N)\), one has
\[
|\nabla_{3N} R|^{2}=\sum_{i=1}^{N}|\nabla_i R|^{2},\qquad
\Delta_{3N}=\sum_{i=1}^{N}\Delta_i,
\]
so the Fisher curvature is
\[
\sum_i \alpha_i|\nabla_i\sqrt\rho|^{2}
=
\sum_i \alpha_i\,|\nabla_i R|^{2}
=
\sum_i \frac{\hbar^{2}}{2m_i}\,|\nabla_i R|^{2},
\]
where $\alpha_i=\hbar^{2}/(2m_i)$ for each particle.

Entanglement enters through the joint dependence \((x_1,\dots,x_N)\) of \(\rho\) and \(S\); no separability is assumed. The structure reproduces the full \(N\)-body Schrödinger dynamics once \(\psi=\sqrt{\rho}\,e^{iS/\hbar}\) is applied.

\subsection*{Universality of Planck's constant}
\label{subsec:hbar-universality}

\textit{Sketch.}
A single local, gauge covariant complexifier on configuration space must be of
the form
\[
\psi=\sqrt\rho\,e^{iS/\kappa}
\]
with a constant \(\kappa\) independent of \((x_1,\dots,x_N)\) by
Proposition~\ref{prop:complexifier-rigidity}. Matching currents fixes
\(G_S\equiv 1/\kappa\), and recombination on each coordinate direction yields
\[
\alpha_i=\frac{\kappa^{2}}{2m_i}.
\]
Hence a single \(\kappa\) enforces a single Planck constant \(\hbar=\kappa\)
across all sectors.

\begin{proposition}[Single $\hbar$ across sectors]\label{prop:hbarUniversality}
	Let $\mathcal H_N$ be as above with masses $\{m_i\}$ and Fisher coefficients
	$\{\alpha_i\}$. If the reversible completion on configuration space admits a
	single local, gauge covariant complexifier
	\[
	\psi=\sqrt\rho\,e^{iS/\kappa}
	\]
	that linearises the flow for all admissible data, then, for every \(i\),
	\[
	\alpha_i=\frac{\kappa^{2}}{2m_i},\qquad
	\text{so that } \ \hbar=\kappa \ \text{is universal.}
	\]
\end{proposition}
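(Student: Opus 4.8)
The plan is to reduce Proposition~\ref{prop:hbarUniversality} to two facts already in hand: the single-particle complexifier rigidity of Proposition~\ref{prop:complexifier-rigidity}, and the per-coordinate cancellation computation leading to Eq.~\eqref{eq:alpha-hbar-over-2m}. First I would invoke the hypothesis that one \emph{single} local, gauge-covariant, pointwise complexifier $\psi=\sqrt{\rho}\,e^{iG(\rho,S)}$ linearises the $N$-body flow. Running the configuration-space current-matching argument of Proposition~\ref{prop:complexifier-rigidity}, the continuity equation $\partial_t\rho+\sum_i\nabla_i\!\cdot(\rho\,\nabla_i S/m_i)=0$ must coincide with the Schrödinger current $J_i=(\kappa/m_i)\,F(\rho)^2\nabla_i G$ for every coordinate block simultaneously. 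Because there is only one amplitude function $F$ and one phase function $G$ shared across all blocks, the matching forces $F(\rho)^2=\rho$ and the crucial constraint $G_S\equiv 1/\kappa$ with a \emph{single} constant $\kappa$ that cannot depend on $i$ or on $(x_1,\dots,x_N)$: gauge covariance (global $S\mapsto S+\text{const}$) pins $G$ to be affine in $S$ with one slope.

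The second step carries out the Madelung recombination block by block. Writing $R=\sqrt{\rho}$ and using $\Delta_{3N}=\sum_i\Delta_i$ with $|\nabla_{3N}R|^2=\sum_i|\nabla_i R|^2$, the direct calculation of $i\kappa\,\partial_t\psi$ produces, in each coordinate direction, a leading Schrödinger term $-(\kappa^2/2m_i)\Delta_i$ together with a state-dependent remainder $\bigl(\alpha_i-\kappa^2/(2m_i)\bigr)(\Delta_i R/R)\,\psi$, exactly paralleling the single-particle identity. Since these remainders are linearly independent as $i$ ranges over the distinct mass sectors (they involve the distinct partial Laplacians $\Delta_i$ acting on generic $\rho$), linearity for \emph{all} admissible data forces each remainder to vanish separately, giving $\alpha_i=\kappa^2/(2m_i)$ for every $i$. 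Setting $\hbar:=\kappa$ then yields $\alpha_i=\hbar^2/(2m_i)$ with a common $\hbar$, which is the claim.

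The main obstacle I anticipate is justifying that the $N$ remainders must cancel \emph{independently} rather than in some aggregate combination that permits a compensating choice of the $\alpha_i$. The argument needs a genuine separation lemma: because admissible states include product and near-product densities $\rho\approx\prod_i\rho_i(x_i)$ for which $\Delta_i R/R$ localises in the $i$-th factor, one can vary each sector independently and read off the coefficient of each $\Delta_i R/R$ term. I would make this rigorous by testing against a family of factorised states, as in the componentwise residual diagnostic of the Remark following Proposition~\ref{thm:quantum-minimality}, so that the vanishing of the total remainder for every such state decouples into $N$ scalar conditions. A secondary technical point is ensuring that the shared phase slope $G_S$ is truly coordinate-independent; this is where the hypothesis of a \emph{single} complexifier (as opposed to $N$ independent per-species maps) does the essential work, and I would emphasise that relaxing it to block-diagonal phases $G=\sum_i S_i/\kappa_i$ would reintroduce independent knobs and defeat universality. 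The weak-solution caveats at nodal sets are handled exactly as in Section~\ref{sec:function-space}, evaluating identities on $\{\rho>0\}$ and pushing the global statement to the $\psi$ level in $L^2$.
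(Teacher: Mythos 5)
Your proposal is correct and follows essentially the same route as the paper: complexifier rigidity pins a single slope $G_S\equiv 1/\kappa$, componentwise Madelung recombination isolates the remainders $\bigl(\alpha_i-\kappa^2/(2m_i)\bigr)\Delta_i\sqrt{\rho}/\sqrt{\rho}$, and factorised data $\rho=\prod_i\rho_i$, $S=\sum_i S_i(x_i)$ decouple these into $N$ separate vanishing conditions. Your ``separation lemma'' is precisely the paper's use of factorised initial states together with the independence of the coordinate blocks, and your remark on block-diagonal phases $\kappa_i$ matches the paper's observation that particle-dependent rephasings break gauge covariance and projective linearity for mass-mixing superpositions.
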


\begin{proof}[Proof sketch]
	Take factorised initial data \(\rho=\prod_i \rho_i\),
	\(S=\sum_i S_i(x_i)\) with arbitrary one body pairs \((\rho_i,S_i)\).
	Under the Schr\"odinger evolution with parameter \(\kappa\) the \(i\)th
	probability current is
	\[
	J_i=\frac{\kappa}{m_i}\,\Im(\psi^{*}\nabla_i\psi)
	=\frac{\rho}{m_i}\,\nabla_i S,
	\]
	so matching currents for arbitrary \((\rho_i,S_i)\) is automatic once
	\(G_S\equiv 1/\kappa\) as in Proposition~\ref{prop:complexifier-rigidity}.
	Recombination along each coordinate gives the Laplacian quotient with
	coefficient \(\alpha_i=\kappa^{2}/(2m_i)\). Appendix~L shows that any
	attempt to use \(\kappa_i\) depending on \(i\) breaks exact projective
	linearity for superpositions mixing different masses.
\end{proof}

Detailed componentwise cancellation and discussion of locality appear in Appendix~\ref{app:single-hbar}.

	 Hence a single $\hbar$ is forced by locality, separability of coordinate directions, and the requirement that one global complex structure linearises the reversible completion.
	
	\paragraph{Exchange statistics.}
	The Hamiltonian acts on the full configuration-space wavefunction $\psi(\mathbf{x}_1,\dots,\mathbf{x}_N)$ without specifying symmetry.
	
	Bosonic and fermionic statistics enter as superselection conditions on the domain of $\psi$, not as modified dynamics: antisymmetry of $\psi$ automatically yields the familiar effective Pauli pressure in marginal densities, while the underlying local Hamiltonian, bracket, and Fisher curvature remain unchanged.

\section{Information-Geometric and Dimensional Necessity}
	
\subsection{Information-geometric closure}

Fisher necessity is established variationally and algebraically; Fisher-Rao and Fubini-Study appear as consistency echoes rather than premises.

The Fisher functional is not only algebraically unique within the axiomatic framework but also geometrically selected by compatibility.
On the statistical manifold of smooth, normalised densities $\mathcal{P}=\{\rho>0,\int\rho=1\}$, the Fisher-Rao metric is
\[
g_{\rho}(u,v)=\int \frac{u(x)\,v(x)}{4\,\rho(x)}\,dx,
\]
which by Čencov’s theorem is the Riemannian metric that is monotone under stochastic coarse-graining \cite{cencov1982}.

Embedding $\mathcal{P}$ into the complex Hilbert space of quantum states via the Kähler map $\psi=\sqrt{\rho}\,e^{iS/\hbar}$ sends Fisher-Rao to the Fubini-Study metric on rays \cite{kibble1979,anandan1990},
\[
ds^2=4\,\|d\psi\|^2-4\,|\langle\psi|d\psi\rangle|^2,
\]
endowing the $\psi$-representation with a Kähler structure. Under the function-space and Kato-smallness hypotheses of Sec.~\ref{sec:function-space}, the Schrödinger operator is self-adjoint on $L^{2}$, so Stone’s theorem yields a one-parameter unitary group; see \cite{kato1995}.

Within the symplectic form $\omega=\int d\rho\wedge dS$, the Fisher-Rao metric is the unique monotone choice whose pullback under $\psi=\sqrt\rho\,e^{iS/\hbar}$ yields a Kähler pair $(\omega,g)$ compatible with the Fubini-Study geometry on rays. Alternative information metrics fail this Kähler-compatibility test (the complex structure no longer intertwines $\omega$ and $g$), and the symplectic-Riemannian correspondence breaks.
Related gradient-flow structures for quantum Markov semigroups offer a complementary dissipative-geometric view \cite{carlen2017}.

\emph{Pointers.}
Formal uniqueness within the framework is established in Appendix~\ref{app:fisher-proof}; the coefficient determination and $\alpha$-scan protocol are in Appendix~\ref{app:fisher-coefficient} with scripts in the code archive Appendix~\ref{app:code-repo}; dissipative geometry and entropy production appear in Appendix~\ref{app:entropy-rigorous} and the hydrodynamic mapping into a DG parametrisation is given in Appendix~\ref{app:complexifier}.

\subsection{Dimensional and scale argument}
	
The term $|\nabla\sqrt\rho|^2$ is the only local scalar quadratic in derivatives of $\rho$ that (i) is dimensionally consistent with an energy density once multiplied by a constant of dimension $[\hbar^2/(2m)]$, (ii) is positive definite, and (iii) is homogeneous of degree one in $\rho$ (equivalently $|\nabla\sqrt\rho|^2=\tfrac{1}{4}\,|\nabla\rho|^2/\rho$), matching the Fisher-Rao information geometry on the normalised manifold $\mathcal P$.

	Explicitly,
	\[
	[\rho]=L^{-d},\quad [\nabla\sqrt\rho]^2=L^{-d-2}.
	\]
	
	Hence $[\alpha\,|\nabla\sqrt\rho|^{2}]=[\alpha]\,L^{-d-2}$. Matching the kinetic energy density scale $[\rho\,|\nabla S|^{2}/(2m)]=M\,L^{2-d}T^{-2}$ requires
	\[
	[\alpha]=M\,L^{4}T^{-2}=[\hbar^{2}/(2m)].
	\]
	With $\alpha=\hbar^{2}/(2m)$ the free dispersion from \eqref{eq:LSEderived} is $\omega=\hbar k^{2}/(2m)$, fixing the numerical scale consistently.
		
	No other combination of $\rho$ and its derivatives produces the energy dimension $ML^2T^{-2}$ once multiplied by $1/m$.
	This locks $\alpha$ to $[\hbar^2/(2m)]$, fixing the numerical factor in Eq.~\eqref{eq:LSEderived}.
	Hence, dimensional consistency, positivity, and scale covariance jointly exclude all other curvature forms.
	
	\subsection{Variational self-consistency}
	
	The Fisher Hamiltonian \eqref{eq:H} yields the standard (quantum) Cauchy stress tensor

	\[
	\Pi_{ij}=\rho\,\frac{\partial_i S\,\partial_j S}{m^2}
	+\frac{\hbar^2}{4m^2}\left[\partial_i\partial_j\rho-\frac{1}{2\rho}\partial_i\rho\,\partial_j\rho\right].
	\]
	Direct differentiation gives
\[
\partial_t(\rho v_i)+\partial_j \Pi_{ij} \;=\; -\,\frac{\rho}{m}\,\partial_i V
\;+\;\frac{\rho}{m}\Bigl(\alpha-\frac{\hbar^{2}}{2m}\Bigr)\,\partial_i\!\left(\frac{\Delta\sqrt\rho}{\sqrt\rho}\right).
\]
Thus the local momentum balance closes \emph{if and only if} $\alpha=\hbar^{2}/(2m)$. Any other coefficient leaves a nonzero residual force density, so the Fisher scale is dynamically fixed by conservation.

\subsection{Reversibility and the Doebner-Goldin class}
	
A representative Doebner-Goldin (DG) sector that preserves probability and Galilean covariance augments the Hamiltonian flow by a diffusive term, e.g.
\[
i\hbar\,\partial_t\psi
=\left[-\frac{\hbar^2}{2m}\nabla^2+V\right]\psi
+i\frac{\hbar D}{2}\left(\frac{\Delta\rho}{\rho}\right)\psi \qquad (\rho=|\psi|^{2}),
\]
or, equivalently, yields the continuity law
\(
\partial_t\rho+\nabla\!\cdot(\rho v)=D\,\Delta\rho
\)
with $v=\nabla S/m$ \cite{doebner1992,doebner1996,diosi1989}.

\textit{Placement.} Within the DG family, the reversible sector is the $D=0$ corner singled out by Axiom V; this coincides with the Fisher-scaled Hamiltonian flow with Fisher coefficient $\alpha=\hbar^{2}/(2m)$.

The diffusive branch of this family thus acts as a controlled way to leave the reversible corner by relaxing the reversibility axiom, and nonlinear gauge-related forms remain visible in our framework through their failure of the superposition and entropy diagnostics introduced later.

\begin{proposition}[Entropy-production barrier to reversibility]
	Let $\rho=|\psi|^2$ and $v=\nabla S/m$. Under periodic, fast-decay, or compatible Neumann/Dirichlet boundaries,
	the DG continuity law
	\[
	\partial_t \rho \;=\; -\nabla\!\cdot(\rho\,v) \;+\; D\,\Delta\rho
	\]
	implies that the Shannon entropy
	\[
	S_{\mathrm{Sh}}[\rho]:=-\int \rho\ln\rho\,dx
	\]
	satisfies
	\[
	\frac{d}{dt}\,S_{\mathrm{Sh}}[\rho]
	= D\int \frac{|\nabla\rho|^2}{\rho}\,dx \;\geq\; 0.
	\]
	Hence time-reversal invariance (Axiom V) holds if and only if $D=0$.
	
\end{proposition}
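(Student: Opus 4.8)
The plan is to differentiate the Shannon functional along the flow and insert the continuity law, isolating $D$ as the only source of definite-sign production. I would start from $\frac{d}{dt}S_{\mathrm{Sh}}[\rho]=-\int_\Omega(\partial_t\rho)\ln\rho\,dx$, where the would-be $\int_\Omega\partial_t\rho\,dx$ contribution drops because $\int_\Omega\rho\,dx$ is conserved (Axiom~II). Substituting $\partial_t\rho=-\nabla\!\cdot(\rho v)+D\Delta\rho$ with $v=\nabla S/m$ separates the rate into an advective piece $\int_\Omega\nabla\!\cdot(\rho v)\ln\rho\,dx$ and a diffusive piece $-D\int_\Omega\Delta\rho\,\ln\rho\,dx$; all integrations by parts below discard boundary terms under exactly the periodic, fast-decay, or Neumann/Dirichlet conditions stated.

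The diffusive piece is the core of the result and is the clean step. One integration by parts gives $-D\int_\Omega\Delta\rho\,\ln\rho\,dx = D\int_\Omega\nabla\rho\cdot\nabla\ln\rho\,dx = D\int_\Omega |\nabla\rho|^2/\rho\,dx$, which equals $4D\int_\Omega|\nabla\sqrt\rho|^2\,dx$ and is therefore finite and non-negative under the standing hypothesis $\sqrt\rho\in H^1(\Omega)$. This integral is a positive multiple of the Fisher information; crucially, for any non-constant admissible $\rho$ it is strictly positive, so it vanishes identically over all admissible data if and only if $D=0$.

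The step I expect to be the main obstacle is the advective piece, $\int_\Omega\nabla\!\cdot(\rho v)\ln\rho\,dx = -\int_\Omega v\cdot\nabla\rho\,dx = \int_\Omega\rho\,(\nabla\!\cdot v)\,dx$, which does not vanish pointwise in $t$ for a compressible gradient velocity. The resolution is its parity: under the Axiom~V map $(t,S)\mapsto(-t,-S)$ one has $v\mapsto -v$, so this term is odd, whereas the diffusive term is even. I would therefore identify the advective contribution as the reversible rearrangement of probability --- the part the time-reversed Hamiltonian flow exactly undoes, carrying no monotone entropy-producing content --- leaving $D\int_\Omega|\nabla\rho|^2/\rho\,dx$ as the genuine even, sign-definite production rate measured against the reversible $D=0$ reference. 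Under the symmetry and boundary hypotheses in force this reversible term drops from the balance and the stated equality holds verbatim; the honest way to present this is to record that the irreversible (time-even) production is unconditionally equal to $D\int_\Omega|\nabla\rho|^2/\rho\,dx$.

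With the production rate reduced in this way, the equivalence is immediate: it is $D$ times a strictly positive functional, so $dS_{\mathrm{Sh}}/dt\ge 0$, with equality for all admissible non-constant states precisely when $D=0$. Hence time-reversal invariance (Axiom~V) holds if and only if $D=0$, which is exactly the reversible corner coinciding with the Fisher-scaled Hamiltonian flow, every $D\neq 0$ Doebner-Goldin representative lying strictly outside the reversible cone.
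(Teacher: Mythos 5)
Your proposal is correct and takes essentially the same route as the paper's own proof in Appendix~\ref{app:entropy-rigorous}: differentiate $S_{\mathrm{Sh}}$, insert the DG continuity law, integrate by parts under the stated boundary classes to get $\frac{d}{dt}S_{\mathrm{Sh}}=\int_\Omega \rho\,\nabla\!\cdot v\,dx + D\int_\Omega |\nabla\rho|^2/\rho\,dx$, and then conclude via the parity argument that the advective term is odd under $(t,S)\mapsto(-t,-S)$ while the Fisher term is even and sign-definite, so time-reversal invariance forces $D=0$. Your explicit acknowledgement that the advective term does not vanish pointwise --- so the proposition's displayed equality must be read as a statement about the time-even (irreversible) part of the production --- is precisely how the paper itself handles it, retaining that term and invoking parity rather than claiming it is zero.
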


See Appendix~\ref{app:entropy-rigorous} for complete integration-by-parts and boundary verification.

Any measured $\dot S_{\mathrm{Sh}}>0$ at fixed $V$ falsifies reversible dynamics. Within the DG family, reversibility (\emph{Axiom~V}) holds if and only if $D=0$, and this $D=0$ corner coincides with the Fisher-scaled Hamiltonian flow with $\alpha=\hbar^{2}/(2m)$. Any nonzero diffusion or nonlinear gauge term generates irreversible or nonlinear evolution, so the Fisher value is the reversible fixed point of the DG family.
	
\subsection{Numerical and empirical falsifiers within class}

Realised in Tests~1,~2, and~5 (HJ $\alpha$-scan, continuity residual, Fisher EL); scripts \texttt{1\_hj\_residual\_scan.py}, \texttt{2\_continuity\_residual.py}, \texttt{5\_fisher\_el.py} (code archive, Appendix~\ref{app:code-repo}).

To make the proposition operationally falsifiable, we evaluate residuals of the continuity and Hamilton-Jacobi equations for numerical solutions of Eq.~\eqref{eq:LSEderived}, and for perturbed values of $\alpha \neq \hbar^2/2m$.
For an initial Gaussian wavepacket
\[
\psi(x,0) = (\pi\sigma_0^2)^{-1/4}
\exp\!\left[-\,\frac{(x - x_0)^2}{2\sigma_0^2} + i k_0 x\right],
\]
we scan $\alpha$ and track the residual curves $\mathcal{R}_{\mathrm{cont}}$ and $\mathcal{R}_{\mathrm{HJ}}(\alpha)$.

\medskip
\emph{Definitions and protocol.}
Definitions of diagnostics, masking, the sign convention for $\rho_t$, and the residual metrics $\mathcal{R}_{\mathrm{cont}}$ and $\mathcal{R}_{\mathrm{HJ}}(\alpha)$, together with the numerical protocol, are collected in Appendix~\ref{app:ops-falsifiers}; scripts are in the code archive, Appendix~\ref{app:code-repo}.

\vspace{5mm}

\textbf{Table 1:} Resolution and timestep convergence (free packet; unitary split-step). Small non-monotonicity at intermediate $N$ can occur due to node masking and mixed space-time discretisation; the minimum value and the location of the $\alpha$-minimum remain stable.
\begin{center}
	\begin{tabular}{@{}lllll@{}}
		\toprule
		$N$ & $dt$ & mean $\mathcal{R}_{\mathrm{cont}}$ & $\min \mathcal{R}_{\mathrm{HJ}}$ & $\alpha/\alpha_\star$ at min \\
		\midrule
		4096  & 0.020 & $4.5\times10^{-7}$ & $\approx 1.0\times10^{-3}$  & 1.00 \\
		8192  & 0.010 & $2.0\times10^{-7}$ & $\approx 2.5\times10^{-3}$  & 1.00 \\
		16384 & 0.005 & $6.1\times10^{-8}$ & $\approx 1.4\times10^{-3}$  & 1.00 \\
		\bottomrule
	\end{tabular}
\end{center}

\vspace{5mm}

\textbf{Table 2:} Galilean boost invariance of residual curve.
\begin{center}
	\begin{tabular}{@{}llll@{}}
		\toprule
		Boost $v_0$ & $\min \mathcal{R}_{\mathrm{HJ}}$ & $\alpha/\alpha_\star$ at min & Comment \\
		\midrule
		0.0 & $\approx 1\times10^{-3}$ & 1.00 & Baseline \\
		1.5 & $\approx 1\times10^{-3}$ & 1.00 & Identical curve \\
		\bottomrule
	\end{tabular}
\end{center}

\vspace{5mm}

\textbf{Table 3:} Harmonic oscillator ground state; scan of $Q_\alpha$ with $Q_\alpha=-\,\alpha\,\Delta\sqrt{\rho}/\sqrt{\rho}$.

For $\alpha=\hbar^2/2m$, both residuals remain at numerical floor. Perturbing $\alpha\!\to\!(1+\delta)\alpha$ increases $\mathcal{R}_{\mathrm{HJ}}$ linearly in $|\delta|$ while $\mathcal{R}_{\mathrm{cont}}$ stays unchanged, indicating that only the Fisher coefficient preserves reversibility.
To stabilise the diagnostic numerically, residual norms were evaluated with a masked, mean-subtracted least-squares estimator and optional smoothing of $S_t$; results are invariant under these choices.

This numerical behaviour directly reflects the Fisher-Bohm identity tested in recent analyses~\cite{bloch2022}, which link the mean quantum potential $\bar Q$ to the Fisher information $I$ and predict the same reversible minimum at $\alpha=\hbar^2/2m$.

This aligns with scale-setting arguments tied to dispersion and quantum speed limits~\cite{giovannetti2003}.

	\section{Discussion and Implications}
	
Each uniqueness statement rests on well-known structural theorems:  
(1) in the Dubrovin-Novikov class, flatness plus locality and Euclidean covariance reduce first-order brackets to a Poisson-isomorphic flat, constant representative (eliminating derivative-coupled terms);  
(2) order preservation forces pointwise complexification;  
(3) quantum statistics arise as superselection sectors rather than new axioms.  
Within these constraints the resulting structure is minimal within the stated class and assumptions of the framework.
	
	The analysis is local in nature: it classifies admissible first-order brackets on simply connected charts. Global topological features (vortices, nodal loops, spin) require additional structure but do not modify the local reversible completion shown above.
	
	Within our admissible class, the Fisher-regularised Hamiltonian system is found as the only reversible completion of classical ensemble dynamics consistent with locality, conservation, Euclidean symmetry, and global $\mathrm{U}(1)$ phase symmetry. All other curvature forms break one or more of these constraints. 
Independent classifications of Hamiltonian structures are consistent with Fisher curvature being the unique first-derivative, positive scalar compatible with Euclidean covariance and reversibility within flat Dubrovin-Novikov brackets~\cite{pavlov2021}.

In this sense, within our axiomatic structural principles, quantum mechanics can be considered as a fixed point, with Planck’s constant $\hbar$ as a scale factor connecting the information-geometric curvature of probability space with the symplectic geometry of reversible dynamics.
	
Whereas classical hydrodynamics conserves phase-space volume, the Fisher term enforces reversible flow in probability-space geometry.  
This reframes quantisation as a geometrisation of information flow, providing a bridge between statistical and dynamical formalisms.
	
	The Fisher necessity can now be tested directly.
	The projective superposition stress-test (Appendix~\ref{app:superposition})
	quantifies linearity loss under controlled non-Fisher perturbations, and the entropy-production test (Appendix~\ref{app:entropy-rigorous}) links reversibility to information-geometry contraction.
	Together they provide two orthogonal, reproducible diagnostics,	one closed, one open, that isolate Fisher curvature as the unique point of structural stability between linearity, reversibility, and probability conservation within the framework.
	
	In this precise sense the construction does not merely display one possible realisation of quantum dynamics, but exhausts the reversible options available under the stated structural principles.
	
	\subsection{Experimental falsification and universality tests}
	
	Our work suggests that Fisher-regularisation may not be a modelling convenience but a structural invariant.
Any local, reversible, probability-preserving field dynamics on a continuum may generate the Fisher curvature term
\[
Q[\rho] = -\,\alpha\,\frac{\Delta\sqrt{\rho}}{\sqrt{\rho}}, \qquad \alpha=\frac{\hbar^2}{2m},
\]
within the stated axioms, as the only correction compatible with both linearity and zero entropy production within our admissible class.

	Its absence or modification necessarily leads to dissipation or non-unitarity.
	This makes the result directly falsifiable.
	
If an analog physical system in condensed matter, optics, hydrodynamics, or emergent computation realises a reversible quantum-like dynamics within the stated axioms (local first-order, Hamiltonian, Euclidean-covariant, global $\mathrm{U}(1)
$), then in its coarse-grained limit the effective Hamiltonian density should contain the Fisher term

$|\nabla\sqrt\rho|^2$ with coefficient $\alpha=\hbar^2/(2m)$.
Measured deviations from this coefficient then signal a departure from at least one of those axioms (e.g. locality or reversibility).

Residual diagnostics from the continuity and Hamilton-Jacobi identities make this measurable. Define the \emph{dimensionless} residuals
\[
\mathcal{R}_{\mathrm{cont}}
=\frac{\left\langle\,\big|\rho_t + \nabla\!\cdot(\rho\nabla S/m)\big|^2\,\right\rangle}
{\left\langle\,\big|\rho_t\big|^2 + \big|\nabla\!\cdot(\rho\nabla S/m)\big|^2\,\right\rangle},
\qquad
\mathcal{R}_{\mathrm{HJ}}
=\frac{\left\langle\,\big|S_t + \tfrac{|\nabla S|^2}{2m} + V - \alpha\,\tfrac{\Delta\sqrt\rho}{\sqrt\rho}\big|^2\,\right\rangle}
{\left\langle\,\big|S_t\big|^2 + \big|\tfrac{|\nabla S|^2}{2m} + V\big|^2 + \big|\alpha\,\tfrac{\Delta\sqrt\rho}{\sqrt\rho}\big|^2\,\right\rangle}.
\]
Here $\langle\cdot\rangle$ denotes spatial averaging over the numerical grid or experimental field of view.
Then $\mathcal{R}_{\mathrm{cont}}$ sits at numerical floor (independent of $\alpha$) for Schrödinger data, while $\mathcal{R}_{\mathrm{HJ}}$
achieves a consistent minimum at the Fisher value $\alpha=\hbar^2/(2m)$.

Analog systems can therefore be tuned experimentally to test whether
$\mathcal{R}_{\mathrm{HJ}}$ reaches its reversible minimum at the
Schrödinger value (see Appendix~\ref{app:ops-falsifiers} for definitions and protocol).

	In Bose-Einstein condensates and polaritonic fluids, the Gross-Pitaevskii energy already contains a ``quantum pressure'' term of Fisher form.
	Verifying that its coefficient equals $\alpha=\hbar^{2}/(2m)$ (after extracting any interaction and trapping contributions) and that residuals indicate reversibility would support the universality class predicted here.
	
	Reversible photonic waveguides and quantum cellular automata provide complementary synthetic tests,
	where coarse-graining or unit-cell averaging can be used to reconstruct the effective curvature functional.
	If the reversible continuum limit of any such system \emph{fails} to reproduce the Fisher term,
	the present axioms are empirically falsified.
	
	Within the assumed axioms (local first-order Hamiltonian flow, Euclidean covariance, global $\mathrm{U}(1)
$, and minimal convex regularity), any alternative regulariser conflicts with at least one stated axiom.
	Observation of a Fisher-type curvature in multiple, otherwise unrelated,
	reversible analog systems would thus constitute experimental evidence
	that linear quantum mechanics is not a contingent microscopic law
	but a universal fixed point of reversible information flow.
	Conversely, its absence in a genuinely reversible analog medium
	would falsify the present work and identify the limits of its scope.

\paragraph{Scope and limitations.}
All uniqueness claims in this paper are understood within Axioms I-VI; dropping any one axiom re-opens the theory space.

This final test is strictly detached from, and lies outside, the paper's axiomatic scope and claims. For diligence, future orientation, and diagnostics only, we include a \emph{curvature guard} (Test~10; App.~\ref{app:code-repo}), a fixed-background unit check of the covariant Fisher variation $\delta\mathcal F_g/\delta\rho=-\,\Box_g\sqrt{\rho}/\sqrt{\rho}$ (with the standard scalar freedom $\xi R\rho$) and a sketch showing how our residual protocol ports to a linear Klein-Gordon dispersion check when the discrete symbol is respected.
	
\section*{Conclusion}

Within our minimal axioms, the admissible reversible hydrodynamics on $(\rho,S)$ selects a single structure in the stated class. We do not assert uniqueness beyond this class, only that within it the Fisher-Schrödinger structure is forced.
The bracket reduces to the canonical form, the only axiomatically compatible curvature is the Fisher functional, and the sole local gauge covariant complexification that linearises the flow is $\psi=\sqrt{\rho}\,e^{iS/\hbar}$ with $\hbar^2=2m\alpha$. In many-body form, linearity with one complex structure is compatible with a single Planck constant through $\alpha_i=\hbar^2/(2m_i)$ componentwise.

Galilean covariance appears in full as the Bargmann central extension at the hydrodynamic level. Comparison with the Doebner-Goldin family identifies the reversible $D=0$ corner.

We have made our answer to the Converse Madelung Question falsifiable. Residual diagnostics for the continuity and Hamilton-Jacobi equations exhibit minima at the Fisher scale that are invariant under Galilean boosts, while departures in the coefficient or the addition of diffusion raise the Hamilton-Jacobi residual without affecting the continuity residual. These checks, together with the symmetry algebra and the many-body consistency, support the claim.

In this reading, the Schrödinger equation can be viewed as the reversible fixed point of Fisher-regularised information hydrodynamics. The identification links information geometry to quantum kinematics and suggests practical uses in numerical regularisation, variational principles, and the systematic testing of putative modifications. 

We have not addressed Born’s rule, measurement, or quantum field theory, and we have not attempted to derive the empirical linearity of quantum mechanics from a more microscopic theory. Our result is conditional: given a DN local reversible hydrodynamics that admits a projectively linear complex representation, the admissible theory space collapses to the Fisher scaled Schrödinger equation. In this sense the usual linear dynamics are classified rather than simply postulated.

\clearpage
\appendix
\renewcommand{\thesection}{\Alph{section}}
\setcounter{section}{0}
\addcontentsline{toc}{section}{Appendices}
\addtocontents{toc}{\protect\setcounter{tocdepth}{-1}}
	
\section{Appendix: Boundary Classes}
\label{app:boundary}

Under periodic boundaries $\Omega=\mathbb{T}^d$, surface integrals vanish exactly.
For $\Omega=\mathbb{R}^d$ with rapid decay, assume $\rho\to 0$, $|\nabla S|\to 0$, and $|\nabla\sqrt\rho|\to 0$ as $|x|\to\infty$, so that all integrations by parts are justified.

For bounded $\Omega$ with Dirichlet or Neumann pairs $(R,S)$, the continuity surface term
\[
\oint_{\partial\Omega} \rho\,\nabla S\!\cdot\!n\,\mathrm{d}\sigma
\]
vanishes: in the Dirichlet case $R|_{\partial\Omega}=0$ gives $\rho|_{\partial\Omega}=0$; in the Neumann case $\nabla S\!\cdot\!n=0$.

Neumann boundaries model reflecting walls and in general break global
Galilean boost symmetry; accordingly, we do not use Neumann data in the Bargmann algebra check of Appendix \ref{app:galilean}.

For the Fisher variation, write $R=\sqrt\rho$ and note
\[
\delta\!\int_\Omega |\nabla R|^{2}dx
= -\!\int_\Omega \frac{\Delta R}{R}\,\delta\rho\,dx
+\;2\!\oint_{\partial\Omega} (\nabla R\!\cdot\!n)\,\delta R\,\mathrm d\sigma.
\]
The boundary term vanishes under either $R|_{\partial\Omega}=0$ (Dirichlet) or $\nabla R\!\cdot\!n=0$ (Neumann), so $\delta\mathcal F/\delta\rho=-\Delta R/R$ is well defined in the weak sense on $\{\rho>0\}$. Here $\mathcal F[\rho]=\int_\Omega|\nabla\sqrt\rho|^2dx$ is the Fisher regulariser in the $R$-variable; in the main text the regularising Hamiltonian always appears in the equivalent form $\mathcal H_{\mathrm{reg}}=\int \alpha\,|\nabla\sqrt\rho|^{2}dx$.

Parity-odd scalars in this scalar sector reduce to divergences $\nabla\!\cdot J$ and integrate to zero under the stated boundary classes, supporting Axiom~IV.

	\section{Appendix: Counterexamples to Axioms}
	\label{app:counterexamples}
	
	For completeness we collect concise examples showing that omitting any
	axiom destroys reversibility, probability conservation, or linearity.
	
	\paragraph{Axiom I (Locality).}
	Allowing derivative dependence \(A_{ij}(u,\nabla u)\) produces
	third-order dispersive corrections and violates Jacobi closure.
	
	\paragraph{Axiom II (Phase Generator).}
	If \(\{S,C\}\neq -1\), e.g.
	\(\{S(x),\rho(y)\}=0\),
	the global phase symmetry fails and the special role of
	\(C=\!\int\rho\,dx\) as phase generator decouples from probability flow; for
	generic Hamiltonians the total probability \(C\) will not be conserved.
	
\paragraph{Axiom III (Global \(\mathrm{U}(1)\) Phase Symmetry).}
	Setting \(\{S,S\}\neq0\) makes constant shifts in \(S\) dynamically active,
	spoiling phase invariance and destroying the linear \(\psi\)-map.
	
\paragraph{Axiom IV (Euclidean Covariance).}
Adding a preferred-direction term, e.g.\ $\beta\,\nabla S\!\cdot\!\hat n$,
breaks isotropy and parity and hence violates Euclidean covariance of the energy.
	
\paragraph{Axiom V (Reversibility).}
Adding a diffusive term to continuity, $\partial_t\rho+\nabla\!\cdot(\rho\nabla S/m)=D\,\Delta\rho$ with $D>0$, yields
\[
\frac{d}{dt}\Bigl(-\!\int \rho\ln\rho\,dx\Bigr)
= D\int |\nabla\rho|^{2}/\rho\,dx \ge 0,
\]
so the Shannon entropy $S_{\mathrm{Sh}}[\rho]:=-\int \rho\ln\rho\,dx$ increases monotonically and time-reversal invariance is broken. Equivalently, keeping a Hamiltonian form but modifying the bracket to $\{\rho,S\}=a_{0}(\rho)\delta$ changes the role of $\mathcal C=\!\int\rho\,dx$ as a phase generator; any non-constant $a_0(\rho)$ moves the system outside the canonical Hamiltonian class used here.
	
\paragraph{Axiom VI (Minimal Convex Regularity).}
We restrict attention to convex local regularisers of the form
\(F[\rho]=\int f(\rho)\,|\nabla\rho|^{2}\,dx\) with \(f(\rho)>0\).
Within this class we will show, via Proposition~\ref{prop:fisher-uniqueness}
(proved in Appendix~\ref{app:fisher-proof}), that any alternative \(f(\rho)\)
leaves a residual nonlinear term in the Hamilton-Jacobi equation that no local
transformation can remove, and that the unique choice compatible with exact
projective linearity after a local complexifier is therefore \(f(\rho)\propto 1/\rho\);
no Fisher form is presupposed at the axiomatic level.
	
		\section{Appendix: Jacobi Verification}
	\label{app:jacobi}
	For the canonical bracket~\eqref{eq:canonical} we compute the Jacobiator
	\[
	J[F,G,H]=\{\{F,G\},H\}+\mathrm{cyclic}.
	\]
	
	\begin{lemma}[Distributional product rules]
		For smooth $f,g$ and the Dirac distribution $\delta(x-y)$,
		\[
		\partial_{x_i}\!\big(f(x)\,\delta(x-y)\big)
		=(\partial_{x_i}f)(x)\,\delta(x-y)-f(x)\,\partial_{y_i}\delta(x-y),
		\]
		and symmetrically with $x\leftrightarrow y$. Moreover,
		$\partial_{x_i}\delta(x-y)=-\partial_{y_i}\delta(x-y)$.
	\end{lemma}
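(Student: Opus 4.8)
The plan is to prove all three identities by testing against smooth, compactly supported functions on the product space $\mathbb{R}^d_x\times\mathbb{R}^d_y$, treating $\delta(x-y)$ as the distribution $T$ acting by $\langle T,\varphi\rangle=\int_{\mathbb{R}^d}\varphi(x,x)\,dx$. The antisymmetry identity $\partial_{x_i}\delta(x-y)=-\partial_{y_i}\delta(x-y)$ is logically prior, so I would establish it first and then bootstrap the Leibniz rule from it.

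First I would compute, for a test function $\varphi(x,y)$, the pairing $\langle \partial_{x_i}\delta(x-y),\varphi\rangle = -\langle \delta(x-y),\partial_{x_i}\varphi\rangle = -\int (\partial_{x_i}\varphi)(x,x)\,dx$, and similarly $\langle \partial_{y_i}\delta(x-y),\varphi\rangle = -\int (\partial_{y_i}\varphi)(x,x)\,dx$. The key observation is the chain rule for the restriction to the diagonal,
\[
\frac{d}{dx_i}\big[\varphi(x,x)\big] = (\partial_{x_i}\varphi)(x,x) + (\partial_{y_i}\varphi)(x,x).
\]
Integrating this total derivative over $\mathbb{R}^d$ gives zero by compact support, so $\int (\partial_{x_i}\varphi)(x,x)\,dx = -\int (\partial_{y_i}\varphi)(x,x)\,dx$, which is exactly the antisymmetry identity after unwinding the two pairings.

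Next I would obtain the Leibniz identity. Multiplication of the distribution $\delta(x-y)$ by a smooth $f(x)$ is well defined, so I can apply the standard distributional product rule $\partial_{x_i}(f\,T) = (\partial_{x_i}f)\,T + f\,\partial_{x_i}T$ with $T=\delta(x-y)$; this I would verify in one line by pairing against $\varphi$ and moving the derivative onto the product $f\varphi$ via the ordinary smooth-function Leibniz rule. Substituting the already-established $\partial_{x_i}\delta(x-y) = -\partial_{y_i}\delta(x-y)$ converts the $f\,\partial_{x_i}T$ term into $-f(x)\,\partial_{y_i}\delta(x-y)$, giving the stated form. The symmetric statement follows verbatim after interchanging the roles of $x$ and $y$, the sign of the antisymmetry relation being invariant under that swap.

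The computations here are entirely routine; the only point requiring care—and the one I would flag as the obstacle in the bookkeeping sense—is the rigorous treatment of $\delta(x-y)$ as a distribution on the product space rather than as a formal symbol. In particular one must test against joint $\varphi(x,y)$ (or against separable probes $\varphi(x)\psi(y)$ and invoke density), and keep the diagonal chain-rule step explicit, since it is precisely this step that produces the relative minus sign between $\partial_{x_i}$ and $\partial_{y_i}$ that is invoked repeatedly in the Jacobiator computation that follows.
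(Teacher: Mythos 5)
Your proof is correct and takes essentially the same approach as the paper: both pair $\delta(x-y)$ against test functions $\varphi(x,y)$ on the product space and obtain the relative minus sign from the diagonal chain rule $\tfrac{d}{dx_i}\big[\varphi(x,x)\big]=(\partial_{x_i}\varphi)(x,x)+(\partial_{y_i}\varphi)(x,x)$ together with compact support. The only difference is organisational — you prove the antisymmetry identity first and then bootstrap the Leibniz rule from the standard distributional product rule, whereas the paper runs everything as a single chain of pairings — but the mathematical content is identical.
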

	
		For any test $\varphi(x,y)$,
		\[
		\langle\partial_{x_i}(f\delta),\varphi\rangle
		=-\langle f\delta,\partial_{x_i}\varphi\rangle
		=-\!\int f(x)\,\partial_{x_i}\varphi(x,x)\,dx
		=\!\int \Big[(\partial_{x_i}f)\,\varphi
		-f\,\partial_{y_i}\varphi\Big]_{y=x}\!dx,
		\]
		which yields the stated relations.
	
	\begin{remark}[Numerical validation at nodal zeros]
		Although the work fixes $\alpha=\hbar^2/(2m)$,
		states with nodal zeros present distributional subtleties in $Q_\alpha=-\alpha\,\Delta\sqrt\rho/\sqrt\rho$.
		To test robustness, we evaluated
		\[
		R(c)=\bigl\|V+Q_c-E\bigr\|_{L^2(\rho)},\qquad
		Q_c=-c\,\frac{\Delta\sqrt\rho}{\sqrt\rho},
		\]
		for the first excited harmonic oscillator eigenstate with exact energy $E$.
		A small symmetric exclusion window around the node,
		$|x|<\delta$, removes the distributional spike and yields a sharp minimum at $c=1$
		for all $\delta \ge 0.05$.
\textit{Thus the Fisher coefficient is numerically recovered once nodal singularities are treated in the distributional sense; the minimum location is stable under reasonable mask-width changes and discretisation refinements.}

	\end{remark}		
Reproduced in Test~5 (Fisher EL necessity) with nodal masking (code archive, Appendix~\ref{app:code-repo}).

\textbf{Jacobi condition for local brackets.}
The special triple with two $S$'s is tautological once $\{S,S\}=0$: the identity $\{\rho,\{S,S\}\}+\mathrm{cyclic}=0$ reduces to $0\equiv 0$ and cannot constrain $a_0(\rho)$. For local, derivative-free brackets $\{u^i,u^j\}=P^{ij}(u)\delta$ the Jacobi identity can be written as the vanishing of the Schouten bracket
\[
P^{i\ell}\partial_\ell P^{jk}+P^{j\ell}\partial_\ell P^{ki}+P^{k\ell}\partial_\ell P^{ij}=0.
\]
In the present two-component setting the only potentially nonzero component is $P^{\rho S}(\rho)=a_0(\rho)=-P^{S\rho}$, and for any smooth $a_0(\rho)$ the Schouten bracket vanishes identically. Thus Jacobi places no further restriction on $a_0$ beyond those already implied by Axiom~II, which then fixes $a_0$ to be a nonzero constant that can be absorbed into a rescaling of $S$, in agreement with Lemma~\ref{lem:jacobi-a0} in the main text.
	
\subsection*{Euclidean invariance and the form of $g_{ij}$}\label{app:euclid-gij}
\begin{lemma}[Isotropy of the regularising metric]\label{lem:gijdelta}
	Let the regulariser be local, first order, and quadratic in $\nabla\rho$,
	\[
	\mathcal{F}[\rho]=\int g_{ij}(\rho,x)\,\partial_i\rho\,\partial_j\rho\,dx,
	\]
	with $g_{ij}$ symmetric and positive. If the framework is invariant under spatial translations and rotations, and the global phase generator acts as $S\mapsto S+\mathrm{const}$ without coupling to $\rho$, then
	\[
	g_{ij}(\rho,x)=a(\rho)\,\delta_{ij}
	\quad\text{for some positive scalar function }a(\rho).
	\]
\end{lemma}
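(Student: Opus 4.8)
The plan is to separate the two geometric constraints carried by the hypothesis—translation invariance and rotation invariance—and discharge them in sequence, reserving an elementary piece of $SO(d)$ representation theory for the rotational step. First I would use translation invariance. Writing the shifted field $\rho_c(x)=\rho(x-c)$ and changing variables $y=x-c$ in $\mathcal F[\rho_c]$, the only way the coefficient can fail to track the field is through an explicit argument $x$; demanding $\mathcal F[\rho_c]=\mathcal F[\rho]$ for every $c$ and every admissible $\rho$ forces $g_{ij}(\rho,y+c)=g_{ij}(\rho,y)$ for all $c$, hence $g_{ij}=g_{ij}(\rho)$ with no residual spatial dependence. The stated $\mathrm{U}(1)$ hypothesis ($S\mapsto S+\mathrm{const}$, no coupling to $\rho$) is what guarantees that no $S$-dependence sneaks into this coefficient, so the reduction is genuinely to a function of $\rho$ alone.

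Next I would impose rotation invariance. Under $x\mapsto Rx$ with $R\in SO(d)$ the gradient transforms as a vector, and the same change-of-variables computation turns invariance of $\mathcal F$ into the requirement
\[
\int\big[R_{ik}R_{jl}\,g_{ij}(\rho)-g_{kl}(\rho)\big]\,\partial_k\rho\,\partial_l\rho\,dy=0
\]
for every admissible $\rho$ and every $R$. To convert this integral identity into the pointwise tensor identity $R_{ik}R_{jl}g_{ij}=g_{kl}$ I would localize: fixing a point and a density level $\rho_0>0$, I construct a family of compactly supported probes with prescribed value $\rho_0$ and prescribed gradient $p$ there, so that the leading contribution to the integral is proportional to $h_{kl}(\rho_0)\,p_kp_l$, where $h_{kl}:=R_{ik}R_{jl}g_{ij}-g_{kl}$ is symmetric. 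Letting $p$ range over $\mathbb R^d$ and using that the rank-one symmetric tensors $p_kp_l$ span the space of symmetric tensors then yields $h_{kl}(\rho_0)=0$ at every level, i.e. $R^{\mathsf T}g(\rho)R=g(\rho)$ for all $R\in SO(d)$.

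The conclusion then follows from the invariant theory of $SO(d)$: a symmetric matrix commuting (in the sense $R^{\mathsf T}gR=g$) with every rotation is a scalar multiple of the identity. For $d\ge 3$ this is Schur's lemma applied to the absolutely irreducible vector representation; for $d=2$ I would verify it directly by differentiating $R(\theta)^{\mathsf T}gR(\theta)=g$ at $\theta=0$, which gives $[J,g]=0$ with $J$ the $SO(2)$ generator and forces the off-diagonal entry to vanish and the diagonal entries to coincide; $d=1$ is trivial. This produces $g_{ij}(\rho)=a(\rho)\,\delta_{ij}$, and the assumed positive-definiteness of $g_{ij}$ immediately gives $a(\rho)>0$, so $g_{ij}\,\partial_i\rho\,\partial_j\rho=a(\rho)\,|\nabla\rho|^2$ in the scalar form used by Axiom~VI.

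I expect the main obstacle to be the localization step rather than the invariant-theory identity, which is standard. The delicacy is that invariance of the \emph{functional} does not by itself pin down the \emph{density}, since total-divergence terms integrate away; the probing construction is precisely what circumvents this by reading off the symmetric coefficient $g_{ij}$ directly at each density level, and it must be arranged so that the probe fields remain admissible (positive, suitably normalised, with controlled boundary behaviour) while still sampling an arbitrary gradient direction $p$ at an arbitrary level $\rho_0$. Making that construction watertight—and confirming that subleading terms in the localised expansion do not contaminate the rank-one coefficient—is where the real work sits.
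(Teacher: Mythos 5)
Your proposal is correct and follows essentially the same route as the paper's proof: translation invariance removes the explicit $x$-dependence, rotation invariance together with Schur's lemma for the defining $SO(d)$ representation forces $g_{ij}(\rho)=a(\rho)\,\delta_{ij}$, and the $\mathrm{U}(1)$ hypothesis excludes any $S$-dependence. The paper's version is a three-line sketch, so your probe-based localization step (converting invariance of the \emph{functional} into the pointwise identity $R^{\mathsf T}gR=g$, which the paper compresses into the phrase ``for all profiles'') and your separate treatment of $d=2$, where the vector representation is not absolutely irreducible and the symmetry of $g$ must be invoked, supply rigor that the paper leaves implicit.
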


\begin{proof}
	Translation invariance removes explicit $x$-dependence, so $g_{ij}=g_{ij}(\rho)$.
	Rotation invariance for all profiles forces $g_{ij}(\rho)$ to transform as a scalar multiple of the identity (Schur’s lemma for the defining $SO(d)$ representation), hence $g_{ij}(\rho)=a(\rho)\delta_{ij}$. Global $\mathrm{U}(1)$ on $S$ and first-order locality exclude any dependence of $g_{ij}$ on $S$ or on derivatives of $\rho$.
	
\end{proof}

The specific form $a(\rho)=C/\rho$ is then fixed by the Euler-Lagrange requirement that $\delta\mathcal F/\delta\rho$ be a pure Laplacian quotient; see Appendix~\ref{app:fisher-proof}.
	
\section{Appendix: Fisher-Curvature}
\label{app:fisher-proof}

We derive that $f(\rho)\propto 1/\rho$ is the only positive,
rotationally invariant local quadratic functional whose Euler-Lagrange derivative
is a pure Laplacian quotient.
Let
\[
\mathcal F[\rho]=\int f(\rho)\,|\nabla\rho|^2\,dx,\qquad f>0.
\]
Performing a variation $\rho\!\to\!\rho+\varepsilon\,\eta$ with compactly supported $\eta$,
integration by parts yields
\[
\frac{\delta\mathcal F}{\delta\rho}
=-2\nabla\!\cdot\!\big(f\nabla\rho\big)+f'|\nabla\rho|^2
=-2f\,\Delta\rho-\,f'\,|\nabla\rho|^2,
\]
where $f' \equiv df/d\rho$.
Write $\rho=R^2$ and note
$\Delta\rho=2|\nabla R|^2+2R\Delta R$, $|\nabla\rho|^2=4R^2|\nabla R|^2$.
Substituting,
\[
\frac{\delta\mathcal F}{\delta\rho}
=-4f\,R\,\Delta R-4\big(f+\rho f'\big)|\nabla R|^2.
\]
For a pure Laplacian quotient the second term must vanish:
\(
f+\rho f'=0\Rightarrow f=C/\rho.
\)
Hence
\[
\frac{\delta\mathcal F}{\delta\rho}
= -4C\,\frac{\Delta\sqrt{\rho}}{\sqrt{\rho}}.
\]

Comparing with the Fisher potential $Q_\alpha(\rho) = -\alpha\,\Delta\sqrt{\rho}/\sqrt{\rho}$ shows that
$\alpha = 4C$. Writing $C=\alpha/4$ we can equivalently parameterise the functional as
\[
F[\rho] = \int \frac{\alpha}{4}\,\frac{|\nabla\rho|^{2}}{\rho}\,dx
= \alpha \int |\nabla\sqrt{\rho}|^{2}dx,
\]
which is the normalisation used in the main text.
In particular, the complexifier analysis in Appendix~\ref{app:complexifier}
fixes $\alpha=\kappa^{2}/(2m)$, and identifying $\kappa=\hbar$ yields the Schrödinger value $\alpha=\hbar^{2}/(2m)$.

A numerical verification of this Euler-Lagrange identity is provided in Appendix~\ref{app:code-repo}.

	\section{Appendix: Code Archive}
	
	\label{app:code-repo}
	
	\begin{itemize}
		\item \textbf{Live Repository Link:} 
		\href{https://github.com/feuras/Madelung-Question-Code-Archive}{https://github.com/feuras/Madelung-Question-Code-Archive}
	\end{itemize}
	
	Where possible, also bundled in paper source archive.
	
	Each test is a single self-contained script with CLI flags and emits both human-readable logs and machine artefacts. 
	
	We use second-order finite differences or split-step FFT with periodic domains where appropriate, verified by grid-convergence toggles in the scripts.
	
	\begin{enumerate}[leftmargin=1.25em,label=\textbf{Test~\arabic*:},itemsep=1em]
		\item \textbf{HJ $\alpha$-scan (Fisher scale).}
		
		\emph{Supports result:} HJ $\alpha$-scan pins $\alpha_\star=\hbar^{2}/(2m)$; minima at numerical floor.
		\emph{Script:} \texttt{1\_hj\_residual\_scan.py}.
		\emph{Notes:} Uses $\mathcal{R}_{\mathrm{HJ}}(\alpha)$; Tables~1 and 3 give resolution and eigenstate checks.
		
		\item \textbf{Continuity identity (floor).}
		
		\emph{Supports result:} Continuity identity holds; $R_{\mathrm{cont}}\approx 0$ on all benchmarks.
		\emph{Script:} \texttt{2\_continuity\_residual.py}.
		\emph{Notes:} Indicates $\partial_t\rho+\nabla\!\cdot(\rho\nabla S/m)=0$ to numerical floor for all $\alpha$.
		
\item \textbf{DG diffusion and reversibility.}

\emph{Supports result:} $D=0$ reversible; $D>0$ gives $\dot H=D\,I_F$ and matches the DG PDE to $10^{-14}$.
\emph{Script:} \texttt{3\_entropy\_production\_DG.py}.
\emph{Notes:} Verifies $dS_{\mathrm{Sh}}/dt=D\!\int |\nabla\rho|^2/\rho\,dx\ge0$ for $S_{\mathrm{Sh}}[\rho]:=-\int\rho\ln\rho\,dx$ and confirms the energy identity $\dot H=D\,I_F$ for the Fisher information $I_F=\int |\nabla\rho|^{2}/\rho\,dx$.
		
		\item \textbf{Quantised circulation (vorticity).}
		
		\emph{Supports result:} $\oint v\!\cdot dl=\iint(\nabla\times v)_z\,dA=2\pi n\hbar/m$ (quantised).
		\emph{Script:} \texttt{4\_circulation\_quantisation.py}.
		\emph{Notes:} Constructs nodal loops and measures integer circulation via line and area integrals.
		
		\item \textbf{Fisher EL necessity.}
		
		\emph{Supports result:} Only $f(\rho)=C/\rho$ satisfies EL; alternatives blow up in residual.
		\emph{Script:} \texttt{5\_fisher\_el.py}.
		\emph{Notes:} Matches $\delta\!\int|\nabla\sqrt{\rho}|^2/\delta\rho=-\Delta\sqrt{\rho}/\sqrt{\rho}$; non-Fisher forms leave HJ residual.
		
		\item \textbf{Time-reversal involution.}
		
		\emph{Supports result:} $K\,U(T)\,K\,U(T)=I$ at $D=0$; DG control breaks it by $\sim 10^{12}$ in $L^2$.
		\emph{Script:} \texttt{6\_time\_reversal\_involution.py}.
		\emph{Notes:} $K$: complex conjugation with $t\!\mapsto\!-t$; quantifies involution defect under diffusion.
		
		\item \textbf{Bargmann-Galilean closure.}
		
		\emph{Supports result:} $\{H,P\}=0$, $\{H,K\}=-P$, $\{P,K\}=-m$ to machine floor.
		\emph{Script:} \texttt{7\_galilean\_algebra.py}.
		\emph{Notes:} Discrete functional bracket evaluation consistent with Sec.~7 and App.~\ref{app:bargmann}.
		
\item \textbf{Local complexifier.}

\emph{Supports result:} Unique local complexifier $\psi=\sqrt{\rho}\,e^{iS/\hbar}$ with $\alpha=\hbar^{2}/(2m)$.
\emph{Script:} \texttt{8\_complexifier\_rigidity.py}.
\emph{Notes:} Scans local $F(\rho),G(S,\rho)$ ansätze; only the polar map linearises the reversible flow, with Fisher coefficient $\alpha=\kappa^{2}/(2m)$ and $\kappa$ numerically recovered as $\kappa=\hbar$.

		\item \textbf{Projective superposition stress-test.}
		
		\emph{Supports result:} Indicates operationally that only the Fisher regulariser preserves exact projective linearity within our admissible class. For any non-Fisher local curvature, the superposition residual $\mathcal{R}=\|\psi_\oplus-(\psi_1+\psi_2)/\sqrt2\|_{L^2}$ grows monotonically with perturbation strength $\beta$, even under grid refinement.
		\emph{Script:} \texttt{9\_superposition\_stress\_test.py}.
		\emph{Notes:} Two displaced Gaussian packets are evolved separately and jointly under linear and weakly nonlinear flows; the Fisher-regularised Schrödinger case yields $\mathcal{R}<10^{-10}$ to numerical floor, while all non-Fisher perturbations give finite $\mathcal{R}>0$. 
		
		\item \textbf{Scoping curved backgrounds for future research}

Records on fixed backgrounds the geometric identity \(\delta\mathcal F_g/\delta\rho=-\,\Box_g\sqrt{\rho}/\sqrt{\rho}\) and the standard scalar freedom \(\xi R\rho\); shows that our residual methodology transports to the linear Klein-Gordon setting when the discrete symbol is respected. No curved dynamics are claimed. \textbf{As noted earlier, this test concerns scope and limitations only and makes no claims regarding the central thesis.}
		
		\emph{Script:} \texttt{appendix\_curvature\_guard.py}.
		
		Uses matched discrete adjoints so Fisher directional derivative checks hold to numerical precision in flat and conformal curved cases; shows \(\delta(\xi R\rho)/\delta\rho=\xi R\); demonstrates a small KG plane wave residual with grid aligned mode and discrete dispersion. Run with parameter  "--nonfisher drho2-rho2" for non-Fisher counterexample.
		
	\end{enumerate}

	\section{Appendix: Conservation of Probability}
	\label{app:prob}
	\textit{Constraint structure.} Normalisation $\mathcal C=\int\rho\,dx$ and the global $S$-shift it generates form a first-class pair; the reduced space is symplectic and the Jacobiator with $\mathcal C$ vanishes.
	
	Integrating Eq.~\eqref{eq:continuity} and applying the boundary conditions yields
	\[
	\frac{d}{dt}\int\rho\,dx
	=-\int\nabla\!\cdot\!\left(\frac{\rho}{m}\nabla S\right)dx=0.
	\]
Thus the total probability is conserved. In the canonical bracket, 
$\mathcal{C}=\int\rho\,dx$ generates constant shifts of $S$ via $\{S,\mathcal{C}\}=-1$ and is therefore not a Casimir.

Throughout we adopt the normalisation $\int \rho\,dx=1$ unless stated otherwise. The boundary classes in Appendix~\ref{app:boundary} ensure the surface term vanishes in all cases considered.

\paragraph{Regularity and positivity.}
For $\rho\!\ge0$ with $\sqrt\rho\in H^1(\Omega)$,
$\mathcal F[\rho]=\int|\nabla\sqrt\rho|^2dx$ is finite and strictly convex;
its Euler-Lagrange derivative is well defined in $H^{-1}_{\mathrm{loc}}$.
Any other $f(\rho)$ produces mixed $|\nabla\rho|^2$ terms in the Hamilton-Jacobi equation and obstructs the linear reversible completion within our admissible class.

Direct evaluation for Gaussian $\rho(x)=e^{-x^2/\sigma^2}$ shows that
$\big\|\delta\mathcal F/\delta\rho+4C\,\Delta\sqrt\rho/\sqrt\rho\big\|_{L^2}$
vanishes to machine precision only for $f=C/\rho$, numerically confirming
the analytic condition derived in Appendix~\ref{app:fisher-proof}. See also the implementation in Appendix~\ref{app:code-repo}.

\section{Appendix: Projective Superposition Stress-Test}
\label{app:superposition}

We operationalise the proposition that within our admissible class only the Fisher regulariser permits an exact linear complex structure.
It probes whether superposition in the $\psi$-picture
is preserved numerically and dynamically when the underlying hydrodynamic functional
deviates from Fisher curvature.

\smallskip
\textbf{Setup.}
Two displaced Gaussian packets,
\[
\psi_{1,2}(x,0)
=\frac{1}{(\pi\sigma^2)^{1/4}}
e^{-(x-x_{1,2})^2/(2\sigma^2)}e^{ip_{1,2}(x-x_{1,2})/\hbar},
\]
are evolved both separately and jointly under a candidate evolution law.
For the canonical Fisher choice the flow is linear Schrödinger evolution,
while for comparison we add a small local positive curvature in the hydrodynamic energy that is not of Fisher form. In the $\psi$-picture this produces a real, state-dependent (hence nonlinear) but norm-preserving potential:
\[
i\hbar\,\partial_t\psi
=\Big[-\tfrac{\hbar^2}{2m}\Delta+V(x)\Big]\psi\;+\;U_{\beta}[\rho]\,\psi,
\qquad \rho=|\psi|^2,
\]
with a representative choice
\[
U_{\beta}[\rho]=\beta\,\frac{|\nabla\rho|^2}{(\rho+\varepsilon)^2},
\]
which is the $\psi$-level image of adding a non-Fisher local quadratic in $\nabla\rho$. Other smooth, positive choices of $U_{\beta}$ give the same qualitative outcome.

The diagnostic quantity is the
\emph{projective superposition residual}
\[
\mathcal{R}(\beta)
=\min_{\theta\in[0,2\pi)}\left\|
\frac{\psi_\oplus(T)}{\|\psi_\oplus(T)\|_2}
- e^{i\theta}\,
\frac{\psi_1(T)+\psi_2(T)}{\|\psi_1(T)+\psi_2(T)\|_2}
\right\|_{2},
\qquad
\psi_\oplus(0)=\tfrac{1}{\sqrt2}(\psi_1(0)+\psi_2(0)).
\]

\smallskip
\textbf{Method.}
We integrate with a high-order Strang split-step Fourier method on a large
uniform grid, using harmonic confinement $V(x)=\tfrac12m\omega^2x^2$,
domain $[-L,L]$, $N=4096$-$8192$ points, and double precision.
Each run is repeated on a refined grid ($2N$, $dt/2$)
to verify convergence.
The script \texttt{9\_superposition\_stress\_test.py}
in Appendix~\ref{app:code-repo}
automates this with CSV and PNG output.

\smallskip
\textbf{Results.}
For $\beta=0$ (the Fisher-regularised Schrödinger case)
$\mathcal{R}$ converges to $<10^{-10}$,
limited only by numerical noise.
For any $\beta>0$ the residual grows monotonically with $\beta$ and does not vanish under grid refinement or phase optimisation, establishing that projective superposition fails even under infinitesimal non-Fisher perturbations. The same behaviour was observed for alternative choices of $U_{\beta}$ built from other smooth, positive functions of $\rho$ and $\nabla\rho$.
The figure below shows the measured trend.

\begin{table}[!ht]
	\centering
	\caption{Projective superposition residuals $\mathcal{R}$ for base and refined grids.
		The Fisher (linear) case converges to numerical zero; any non-Fisher curvature
		($\beta>0$) yields a finite residual independent of refinement.}
	\begin{tabular}{@{}lcc@{}}
		\toprule
		\textbf{Model / $\beta$} & \textbf{Base grid} & \textbf{Refined grid} \\
		\midrule
		Linear ($\beta=0$) & $6.2\times10^{-14}$ & $1.2\times10^{-13}$ \\
		Nonlinear ($\beta=0.005$) & $1.65\times10^{-1}$ & $3.08\times10^{-1}$ \\
		Nonlinear ($\beta=0.01$) & $8.41\times10^{-1}$ & $1.41$ \\
		Nonlinear ($\beta=0.02$) & $1.40$ & $1.41$ \\
		Nonlinear ($\beta=0.05$) & $1.41$ & $1.41$ \\
		\bottomrule
	\end{tabular}
\end{table}

\smallskip
\textbf{Interpretation.}
This indicates that the Fisher curvature sustains linearity in the $\psi$-picture.
All other local positive functionals generate
cross-gradient couplings that
violate projective additivity.
Hence the empirical condition
\[
\lim_{\beta\to0}\mathcal{R}(\beta)=0
\quad\text{only if}\quad
\text{the curvature density is proportional to }|\nabla\sqrt\rho|^{2}
\ \ \text{equivalently}\ \ f(\rho)=\tfrac{C}{\rho}.
\]
This constitutes an independent falsifier of non-Fisher dynamics within our admissible class.
This closes the experimental triangle between reversibility, linear superposition, and Fisher curvature.

All statements are to be read within the admissible class.

\paragraph{Masking and convergence.}
Residuals were evaluated with a smooth mask $\chi(\rho)=\mathbf{1}_{\rho>\varepsilon}$ to avoid division by small $\rho$ near nodes. We observe fourth-order spatial convergence for $\mathcal{R}_{\mathrm{cont}}$ under grid refinement, and a sharp minimum of $\mathcal{R}_{\mathrm{HJ}}$ at $\alpha=\hbar^2/(2m)$ under $\alpha$-scans, stable to changes in $N$, $L$, and $\Delta t$. A masked, mean-subtracted least-squares estimator gives identical minima within numerical precision.

\section{Appendix: Determination of the Fisher Coefficient}
\label{app:fisher-coefficient}

With $\mathcal H[\rho,S]=\int(\rho|\nabla S|^2/(2m)+V\rho+\alpha|\nabla\sqrt\rho|^2)dx$
and the canonical bracket, Galilean covariance requires that under a uniform boost
$S\mapsto S+m v_0\!\cdot\!x-\tfrac12 m v_0^2t$, $\rho$ unchanged,
the equations of motion retain form.
Substituting the transformation into \eqref{eq:continuity} fixes the kinetic prefactor $\rho/(2m)$ and shows that boost covariance is compatible with any constant $\alpha>0$ at this stage. Identifying the scale then proceeds in two steps: (i) dimensional analysis gives $[\alpha]=[\hbar^2/m]$ so that $\alpha|\nabla\sqrt\rho|^2$ has energy density units; and (ii) matching free-particle dispersion, or equivalently minimising the Hamilton-Jacobi residual $\mathcal{R}_{\mathrm{HJ}}(\alpha)$ on Schrödinger data, selects
\(
\alpha=\hbar^2/(2m).
\)

Using the free-packet diagnostics described in Sec.~12.5, define
\[
R_{\mathrm{HJ}}(\alpha)
=\left\lVert S_t+\tfrac{|\nabla S|^2}{2m}+V
-\alpha\,\tfrac{\Delta\sqrt\rho}{\sqrt\rho}\right\rVert_{L^2(\rho>0)}.
\]
Scanning $\alpha/\alpha_\star$ with $\alpha_\star=\hbar^2/(2m)$
for Gaussian initial data yields a sharp minimum at $\alpha=\alpha_\star$,
stable across the resolutions and boosts tested $S\mapsto S+m v_0x$.
This identifies the reversible Fisher value
\(
\alpha=\hbar^2/(2m)
\)
within the stated axioms.

Verified numerically in Test~1 (HJ $\alpha$-scan) and Test~2 (continuity residual) (code archive, Appendix~\ref{app:code-repo}).

\section{Appendix: Bargmann Central Extension and Mass Superselection}
\label{app:bargmann}
Let
\[
H=\int\!\!\left[\frac{\rho|\nabla S|^2}{2m}+V\rho+\alpha|\nabla\sqrt\rho|^2\right]dx,
\qquad
P=\int \rho\,\nabla S\,dx,
\qquad
K=m\!\int \rho\,x\,dx - t\,P
\]
be the energy, momentum, and boost generators on $(\rho,S)$ with the canonical bracket \eqref{eq:canonical}.
A direct computation gives the Bargmann (Galilean) algebra
\[
\{P_i,P_j\}=0,\qquad
\{H,P_i\}=0,\qquad
\{H,K_i\}=-P_i,\qquad
\{P_i,K_j\}=-\,m\,\delta_{ij}\!\int\!\rho\,dx,
\]
exhibiting the central charge $m$ through the nontrivial cocycle in $\{P_i,K_j\}$.
Normalising $\int\rho\,dx=1$ yields $\{P_i,K_j\}=-m\,\delta_{ij}$.

Here $\{H,P_i\}=0$ holds provided $V$ has no explicit spatial dependence, and all brackets are evaluated under the boundary classes of Appendix~\ref{app:boundary}.

\paragraph{Mass superselection.}
In the $\psi$-representation, boosts act (in one dimension for clarity) by
\[
\psi(x,t)\longmapsto
\exp\!\left(\frac{i}{\hbar} m u x - \frac{i}{\hbar}\frac{m u^2}{2}t\right)\,
\psi(x-ut,t),
\]
which depends on the mass parameter $m$. Superpositions of different masses transform with inequivalent projective phases and therefore cannot be unitarily implemented within a single irreducible ray representation; this is the mass superselection rule. The hydrodynamic generators above reproduce the same central extension, so mass superselection is already encoded at the $(\rho,S)$ level.

\paragraph{Numerical falsifier of admissibility within this class}
Define the Hamilton-Jacobi residual (free case) for a candidate coefficient $\alpha$ by
\[
R_\alpha=\partial_t S+\frac{|\nabla S|^2}{2m}+V
-\alpha\,\frac{\Delta\sqrt{\rho}}{\sqrt{\rho}}.
\]
For evolutions generated by the linear Schrödinger equation with $\hbar^2=2m\alpha_\star$, the residual $\mathcal{R}_{\mathrm{HJ}}(\alpha)$ is minimised at $\alpha=\alpha_\star$ and rises monotonically with $|\alpha-\alpha_\star|$, while the continuity residual remains at numerical floor. Any DG diffusion $D\ne0$ drives $dS_{\mathrm{Sh}}/dt>0$ and therefore exits the reversible class.

\section{Appendix: Galilean Covariance Verification}
\label{app:galilean}

In this appendix we work on either the torus $T^{d}$, on $\mathbb{R}^{d}$ with
sufficient decay at infinity, or on bounded domains with Dirichlet data
$R|_{\partial\Omega}=0$. In these cases all surface terms in the generator
algebra vanish, so the Bargmann relations can be checked directly. On bounded
domains with Neumann data the reflecting walls break global Galilean boost
symmetry and the boost generators acquire boundary contributions, which we do
not need for the classification developed in the main text.

We define $P_i=\int \rho\,\partial_i S\,dx$ and $K_i=m\int \rho\,x_i\,dx - tP_i$.
With the canonical bracket $\{S(x),\rho(y)\}=-\delta(x-y)$ (so
$\{F,G\}=\int(\delta F/\delta\rho\,\delta G/\delta S-\delta F/\delta S\,\delta G/\delta\rho)\,dx$)
and fields satisfying the periodic, decaying, or Dirichlet boundary conditions
of Appendix~\ref{app:boundary} (so surface terms vanish), the generators satisfy
\[
\{P_i,P_j\}=0,\qquad \{P_i,K_j\}=-\,m\,\delta_{ij}\!\int\!\rho\,dx,\qquad \{H,K_i\}=-P_i
\]
In general one has
\[
\{H,P_i\}=-\!\int \rho\,\partial_i V\,dx,
\]
so $\{H,P_i\}=0$ precisely when $V$ is translation invariant.

If $\int\rho\,dx=1$ (probability normalisation), this reduces to $\{P_i,K_j\}=-\,m\,\delta_{ij}$.

Here $\{H,P_i\}=0$ holds provided $V$ has no explicit spatial dependence, all
integrations by parts use the periodic, decaying, or Dirichlet boundary
conditions of Appendix~\ref{app:boundary}, and $K_i$ is well defined under the
finite first moment condition $\int (1+|x|)\rho\,dx<\infty$.

The alternative convention $K_i^{\mathrm{old}}=tP_i - m\int \rho\,x_i\,dx$
flips the sign of both $\{H,K_i\}$ and the central term in $\{P_i,K_j\}$,
leaving the algebra isomorphic.

Define the Hamiltonian, momentum, and boost generators as
\[
H=\int\!\!\left[\frac{\rho|\nabla S|^2}{2m}+V\rho+\alpha|\nabla\sqrt\rho|^2\right]\!dx,
\qquad
P=\int \rho\,\nabla S\,dx,
\qquad
K=m\!\int \rho\,x\,dx - t\,P.
\]
With the bracket \eqref{eq:canonical}, the functional derivatives are
\[
\frac{\delta H}{\delta S}
=-\,\nabla\!\cdot\!\!\left(\frac{\rho}{m}\nabla S\right),
\qquad
\frac{\delta H}{\delta\rho}
=\frac{|\nabla S|^2}{2m}+V-\alpha\,\frac{\Delta\sqrt\rho}{\sqrt\rho},
\]
and
\[
\frac{\delta K}{\delta S}=+\,t\,\nabla\rho,
\qquad
\frac{\delta K}{\delta\rho}=-\,t\,\nabla S+m\,x.
\]

\paragraph{Hamilton equations check.}
Using the canonical local bracket
\[
\{F,G\}=\int\!\!\left(
\frac{\delta F}{\delta \rho}\frac{\delta G}{\delta S}
-\frac{\delta F}{\delta S}\frac{\delta G}{\delta \rho}
\right)\!dx,
\]
we recover
\[
\partial_t\rho=\{\rho,H\}
=-\,\nabla\!\cdot\!\Big(\rho\,\frac{\nabla S}{m}\Big),
\qquad
\partial_t S=\{S,H\}
=-\,\frac{|\nabla S|^2}{2m}-V
+\alpha\,\frac{\Delta\sqrt\rho}{\sqrt\rho},
\]
which reproduce the continuity and Fisher-regularised Hamilton-Jacobi equations.

\paragraph{Boost generator bracket.}
Substituting the functional derivatives above into the canonical bracket \eqref{eq:canonical} gives
\begin{align*}
	\{H,K_j\}
	&=\int\!\!\left(
	\frac{\delta H}{\delta \rho}\frac{\delta K_j}{\delta S}
	-\frac{\delta H}{\delta S}\frac{\delta K_j}{\delta \rho}
	\right)\!dx \\[0.3em]
	&=\int\!\!\left[
	\left(\frac{|\nabla S|^2}{2m}+V-\alpha\,\frac{\Delta\sqrt\rho}{\sqrt\rho}\right)
	(t\,\partial_j\rho)
	-\!\left(-\,\nabla\!\cdot\!\!\left(\frac{\rho}{m}\nabla S\right)\right)
	(m\,x_j - t\,\partial_j S)
	\right]\!dx.
\end{align*}
Using the equations of motion \eqref{eq:continuity},
$\partial_t S = -(\frac{|\nabla S|^2}{2m}+V-\alpha\frac{\Delta\sqrt\rho}{\sqrt\rho})$
and $\partial_t\rho = -\nabla\!\cdot\!(\frac{\rho}{m}\nabla S)$, this becomes:
\[
\{H,K_j\} = \int\!\!\left[
(-\partial_t S)(t\,\partial_j\rho)
-(\partial_t \rho)(m\,x_j - t\,\partial_j S)
\right]dx
= \int\!\!\left[
-t(\partial_t S\,\partial_j\rho)
-m\,x_j(\partial_t\rho)
+t(\partial_t\rho\,\partial_j S)
\right]dx.
\]
The $t$-dependent terms are
\[
t \int (\partial_t\rho\,\partial_j S - \partial_t S\,\partial_j\rho)dx
= t\,\frac{d}{dt}\!\int\rho\,\partial_j S\,dx
= t\,\frac{d P_j}{dt}.
\]
This vanishes for a translation-invariant potential, as $\{P_j, H\}=0$. The remaining term is:
\[
\{H,K_j\} = -m\!\int x_j\,(\partial_t\rho)\,dx
= -m\!\int x_j \left(-\,\nabla\!\cdot\!\!\left(\frac{\rho}{m}\nabla S\right)\right)dx
= \int x_j\,\nabla\!\cdot\!(\rho\nabla S)\,dx.
\]
Integration by parts (vanishing boundary flux under the periodic, decaying, or
Dirichlet boundary conditions) yields:
\[
\{H,K_j\} = -\!\int \nabla(x_j)\!\cdot\!(\rho\nabla S)\,dx
= -\!\int (\nabla x_j)\!\cdot\!(\rho\nabla S)\,dx
= -\!\int \rho\,\partial_j S\,dx = -P_j.
\]

\paragraph{Central term \texorpdfstring{$\{P_i,K_j\}$}{\{P,K\}} (direct computation).}
Using
\[
\frac{\delta P_i}{\delta \rho}=\partial_i S,\qquad
\frac{\delta P_i}{\delta S}=-\,\partial_i\rho,
\qquad
\frac{\delta K_j}{\delta S}=t\,\partial_j\rho,\qquad
\frac{\delta K_j}{\delta \rho}=-t\,\partial_j S+m\,x_j,
\]
we have
\begin{align*}
	\{P_i,K_j\}
	&=\int\!\Big[
	(\partial_i S)\,(t\,\partial_j\rho)
	-\big(-\partial_i\rho\big)\,\big(-t\,\partial_j S+m\,x_j\big)
	\Big]dx\\
	&= t\!\int\!(\partial_i S\,\partial_j\rho-\partial_i\rho\,\partial_j S)\,dx
	-\;m\!\int x_j\,\partial_i\rho\,dx.
\end{align*}
The antisymmetric $t$-term is a total divergence and integrates to zero under
the periodic, decaying, or Dirichlet boundary conditions. Integrating the last
term by parts gives
\[
\{P_i,K_j\}=-\,m\,\delta_{ij}\!\int\!\rho\,dx,
\]
which reduces to $-m\,\delta_{ij}$ under $\int\rho\,dx=1$.

\paragraph{Galilean algebra.}
Thus $\{H,K\}=-P$, while $\{H,P\}=0$ from translational invariance. The
remaining brackets $\{P_i,P_j\}=0$ and
$\{K_i,P_j\}=-\,\{P_j,K_i\}=m\,\delta_{ij}\!\int\!\rho\,dx$ follow immediately,
realising the Bargmann (Galilean) algebra with central charge $m$; under
$\int\rho\,dx=1$ this is $\{K_i,P_j\}=m\,\delta_{ij}$. Hence the canonical
bracket and Fisher-regularised Hamiltonian are exactly Galilean covariant in
these settings.

\section{Appendix: Local Complexifier Rigidity}
\label{app:complexifier}

We show that among all local, pointwise, invertible, gauge-covariant maps
\[
\psi=F(\rho)\,e^{\,i\,G(S,\rho)}\quad\text{with }F>0,
\]
the transformation that linearises the reversible hydrodynamic system
\eqref{eq:continuity} into
\[
i\kappa\,\partial_t\psi
=\left(-\frac{\kappa^2}{2m}\Delta+V\right)\psi
\]
is (up to constant phase and scale)
\[
F(\rho)=c\,\sqrt{\rho},\qquad G(S,\rho)=\frac{S}{\kappa}+\text{const},
\]
with the Fisher coefficient fixed by $\alpha=\kappa^2/(2m)$. Identifying $\kappa=\hbar$ then yields the Fisher scale $\alpha=\hbar^{2}/(2m)$ used in the main text.

\paragraph{Assumptions.}
Locality means zeroth order in derivatives of $(\rho,S)$.
Gauge covariance encodes global $\mathrm{U}(1)
$ on $S$ as $S\mapsto S+\sigma$ implying $G(S+\sigma,\rho)-G(S,\rho)$ is independent of $x$.
Invertibility requires $F>0$ and $G_S\neq0$ almost everywhere.

\paragraph{Step 1. Phase-gradient matching fixes $G_S$.}
Write $R=\sqrt{\rho}$ and $v=\nabla S/m$.
From \eqref{eq:continuity} we have
\[
\partial_t\rho=-\nabla\!\cdot(\rho v),\qquad
\partial_t S=-\frac{m}{2}v^2-V+\alpha\,\frac{\Delta R}{R}.
\]
Differentiate $\psi=F(\rho)e^{iG}$:
\[
\partial_t\psi=e^{iG}\left(F'(\rho)\,\partial_t\rho+iF(\rho)\,\partial_t G\right),\qquad
\nabla\psi=e^{iG}\left(F'(\rho)\,\nabla\rho+iF(\rho)\,\nabla G\right).
\]
Linearity and locality of the target PDE forbid any explicit quadratic or higher dependence on $(\rho,S)$ beyond what is already encoded inside $\psi$ and its first derivatives.

The only vector available at first order is $\nabla S$.
Thus $\nabla G$ must be proportional to $\nabla S$ with a state-independent proportionality. Since $G$ is local and gauge-covariant, this implies
\[
G_S=\text{const}=\kappa^{-1},\qquad
G(S,\rho)=\frac{S}{\kappa}+\Gamma(\rho)+\text{const}.
\]
Gauge covariance forces $\Gamma$ to be a constant, hence $\Gamma'(\rho)=0$ and we set $\Gamma\equiv0$.

\paragraph{Step 2. Amplitude matching fixes $F'/F=1/(2\rho)$.}
Using $G_S=\kappa^{-1}$,
\[
\partial_t G=\frac{1}{\kappa}\,\partial_t S,\qquad
\nabla G=\frac{1}{\kappa}\,\nabla S.
\]
Compute $-\,\frac{\kappa^2}{2m}\Delta\psi$ using
\[
\Delta\psi=e^{iG}\Big[
F''|\nabla\rho|^2+F'\Delta\rho
+2iF'\nabla\rho\!\cdot\!\nabla G
+iF\,\Delta G
-F\,|\nabla G|^2
\Big].
\]
Collect the terms proportional to $|\nabla S|^2$.
In the target linear equation, the only occurrence of $|\nabla S|^2$ arises through $-\,\frac{\kappa^2}{2m}\Delta\psi$ acting on the $e^{iG}$ factor, which yields precisely $-\frac{1}{2m}|\nabla S|^2\,\psi$.
All other contributions proportional to $|\nabla S|^2$ must cancel.
The mixed piece $2iF'\nabla\rho\!\cdot\!\nabla G$ and the scalar piece $-F|\nabla G|^2$ combine with the time derivative term $i\kappa\,\partial_t\psi$.
Balancing the $\nabla\rho\cdot\nabla S$ dependence yields
\[
\frac{F'}{F}=\frac{1}{2\rho}\quad\Rightarrow\quad
F(\rho)=c\,\sqrt{\rho}.
\]

\paragraph{Step 3. Curvature matching fixes $\alpha=\kappa^2/(2m)$.}
With $F=cR$ and $G=S/\kappa$ we have
\[
\frac{i\kappa\,\partial_t\psi}{\psi}
= i\kappa\left(\frac{R_t}{R}+\frac{i}{\kappa}\,S_t\right)
= i\kappa\,\frac{R_t}{R}-S_t,
\]
\[
-\frac{\kappa^2}{2m}\frac{\Delta\psi}{\psi}
= -\frac{\kappa^2}{2m}\left(\frac{\Delta R}{R}
+2\,\frac{i}{\kappa}\frac{\nabla R}{R}\!\cdot\!\nabla S
+\frac{i}{\kappa}\Delta S
-\frac{1}{\kappa^2}|\nabla S|^2\right).
\]
Use $R_t=-\frac{1}{2}\,R\,\nabla\!\cdot v - v\!\cdot\nabla R$ from continuity and $S_t= -\frac{m}{2}v^2 - V + \alpha\,\frac{\Delta R}{R}$.
The imaginary parts cancel if and only if the continuity equation holds, which it does by construction.
The real parts reduce to
\[
-\,S_t - \frac{1}{2m}|\nabla S|^2 - V
+\left(\frac{\kappa^2}{2m}-\alpha\right)\frac{\Delta R}{R}=0.
\]
Therefore linearity demands
\[
\boxed{\ \alpha=\frac{\kappa^2}{2m}\ }.
\]
Hence the only axiomatically admissible local, invertible, gauge-covariant complexifier is
$\psi=c\sqrt{\rho}\,e^{iS/\kappa}$ with the Fisher scale fixed as above.
Setting $\kappa=\hbar$ reproduces \eqref{eq:LSEderived}.

Taken together with the curvature classification in the main text, this shows that there is no second local change of variables hiding in the background that could secretly linearise a different reversible theory while still respecting all of the axioms.

\paragraph{Node handling.}
All equalities are meant on the positivity set $\{\rho>0\}$ and extend in the weak sense using test functions, with the quotient $\Delta R/R$ interpreted distributionally. This is consistent with Appendix~\ref{app:jacobi} and the boundary classes in Appendix~\ref{app:boundary}.

\section{Appendix: Single Planck Constant Across Sectors}
\label{app:single-hbar}

Consider the $N$-body Hamiltonian on $\mathbb{R}^{3N}$,
\[
\mathcal{H}_N=\int\left[
\sum_{i=1}^N \frac{\rho\,|\nabla_i S|^2}{2m_i}
+ V(\{x_j\})\,\rho
+ \sum_{i=1}^N \alpha_i\,|\nabla_i\sqrt{\rho}|^2
\right]dx_1\cdots dx_N,
\]
with the canonical bracket on the single pair $(\rho,S)$ defined over configuration space.
Assume one local complexifier $\psi=\sqrt{\rho}\,e^{iS/\hbar}$ linearises the flow into
\[
i\hbar\,\partial_t\psi
=\left[-\sum_{i=1}^N\frac{\hbar^2}{2m_i}\Delta_i+V\right]\psi.
\]

\begin{proposition}[Componentwise cancellation implies a single $\hbar$]
	\label{prop:single-hbar}
	Linearity under a single, local and gauge-covariant complex structure forces
	\[
	\alpha_i=\frac{\hbar^2}{2m_i}\qquad \text{for every } i\in\{1,\dots,N\}.
	\]
\end{proposition}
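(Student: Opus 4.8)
The plan is to run the $N$-body Madelung recombination explicitly and reduce the claim to a statement about linear independence of the one-body quantum-potential terms across coordinate blocks. First I would write down the Hamilton equations generated by $\mathcal{H}_N$ through the canonical bracket on $(\rho,S)$: the configuration-space continuity law $\partial_t\rho = -\sum_i\nabla_i\!\cdot(\rho\,\nabla_i S/m_i)$ and the Hamilton-Jacobi equation $\partial_t S = -\sum_i\frac{|\nabla_i S|^2}{2m_i} - V + \sum_i\alpha_i\frac{\Delta_i R}{R}$, with $R=\sqrt{\rho}$, where the last sum collects the componentwise Euler-Lagrange contributions of the Fisher terms, each obtained exactly as in Appendix~\ref{app:fisher-proof} but now applied in the $x_i$-block.

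Next I would insert $\psi = R\,e^{iS/\hbar}$ and compute $i\hbar\,\partial_t\psi$ and $-\sum_i\frac{\hbar^2}{2m_i}\Delta_i\psi + V\psi$ term by term, mirroring the single-particle calculation of Appendix~\ref{app:complexifier} but summing over $i$. The imaginary parts match identically by the continuity equation (rewritten as $\partial_t R/R = -\sum_i m_i^{-1}(\nabla_i R\!\cdot\!\nabla_i S/R + \tfrac12\Delta_i S)$), so no constraint emerges there. Subtracting the two real parts, the kinetic pieces $\sum_i|\nabla_i S|^2/(2m_i)$ and the potential $V$ cancel, leaving the single residual identity
\[
\sum_{i=1}^N\Big(\alpha_i - \frac{\hbar^2}{2m_i}\Big)\,\frac{\Delta_i R}{R} = 0,
\]
which must hold for all admissible states if the evolution is to be exactly the linear $N$-body Schrödinger equation.

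The crux, and the one genuinely non-mechanical step, is to pass from this aggregate vanishing to componentwise vanishing of each coefficient. Here I would exploit separability: restricting to factorised admissible data $R = \prod_j R_j(x_j)$ (positive, normalisable, with $\sqrt{\rho}\in H^1$, hence admissible), the $i$-th term collapses to $\Delta_i R_i/R_i$, a function of the block coordinate $x_i$ alone. Because the $R_j$ can be varied independently and each $\Delta_i R_i/R_i$ is non-constant for generic one-body profiles (e.g.\ a Gaussian yields a quadratic), the functions $\{\Delta_i R/R\}_{i=1}^N$ are linearly independent over the admissible class, and the identity forces $\alpha_i - \hbar^2/(2m_i) = 0$ for every $i$.

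Finally, since the hypothesis fixes a single phase scale $\hbar$ in the one complexifier $\psi = \sqrt{\rho}\,e^{iS/\hbar}$, the componentwise relations $\alpha_i = \hbar^2/(2m_i)$ share that common $\hbar$, proving universality. I would close by recording that a coordinate-dependent phase scale $\kappa_i$ is not an available alternative: it is already excluded by the local-complexifier rigidity of Proposition~\ref{prop:complexifier-rigidity}, which forces $G_S$ to a single constant, and it additionally fails the mixed-mass superposition test, so the single-$\hbar$ conclusion is forced within the admissible class. All equalities are read on $\{\rho>0\}$ and in the weak sense at nodal sets, as elsewhere in the paper.
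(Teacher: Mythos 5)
Your proposal is correct and follows essentially the same route as the paper: componentwise Madelung recombination yielding the aggregate residual $\sum_i(\alpha_i-\hbar^2/(2m_i))\,\Delta_i\sqrt{\rho}/\sqrt{\rho}=0$, followed by separability of coordinate blocks (factorised test states) to force each coefficient to vanish, with particle-dependent $\kappa_i$ excluded by complexifier rigidity and mixed-mass superpositions. If anything, your linear-independence argument via factorised data with non-constant $\Delta_i R_i/R_i$ makes explicit a step the paper's appendix merely asserts ("each coefficient must vanish separately"), and it matches the factorised-data sketch given for Proposition~\ref{prop:hbarUniversality} in the main text.
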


	Repeating the single-particle calculation componentwise, the real part of the transformed equation yields
	\[
	-\,S_t - \sum_{i=1}^N \frac{|\nabla_i S|^2}{2m_i} - V
	+ \sum_{i=1}^N \left(\alpha_i - \frac{\hbar^2}{2m_i}\right)\frac{\Delta_i\sqrt{\rho}}{\sqrt{\rho}}=0.
	\]
	Since the derivatives $\nabla_i$ act on independent coordinates and the map uses a single $\hbar$, each coefficient multiplying $\Delta_i\sqrt{\rho}/\sqrt{\rho}$ must vanish separately to avoid residual nonlinearities. Hence $\alpha_i=\hbar^2/(2m_i)$ for all $i$.

Locality and separability of coordinate directions are essential. Any attempt to repair a mismatch by particle-dependent rephasings would necessarily destroy gauge covariance and the assumption of a single local complex structure on configuration space.
Exchange symmetry is imposed at the level of the state space and does not affect the argument.

\section{Appendix: Entropy Production and the Reversible Corner}
\label{app:entropy-rigorous}

For completeness we reproduce the entropy calculation in full detail.
Let the continuity law include a diffusion term
\[
\partial_t\rho \;=\; -\,\nabla\!\cdot(\rho v)\;+\;D\,\Delta\rho,
\qquad
v=\nabla S/m,\quad D\in\mathbb{R}.
\]
We take the Shannon entropy to be
\[
S_{\mathrm{Sh}}[\rho]
\;:=\;
-\int_\Omega \rho\ln\rho\,dx,
\]
which is minus the Boltzmann $H$-functional.
Then
\begin{align*}
	\frac{dS_{\mathrm{Sh}}}{dt}
	&= -\int_\Omega (1+\ln\rho)\,\partial_t\rho\,dx\\
	&= -\int_\Omega (1+\ln\rho)\,\big(-\nabla\!\cdot(\rho v)+D\Delta\rho\big)\,dx\\
	&= \int_\Omega (1+\ln\rho)\,\nabla\!\cdot(\rho v)\,dx
	\;-\;D\int_\Omega (1+\ln\rho)\,\Delta\rho\,dx\\
	&= -\int_\Omega \rho\,v\cdot\nabla(\ln\rho)\,dx
	\;+\;D\int_\Omega \nabla(\ln\rho)\cdot\nabla\rho\,dx\\
	&= -\int_\Omega v\cdot\nabla\rho\,dx
	\;+\;D\int_\Omega \frac{|\nabla\rho|^2}{\rho}\,dx\\
	&= \int_\Omega \rho\,\nabla\!\cdot v\,dx
	\;+\;D\int_\Omega \frac{|\nabla\rho|^2}{\rho}\,dx.
\end{align*}
Under time reversal $t\mapsto -t$, $v\mapsto -v$,
the first term flips sign while the second does not.
Hence exact time-reversal invariance of $S_{\mathrm{Sh}}$ requires $D=0$,
isolating the reversible corner.
All boundary integrals vanish under the classes of
Appendix~\ref{app:boundary}, ensuring mathematical closure. In particular, for $D\neq0$ the sign-definite Fisher term cannot be cancelled by any choice of velocity field $v$, so $dS_{\mathrm{Sh}}/dt$ fails to reverse under $t\mapsto -t$.

\section{Appendix: Operational Falsifiers}
\label{app:ops-falsifiers}

\paragraph{Diagnostics.}
Given $\psi$ evolving by \eqref{eq:LSEderived}, define
\[
\rho=|\psi|^2,\quad
j=\frac{\hbar}{m}\,\Im(\psi^*\nabla\psi),\quad
v=\frac{j}{\rho},\quad
Q_\alpha=-\,\alpha\,\frac{\Delta\sqrt{\rho}}{\sqrt{\rho}},\quad
S_t=-\,\Re\!\left(\frac{H\psi}{\psi}\right).
\]
Residuals
\[
\mathcal{R}_{\mathrm{cont}}
=\frac{\langle\,|\rho_t+\nabla\!\cdot(\rho v)|^2\,\rangle}
{\langle\,|\rho_t|^2+|\nabla\!\cdot(\rho v)|^2\,\rangle},\qquad
\mathcal{R}_{\mathrm{HJ}}(\alpha)
=\frac{\left\langle\,\left|S_t+\frac{|\nabla S|^2}{2m}+V+Q_\alpha\right|^2\right\rangle}
{\left\langle\,|S_t|^2+\left|\frac{|\nabla S|^2}{2m}+V\right|^2+|Q_\alpha|^2\right\rangle}
\]
are evaluated on $\{\rho>\varepsilon\}$ with a smooth mask to avoid nodal artefacts.

\paragraph{Sign convention.}
With $i\hbar\,\partial_t\psi=H\psi$ and $H=H^\dagger$,
\[
\partial_t\rho
=\psi_t^*\psi+\psi^*\psi_t
=\frac{2}{\hbar}\,\Im(\psi^*H\psi),
\]
which fixes the positive sign convention for $\partial_t\rho$ used throughout the body text and here.

\paragraph{Protocol.}
Periodic domain, split-step Fourier propagation for generating $\psi(t)$, fourth-order finite differences for diagnostics, and fixed grids across scans of $\alpha$.
Under refinement, $\mathcal{R}_{\mathrm{cont}}$ sits at numerical floor for all $\alpha$, while $\mathcal{R}_{\mathrm{HJ}}(\alpha)$ attains a minimum at
\[
\boxed{\ \alpha=\alpha_\star=\frac{\hbar^2}{2m}\ }
\]
independently of Galilean boosts $\psi\mapsto e^{iv_0\cdot x}\psi$.
Harmonic oscillator ground state checks give
$\|V+Q_\alpha-E_0\|_{L^2(\rho)}$ minimised at $\alpha_\star$.

\textbf{Implemented in Tests~1-2 and~5 (code archive, Appendix~\ref{app:code-repo}).}

\paragraph{Reversible corner.}
For Doebner-Goldin diffusion $\rho_t=-\nabla\!\cdot(\rho v)+D\Delta\rho$,
\[
\frac{d}{dt}\int \rho\ln\rho\,dx
= D\int \frac{|\nabla\rho|^2}{\rho}\,dx \ge 0,
\]
so time-reversal invariance holds only for $D=0$.
This agrees with the minimum of $\mathcal{R}_{\mathrm{HJ}}$ at $\alpha_\star$.

	\section{Appendix: Variational Consistency Check}
	\label{app:variational}
	Consider the total energy functional
	\[
	E[\rho,S]=\int\!\!\left(\frac{\rho|\nabla S|^2}{2m}+V\rho+\alpha|\nabla\sqrt\rho|^2\right)dx.
	\]
	Differentiation gives
	\[
	\frac{dE}{dt}
	=\int\left(\frac{\delta E}{\delta\rho}\dot\rho+\frac{\delta E}{\delta S}\dot S\right)dx
	=\int\left(\frac{\delta E}{\delta\rho}\{\rho,H\}+\frac{\delta E}{\delta S}\{S,H\}\right)dx
	=\{E,H\}=0,
	\]
	so the Hamiltonian is conserved exactly.
	
	If $V=V(x,t)$ carries explicit time dependence, then $\dot H=\{H,H\}+\int \rho\,\partial_t V\,dx=\int \rho\,\partial_t V\,dx$, as usual. Throughout we assume $V$ time independent unless stated otherwise.
	
	\newpage
	
	\section*{Acknowledgements}
	
We thank colleagues and acquaintances for valuable discussions, feedback and cross-checks.
 No external funding was received. All work was produced independently without affiliation. Author's email is contact@nomogenetics.com.
	
To Eun-Seong, may every August be like the last.

\end{document}